\journal{Journal of Symbolic Computation}
\newtheorem{lemma}{Lemma}
\newtheorem{theorem}{Theorem}
\newtheorem{corollary}{Corollary}
\newtheorem{definition}{Definition}
\newtheorem{proposition}{Proposition}
\newtheorem{remark}{Remark}
\newtheorem{example}{Example}
\newenvironment{proof}{\noindent{\bf Proof.}}{\qed}
\newcommand{\ie}{{i.e., }}
\newcommand{\st}{\mbox{ such that }}
\newcommand{\pisiSE}{$\Pi\Sigma^*$}
\newcommand{\sigmaSE}{$\Sigma^*$}
\newcommand{\piE}{$\Pi$}
\newcommand{\bb}[1]{{\mathbb{#1}}}			
\newcommand{\paren}[1]{{\left({#1}\right)}}
\newcommand{\Ker}{{\rm Ker}\,}
\newcommand{\lc}{\operatorname{lc}}
\newcommand{\Const}{\mbox{Const}}
\newcommand{\spr}{\mbox{Spr}}
\newcommand{\dis}{\mbox{Dis}}
\newcommand{\ordSymbol}{\omega}
\let\set\mathbb
\newcommand{\semiper}[2]{{#1}^{{#2}^\ast}}
\newcommand{\dfact}[3]{{#1}^{{#3},{#2}}}
\newcommand{\coprime}{\,\bot\,}
\newcommand{\divides}{\,|\,}
\newcommand{\fallingFac}[2]{{#1}^{\underline{#2}}}
\newcommand{\MainRed}{\sf}
\definecolor{blaugrau}{rgb}{0.796887, 0.789075, 0.871107}
\newcounter{mmacnt}
\def\restartmma{\setcounter{mmacnt}{0}}
\newenvironment{mma}{
	\par
	\catcode`|=\active
	\parskip=6pt\parindent=0pt 
	\small
	\def\In##1\\{%
		\def\linebreak{\hfill\break\null\qquad}%
		\refstepcounter{mmacnt}
		\hangindent=2.5em\hangafter=0
		\leavevmode
		\llap{\tiny\sffamily In[\arabic{mmacnt}]:=\kern.5em}%
		\mathversion{bold}\scriptsize$\tt\bf\displaystyle##1$\normalsize
		\mathversion{normal}\par
	}%
	\def\Print##1\\{%
		\def\linebreak{\hfill\break}%
		\hangindent=2.5em\hangafter=0
		\leavevmode\scriptsize ##1\par}%
	\def\Out##1\\{%
		\vspace*{-0.5cm}\def\linebreak{$\hfill\break\null\hfill$}%
		\kern\abovedisplayskip\par
		\hangindent=2.5em\hangafter=0
		\leavevmode
		\llap{\tiny\sffamily Out[\arabic{mmacnt}]=\kern.5em}
		\scriptsize$\displaystyle\tt##1$\normalsize\hfill\null\par
		\kern\belowdisplayskip\vspace*{-0.3cm}
	}%
	\def\Warning##1##2\\{%
		\def\linebreak{\hfill\break}%
		\hangindent=2.5em\hangafter=0
		\leavevmode
		{\scriptsize##1 : ##2}\par}%
}{%
	\par\smallskip
}
\newcommand{\LoadP}[1]{\fcolorbox{black}{blaugrau}{
\begin{minipage}[t]{10.7cm}
			\scriptsize\normalfont #1
\end{minipage}}}
\newcommand{\myIn}[1]{{\sffamily In[#1]}}
\def\MLabel#1{{\refstepcounter{mmacnt}\label{#1}}\addtocounter{mmacnt}{-1}}
\newcommand{\MText}[1]{\textbf{\ttfamily#1}}
\begin{document}

\begin{frontmatter}

\title{On Rational and Hypergeometric Solutions of Linear Ordinary Difference Equations
in $\Pi\Sigma^*$-field extensions}

\author[1]{Sergei A.~Abramov}
\ead{sergeyabramov@mail.ru}
\author[2]{Manuel Bronstein$\dagger$}
\author[3,4]{Marko Petkov\v sek\corref{cor1}}
\ead{marko.petkovsek@fmf.uni-lj.si}
\author[5]{Carsten Schneider}
\ead{Carsten.Schneider@risc.jku.at}

\address[1]{Dorodnicyn Computing Center,  Federal Research Center ``Computer Science and Control'' of the Russian Academy of Sciences, Moscow, Russia}
\address[2]{Passed away on June 6, 2005; leading researcher in the initial phase of the project}
\address[3]{Faculty of Mathematics and Physics, University of Ljubljana, Ljubljana, Slovenia}
\address[4]{Institute of Mathematics, Physics and Mechanics, Ljubljana, Slovenia}
\address[5]{Johannes Kepler University Linz, Research Institute for Symbolic Computation (RISC), Linz, Austria}

\cortext[cor1]{Corresponding author}

\begin{abstract}
We present a complete algorithm that computes all hypergeometric solutions of homogeneous linear difference equations and rational solutions of parameterized linear difference equations in the setting of \pisiSE-fields. More generally, we provide a flexible framework for a big class of difference fields that are built by a tower of \pisiSE-field extensions over a difference field that enjoys certain algorithmic properties. As a consequence one can compute all solutions in terms of indefinite nested sums and products that arise within the components of a parameterized linear difference equation, and one can find all hypergeometric solutions of a homogeneous linear difference equation that are defined over the arising sums and products.
\end{abstract}

\begin{keyword}
difference fields \sep rational solutions \sep hypergeometric solutions

\MSC[2020] 12H10 \sep 39A06 \sep 68W30 
\end{keyword}

\end{frontmatter}

\section{Introduction}\label{Sec:Introduction}

A key problem in symbolic summation is to solve linear ordinary difference equations
\begin{equation}\label{Equ:InhomRec}
a_0(\nu)F(\nu)+a_1(\nu)F(\nu+1)+\dots+a_n(\nu)F(\nu+n)=b(\nu)
\end{equation}
of order\footnote{$\set N$ denotes the set of nonnegative integers.} $n\in\set N$ where the coefficients $a_0,\dots,a_n$ and the inhomogeneous part $b$ are given in a certain domain $k$ of special functions and the solutions $F$ 
belong to the same domain $k$ of functions or an appropriate extension of it.
 
The algorithmic foundation was laid by Abramov's algorithms \cite{Abramov:71,Abramov:89a,Abramov:89b} and their variants~\cite{vanHoeij:98} to find rational solutions $F$ in a rational function field $k= C(\nu)$ of homogeneous linear difference equations ($b=0$) with rational coefficients $a_0,\dots,a_n\in k$. Many variants and extensions, see e.g.~\cite{Abramov:95a,BP:99}, are available
to find also $q$-rational solutions in $C(q^{\nu})$, multibasic solutions in $C(q_1^{\nu},\dots,q_n^{\nu})$ or their mixed versions $C(\nu,q_1^{\nu},\dots,q_n^{\nu})$. Another milestone was Petkov{\v{s}}ek's algorithm Hyper~\cite{Petkovsek1992} and its improvement~\cite{vanHoeij:99} to find all hypergeometric solutions of~\eqref{Equ:InhomRec}, i.e., hypergeometric products of the form $\prod_{r=l}^{\nu} h(r)$ with $l\in\set N$ and a rational function $h(r)\in C(r)$. 
More generally, algorithms have been derived for $q$-hypergeometric solutions~\cite{APP:98}, their multibasic mixed versions~\cite{BP:99}, and interlaced hypergeometric solutions~\cite{HornKS12}. Exploiting these algorithms one can search not only for hypergeometric solutions and their variants, but also for d'Alembertian solutions~\cite{Abramov:94,Abramov:96}, i.e., solutions
in terms of indefinite nested sums defined over hypergeometric products. In addition, incorporating also the interlacing of sequences, one obtains algorithms to find all Liouvillian solutions~\cite{Singer:99,Petkov:2013}. 

Recently, several new approaches to finding explicit solutions of linear functional equations have been proposed. For second-order differential equations with rational-function coefficients, in~\cite{vanHoeij:17} heuristic (but very effective) algorithms are presented that search for solutions of the form 
\[
e^{\int\! r\,dx} r_0\cdot {_2}F_1(a_1,a_2;b_1;f) + r_1\cdot {_2}F_1'(a_1,a_2;b_1;f)
\]
where $r,r_0,r_1,f \in \overline{{\mathbb Q}(x)}$. For linear difference equations, algorithms are given in~\cite{Petkov:18} that search for solutions involving definite sums of certain products of binomial coefficients. In this article we will follow another branch of generalization whose foundation was led by Karr's seminal difference field approach~\cite{Karr81,Karr85}. With the underlying machinery one can solve first-order linear difference equations~\eqref{Equ:InhomRec} with $n=1$ where the coefficients $a_0$ and $a_1$ and the inhomogeneous part $b$ can be represented in terms of indefinite nested sums and products. More precisely, the components of the recurrence are represented within a given \pisiSE-field $k$ and the solutions are searched within this field $k$ or a given extension of it.

\smallskip

\noindent{\it Definition.} Let $k$ be a field of characteristic
$0$ and let $\sigma$ be a field automorphism of $k$. Then
$(k,\sigma)$ is called a difference field; the constant field of
$k$ is defined by $C=\{c\in k:\sigma(c)=c\}$. A difference
field $(k,\sigma)$ with constant field $C$ is called a
\pisiSE-field if $k=C(t_1,\dots,t_e)$ where for all $1\leq i\leq
e$ each $k_i=C(t_1,\dots,t_i)$ is a transcendental field
extension of $k_{i-1}=C(t_1,\dots,t_{i-1})$ (we set $k_0=C$)
and $\sigma$ has the property that $\sigma(t_i)=a\,t_i$ ($t_i$ is a \piE-monomial) or
$\sigma(t_i)=t_i+a$ ($t_i$ is a \sigmaSE-monomial) for some $a\in k_{i-1}$.

\smallskip

We remark that Karr's algorithm and refinements of it~\cite{Schneider:08,Schneider:15,DR1} enable one to carry out Zeilberger's creative telescoping paradigm~\cite{Zeilberger:91} in order to compute linear recurrence relations of definite sums whose summands can be represented in \pisiSE-fields.

 The first steps to generalize Karr's algorithm to $n$th-order linear difference equations~\eqref{Equ:InhomRec}  have been accomplished already in earlier work. Using Bronstein's generalization~\cite{Bronstein2000} of Abramov's denominator bounding~\cite{Abramov:89a,Abramov:89b}
and Schneider's contributions~\cite{Schneider:01,Schneider:04c,Schneider:05a,Schneider:05b} one obtains a general method to solve linear difference equations of arbitrary order in \pisiSE-fields.
Restricting to the case that the coefficients $a_0,\dots,a_n$ are from $C(\nu)$ (or, e.g., from the mixed multibasic case) and the inhomogeneous part is from a \pisiSE-field $k$, one obtains a complete algorithm to find all d'Alembertian solutions of~\eqref{Equ:InhomRec}. 
This machinery, implemented within the package \texttt{Sigma}~\cite{Schneider:07a}, enables one to produce (inhomogeneous) recurrences for big classes of definite multi-sums and to solve them in terms of d'Alembertian solutions efficiently. Typical examples can be found, e.g., in recent calculations within particle physics~\cite{QCD}. However, if also the coefficients $a_0,\dots,a_n$ are from a general \pisiSE-field, the available method is incomplete: one has to guess certain denominator factors or has to predict manually degree bounds of polynomial solutions. Furthermore, no algorithms for finding hypergeometric solutions over general \pisiSE-fields have been available so far. 

It is remarkable that the continuous analog of the difference field approach has been fully solved for three decades. Starting with Risch's algorithm~\cite{Risch:69} for indefinite integration, which can be considered as the differential analogue of Karr's summation algorithm, many important contributions, like~\cite{Singer:85}, have been accomplished; see also~\cite{Bron:97} and references therein. This finally led in~\cite{Singer:91} to a complete algorithm that can find all Liouvillian solutions (the differential version of Liouvillian sequence solutions introduced above) of linear differential equations whose coefficients are given in terms of Liouvillian extensions. 

Following Singer's pioneering work we will push forward in this article the discrete version initiated by Karr's work. More precisely, given a homogeneous linear difference equations where the coefficients are represented in a \pisiSE-field $k$, we will elaborate a general algorithm that finds all solutions in terms of hypergeometric solutions $\prod_{i=l}^{\nu}h(i)$ with $h\in k^*$. As a consequence, our framework not only enables one to find all hypergeometric solutions~\cite{Petkovsek1992}, $q$-hypergeometric solutions~\cite{APP:98} or multibasic solutions and their mixed version~\cite{BP:99}, but also allows the multiplicand of the hypergeometric product to be built in terms of indefinite nested sums and products arising in the coefficients $a_i$ in~\eqref{Equ:InhomRec}.

\begin{example}\label{Exp:RecExp}
	\rm
(I) Given the linear recurrence
	\begin{multline*}
	\big(
	1
	+H_\nu
	+\nu H_\nu
	\big)^2 \big(
	3
	+2 \nu
	+2 H_\nu
	+3 \nu H_\nu
	+\nu^2 H_\nu
	\big)^2 F(\nu)\\
	-(1+\nu) (3+2 \nu) H_\nu\big(
	3
	+2 \nu
	+2 H_\nu
	+3 \nu H_\nu
	+\nu^2 H_\nu
	\big)^2 F(\nu+1)\\ 
	+(1+\nu)^2 (2+\nu)^3 H_\nu\big(
	1
	+H_\nu
	+\nu H_\nu
	\big)  F(\nu+2)
	=0
	\end{multline*}
	in $F(\nu)$ of order $2$ with coefficients in terms of the harmonic numbers $H_{\nu}=\sum_{r=1}^{\nu}\frac1r$, our algorithmic machinery presented below produces the hypergeometric solutions 
	\begin{equation}\label{Equ:HnSol}
	H_{\nu}\,\prod_{l=1}^{\nu} H_l\quad\text{ and }\quad H_{\nu}^2\,\prod_{l=1}^{\nu} H_l.
	\end{equation}

	\noindent(II) Similarly, given the linear recurrence 
	\begin{multline*}
	-2 (1+\nu)^2 (2+\nu) \nu!^2\big(
	7
	+6 \nu
	+\nu^2
	+3 \nu!
	+5 \nu \nu!
	+2 \nu^2 \nu!
	\big) F(\nu)\\
	+(1+\nu) (2+\nu) \nu!\big(
	16
	+16 \nu
	+3 \nu^2
	+7 \nu!
	+12 \nu \nu!
	+4 \nu^2 \nu!
	\big) F(\nu+1)\\
	-\big(
	2
	+4 \nu
	+\nu^2
	+\nu!
	+2 \nu\nu! 
	\big) F(\nu+2)=0
	\end{multline*}
	in $F(\nu)$ of order $2$ with coefficients in terms of the factorials $\nu!$, our new algorithm computes the hypergeometric solutions 
	\begin{equation}\label{Equ:n!Sol}
	\prod_{l=1}^{\nu} l!\quad\text{ and }\quad (\nu!+\nu^2)2^{\nu}\prod_{l=1}^{\nu} l!.
	\end{equation}
\end{example}

Similar to the special case $k=C(\nu)$ in~\cite{Petkovsek1992} (and in Singer's differential version~\cite{Singer:91}) the underlying algorithm requires as subtask to find polynomial and rational solutions of parameterized\footnote{In the parameterized version of~\eqref{Equ:InhomRec} the inhomogeneous part is of the form $b(\nu)=c_1\,b_1(\nu)+\dots+c_m\,b_m(\nu)$ for given $b_1,\dots,b_m\in k$ and one searches not only for $F(\nu)$ but also for all constants $c_1,\dots,c_{m}\in C\subseteq k$ such that~\eqref{Equ:InhomRec} holds.} linear difference equations with coefficients in a \pisiSE-field.
Therefore the second main contribution of this work is the completion of the available toolbox~\cite{Bronstein2000,Schneider:01,Schneider:04c,Schneider:05a,Schneider:05b} to a complete algorithm that finds all solutions of parameterized linear difference equations in a given \pisiSE-field $k$.
More precisely, we will derive complete algorithms that solve the denominator and degree bounding problems in a given \pisiSE-field $k$. To our surprise we succeeded in this task only by computing hypergeometric solutions (or at least all hypergeometric candidates) in a subfield of $k$ (which is again a \pisiSE-field). Summarizing, the tasks of solving parameterized linear difference equations and finding hypergeometric solutions are intrinsically tied up to each other: solving one problem requires the solution of the other one.

All the algorithms presented in this article have been implemented within the summation package~\texttt{Sigma}~\cite{Schneider:07a} exploiting the already implemented toolboxes from~\cite{Karr81,Karr85,
Schneider:01,Bronstein2000,AbraBron2000,Schneider:04c,Schneider:05a,Schneider:05b,Schneider05c,Schneider:08,Petkov:10,Schneider:15,DR1,DR3,OS:18,Schneider:20}. A demonstration how the recurrences from Example~\ref{Exp:RecExp} can be solved with ~\texttt{Sigma} is given in Example~\ref{Exp:MMAHyper} below.

The outline of the article is as follows. In Section~\ref{Sec:DF} we will rephrase the problem to find hypergeometric solutions in the setting of difference rings and fields. In Section~\ref{Sec:GeneralNF} we will introduce a normal form of rational functions in difference fields.  Using this representation we will present in Section~\ref{Sec:PiSiNF} a general strategy to compute hypergeometric solutions  over a general difference field $k(t)$ where $t$ is a \pisiSE-monomial ($t$ is an indeterminate that represents one indefinite sum or product on top, see the above definition). As it turns out, this yields a general reduction strategy from $k(t)$ to $k$ under the assumption that one can find all solutions of parameterized linear difference equations in the field $k(t)$. In Section~\ref{Sec:RationalSolutions} we will provide a general method how such rational solutions can be calculated in $k(t)$ if one can solve the corresponding problem in $k$.  Finally, we will combine all these reduction strategies in Section~\ref{Sec:Framework} in order to obtain a general framework that enables one to find all hypergeometric solutions and rational solutions of linear difference equations in a difference field $k=K(t_1)\dots(t_e)$ built by a tower of \pisiSE-monomials $t_i$. Here the base difference field $K$ must satisfy certain algorithmic properties. In particular, we obtain a complete algorithm if $k$ is a \pisiSE-field, i.e., if $K=C$ is the field of constants. Section~\ref{Sec.Conclusion} concludes the article.

\textit{A historical remark:} 
This investigation was initiated in 2003 at INRIA Sophia Antipolis within the framewok of \emph{Project Caf\'e} by its scientific director, Manuel Bronstein, together with Sergei A.~Abramov and Marko Petkov\v sek, with the intent to provide a difference version of Michael Singer's work presented in~\cite{Singer:91}. In the next two years all of the relevant subproblems were solved except one: obtaining a degree bound for polynomial solutions in difference-field extensions. In May of 2005, Manuel visited Carsten Schneider at RISC (JKU Linz) who was working independently on a version of the same algorithm in the setting of $\Pi\Sigma$-fields. Together they established the desired bound in $\Pi$-extensions, and Carsten joined the team. Unfortunately, on June 6, 2005, Manuel passed away at the age of 41 (see~\cite{Abramov06} for his biography). 
For various reasons, not the least of which was that it was not easy, the problem of obtaining a degree bound in $\Sigma$-extensions has eluded the remaining team for many years. But recently Carsten has  solved it completely, and with this paper we (S.A.A., M.P., and C.S.) repay our debt to Manuel, and pay homage to him.

\section{Difference rings and operators}\label{Sec:DF}

As motivated in the introduction (see Example~\ref{Exp:RecExp}) we will develop algorithms to find hypergeometric solutions in terms of indefinite nested sums and products. Here the arising objects are represented in a ring\footnote{All rings and fields in this paper have characteristic zero and by $R^*$ we denote the set of units of $R$.} $R$ and the shift-operator is modeled by a ring automorphism $\sigma:R\to R$. The tuple $(R,\sigma)$ is also called a difference ring. If $R$ is a field (often also denoted by $k$), $\sigma$ turns to a field automorphism and the tuple $(R,\sigma)$ is also called a difference field.  

We will use the following notations that are relevant in Karr's work~\cite{Karr81} and that have been refined and explored further in~\cite{Bronstein2000}.

\begin{definition}
	Let $(R,\sigma)$ be a difference ring. 
	\begin{itemize}
		\item $a\in R$ is called \emph{constant} if $\sigma(a)=a$. The \emph{set of constants} is denoted by $\Const_{\sigma}(R)$.
		\item $a\in R$ is called \emph{semi-invariant} if there is a $u\in R^*$ with $\sigma(a)=u\,a$. The \emph{set of semi-invariants} of $R$ is denoted by $R^{\sigma}$.
		\item $a\in R$ is called \emph{semi-periodic} if there are $n>0$ and $u\in R^*$ with $\sigma^n(a)=u^n\,a$. The \emph{set of semi-periodic elements} of $R$ is denoted by $\semiper{R}{\sigma}$.
	\end{itemize}
\end{definition}
By definition we have $\Const_{\sigma}(R)\subseteq R^{\sigma}\subseteq \semiper{R}{\sigma}$. Note further that $\Const_{\sigma}(R)$ is a subring of $R$. In particular, if $R$ is a field, $\Const_{\sigma}(R)$ is a subfield of $R$ which is also called the \emph{constant field} of $R$. Since all rings and fields have characteristic $0$, the rational numbers $\set Q$ are contained in $\Const_{\sigma}(R)$.

\noindent{\bf Notation:} For any $a$ in a difference ring $(R,\sigma)$
and any integer $n \ge 0$, we write 
\[
a_{\sigma,n} \ =\ \sum_{i=0}^{n-1} \sigma^i a, \qquad
\dfact{a}{n}{\sigma} \ =\  \prod_{i=0}^{n-1} \sigma^i a, \qquad
\dfact{a}{n}{\sigma,\sigma} \ =\  \prod_{i=0}^{n-1} \dfact{a}{i}{\sigma},
\]
where $a_{\sigma,0} = 0$ and $\dfact{a}{0}{\sigma} = \dfact{a}{0}{\sigma,\sigma} = \dfact{a}{1}{\sigma,\sigma} = 1$.

\begin{lemma}
\label{lm:ssi}
Let $(k,\sigma)$ be a difference field. Then
\begin{eqnarray}
\dfact{\left(\frac{\sigma\,a}{a}\right)}{n}{\sigma} &=&\ \frac{\sigma\, \dfact{a}{n}{\sigma}}{\dfact{a}{n}{\sigma}}
\ =\ \frac{\sigma^n a}{a}, \label{lm:sn} \\
\dfact{\left(\sigma^i a\right)}{n-i}{\sigma} &=&\ \frac{\dfact{a}{n}{\sigma}}{\dfact{a}{i}{\sigma}}, \label{lm:snsi} \\
\left(\dfact{a}{i}{\sigma}\right)^{\sigma, n-i} 
 &=&\ \frac{\dfact{a}{n}{\sigma,\sigma}}{\dfact{a}{i}{\sigma,\sigma}\, \dfact{a}{n-i}{\sigma,\sigma}} \label{lm:ssnssi} 
\end{eqnarray}
for all $a \in k^*$ and all integers $n \ge i \ge 0$.
\end{lemma}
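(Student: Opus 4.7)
The plan is to observe that all three identities are essentially telescoping-product identities, and that (3) reduces to (2), so the real work is (2) (together with (1) which is immediate).

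First I would dispose of identity (\ref{lm:snsi}). By definition and re-indexing,
\[
\dfact{(\sigma^i a)}{n-i}{\sigma} \;=\; \prod_{j=0}^{n-i-1} \sigma^j(\sigma^i a) \;=\; \prod_{j=0}^{n-i-1} \sigma^{j+i} a \;=\; \prod_{k=i}^{n-1}\sigma^k a \;=\; \frac{\prod_{k=0}^{n-1}\sigma^k a}{\prod_{k=0}^{i-1}\sigma^k a} \;=\; \frac{\dfact{a}{n}{\sigma}}{\dfact{a}{i}{\sigma}},
\]
which is legal because $a \in k^*$ ensures each factor $\sigma^k a$ is a unit. (The edge cases $i=0$ and $i=n$ are handled by the conventions $\dfact{a}{0}{\sigma} = 1$ and the empty-product convention.)

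Next I would handle identity (\ref{lm:sn}). Since $\sigma$ is a field automorphism, $\sigma^j(\sigma a / a) = \sigma^{j+1}a / \sigma^j a$, so
\[
\dfact{\!\Bigl(\tfrac{\sigma a}{a}\Bigr)}{n}{\sigma} \;=\; \prod_{j=0}^{n-1}\frac{\sigma^{j+1}a}{\sigma^j a}\;=\;\frac{\sigma^n a}{a}
\]
by telescoping. For the other equality, I would note that $\sigma$ distributes over the finite product defining $\dfact{a}{n}{\sigma}$, so $\sigma\,\dfact{a}{n}{\sigma} = \prod_{j=0}^{n-1}\sigma^{j+1}a$, and the quotient with $\dfact{a}{n}{\sigma} = \prod_{j=0}^{n-1}\sigma^j a$ again telescopes to $\sigma^n a / a$. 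Alternatively, both equalities follow from (\ref{lm:snsi}) applied with $i=1$ and $n$ replaced by $n+1$.

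Finally, identity (\ref{lm:ssnssi}) is where the bookkeeping could get unpleasant if attacked head-on, so I would derive it from (\ref{lm:snsi}) to avoid a double-product re-indexing argument. Applying $\sigma^j$ to $\dfact{a}{i}{\sigma}$ gives $\sigma^j\,\dfact{a}{i}{\sigma} = \dfact{(\sigma^j a)}{i}{\sigma}$, and by (\ref{lm:snsi}) this equals $\dfact{a}{j+i}{\sigma} / \dfact{a}{j}{\sigma}$. Therefore
\[
\bigl(\dfact{a}{i}{\sigma}\bigr)^{\sigma,n-i} \;=\; \prod_{j=0}^{n-i-1}\sigma^j\,\dfact{a}{i}{\sigma} \;=\; \prod_{j=0}^{n-i-1}\frac{\dfact{a}{j+i}{\sigma}}{\dfact{a}{j}{\sigma}} \;=\; \frac{\prod_{m=i}^{n-1}\dfact{a}{m}{\sigma}}{\prod_{j=0}^{n-i-1}\dfact{a}{j}{\sigma}}.
\]
The denominator is exactly $\dfact{a}{n-i}{\sigma,\sigma}$ by definition, and the numerator is $\dfact{a}{n}{\sigma,\sigma}/\dfact{a}{i}{\sigma,\sigma}$, yielding the claim.

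The only subtle point is ensuring that all the inverses appearing in these quotients are valid, which is guaranteed by the hypothesis $a\in k^*$ (each $\sigma^k a$ lies in $k^*$ because $\sigma$ is a field automorphism, and hence so do all telescoping products $\dfact{a}{m}{\sigma}$). I do not expect any genuine obstacle; the derivation of (\ref{lm:ssnssi}) from (\ref{lm:snsi}) is what keeps the argument clean.
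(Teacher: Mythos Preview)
Your argument is correct; the paper's own proof is the single phrase ``By induction on $n$'', which is just the formal packaging of the same telescoping you spell out explicitly. Your derivation of (\ref{lm:ssnssi}) from (\ref{lm:snsi}) is a nice touch that makes the bookkeeping transparent, but there is no substantive difference in approach.
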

\begin{proof}
By induction on $n$.
\end{proof}

For any $p$ and $q$ in a polynomial ring $R[x]$ over an integral
domain $R$, we write $p \coprime q$ to say that $p$ and $q$ are
coprime, and we denote the leading coefficient of $p$ by $\lc(p)$.

Let $(k,\sigma)$ be a difference field and $E$ be transcendental over $k$. The
\emph{difference operator ring over k} denoted by $k[E;\sigma]$ is a polynomial ring $k[E]$
with the noncommutative multiplication given by $E\,h=\sigma(h)\,E$ for any $h\in k$.
In Lemma~\ref{lm:riccati} below we will use the fact that $k[E;\sigma]$ is a right Euclidean domain. In addition, the action of the difference operator $L=a_0\,E^0+a_1\,E^1+\dots+a_m\,E^m\in k[E;\sigma]$ on an element $b\in k$ is denoted by
$$L(b)=a_0\,b+a_1\,\sigma(b)+\dots+a_m\,\sigma^m(b).$$
If $L\neq0$, we also define $\deg(L)=\max\{i: a_i\neq0\}$ as the leading degree and $\nu(L)=\min\{i: a_i\neq0\}$ as the trailing degree of $L$. 

Note that the difference operator ring is a special case of a left-Ore ring~\cite{Ore33}; for further properties and applications see, e.g., \cite{BronPet96,ChyzakSalvy98}.

\begin{definition}
Let $(k,\sigma)$ be a difference field and $C$ its subfield of constants. Let $V$ be a subspace of $k$ over $C$.
We say that we can \emph{compute all the solutions in $V$ of equations
with coefficients in $k$} if given any nonzero $L \in k[E;\sigma]$,
we can compute a finite basis of the $C$-vector space
$\{f \in V;\ L(f) = 0\}$.
We say that we can \emph{compute\footnote{This means that there is an algorithm that can perform the calculations.} all the hypergeometric candidates for
equations with coefficients in $k$} if given any nonzero $L \in k[E;\sigma]$,
we can compute a finite set $S \subset k$ such that for any $r \in k^\ast$,
if $E - r$ is a right factor of $L$ in $k[E;\sigma]$,
then $r = u \sigma(v)/v$
for some $u \in S$ and $v \in k^\ast$.
We say that we can \emph{compute all the hypergeometric solutions
of equations with coefficients in $k$} if given any nonzero $L \in k[E;\sigma]$,
we can compute a finite set $S \subset k^\ast$ and finite sets
$S_u \subset k^\ast$ for each $u \in S$
such that for any $r \in k^\ast$,
if $E - r$ is a right factor of $L$ in $k[E;\sigma]$,
then
$$
r = u\,\, \frac{\sum_{v \in S_u} c_v\, \sigma v}{\sum_{v\in S_u} c_v v}
$$
for some $u \in S$ and for some constants $c_v \in C$.
\end{definition}

The difference fields under consideration are built by a tower of certain difference field extensions. In general, a difference field $(K,\sigma')$ is called a \emph{difference field extension} of $(k,\sigma)$ if $K$ is a field extension of $k$ and $\sigma'(f)=\sigma(f)$ for all $f\in k$. Since $\sigma'$ and $\sigma$ agree on $k$, we do not distinguish between them anymore. Due to~\cite{Bronstein2000} (inspired by~\cite{Karr81}) we will use the following notions.

\begin{definition}
Let $(K,\sigma)$ be a difference field extension of $(k,\sigma)$. 
\begin{itemize}
\item We call $t\in K$ \emph{unimonomial over $k$} if $t$ is transcendental and $\sigma(t)=\alpha\,t+\beta$ with $\alpha\in k^*$ and $\beta\in k$.
\item A difference field extension $(K,\sigma)$ of $(k,\sigma)$ is \emph{given by a tower of unimonomials
over $k$} if $(K,\sigma) = \left(k(t_1)\dots(t_e),\sigma\right)$ 
where $t_i$ is a unimonomial over $k(t_1)\dots(t_{i-1})$ for all $i$ with $1\leq i\leq e$.
\end{itemize}
\end{definition}

\begin{example}\label{Exp:HyperDFHn1}
\rm
Consider the difference field $(k,\sigma)$ where $k=\mathbb{Q}(x)(h)$ is a rational function field and the field automorphism $\sigma:k\to k$ is defined by $\sigma(c)=c$ for all $c\in\mathbb{Q}$, $\sigma(x)=x+1$ and $\sigma(h)=h+\frac1{x+1}$. By construction, $k$ is built by the unimonomial $x$ over $\set Q$ and the unimonomial $h$ over $\set Q(x)$. Note that by construction of $(k,\sigma)$ the harmonic numbers $H_m$ with $H_{m+1}=H_m+\frac1{m+1}$ can be rephrased by the variable $h$. In particular, the linear recurrence in Example~\ref{Exp:RecExp}(I) can be rewritten in terms of the linear difference operator
	\begin{equation}\label{Equ:Operator2}
	L=a_0\,E^0+a_1\,E^1+a_2\,E^2
	\end{equation}
	with
	\begin{equation}\label{Equ:PCoeffHn}
	\begin{split}
	a_0&=(1
	+h
	+h x
	)^2 (
	3
	+2 h
	+2 x
	+3 h x
	+h x^2
	)^2,\\
	a_1&=-h (1+x) (3+2 x) (
	3
	+2 h
	+2 x
	+3 h x
	+h x^2
	)^2,\\
	a_2&=h (1+x)^2 (2+x)^3 (1
	+h
	+h x
	).
	\end{split}
	\end{equation}
A minimal set of hypergeometric candidates for $L$ is $S=\{\frac{1+h+h x}{x+1}\}$. Namely, with
\begin{equation}\label{Equ:rj}
r_j=\frac{1+h+h x}{x+1}\frac{\sigma(h^j)}{h^j},\quad j=1,2,
\end{equation}
we get all right-hand factors $E-r_j$ of $L$. Details how $S=\{\frac{1+h+h x}{x+1}\}$ (more precisely, a larger set that contains $S$) can be computed are given in Example~\ref{Exp:HyperDFHn4}.
\end{example}

\begin{lemma}
\label{lm:riccati}
Let $(k,\sigma)$ be a difference field and
$L = \sum_{i=0}^n a_i E^i \in k[E;\sigma]$
be a linear ordinary difference
operator with coefficients in $k$ and $a_n \ne 0$.
For any $r \in k^\ast$,
$E - r$ is a right-factor of $L$ in $k[E;\sigma]$ if and only if
$\sum_{i=0}^n a_i \dfact{r}{i}{\sigma} = 0$.
\end{lemma}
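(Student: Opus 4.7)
The strategy is to exploit right-division in the operator ring $k[E;\sigma]$, which is a right Euclidean domain (as recalled immediately before the statement). Since $\deg(E-r)=1$, right-dividing $L$ by $E-r$ produces a unique decomposition
\[
L \;=\; Q\cdot(E-r) + R, \qquad Q\in k[E;\sigma],\ R\in k.
\]
Thus $E-r$ is a right factor of $L$ if and only if $R=0$, so the whole task is reduced to identifying $R$ explicitly in terms of $r$ and the coefficients $a_i$.

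Next, I would compute $R$ by reducing each monomial $E^i$ modulo $(E-r)$ on the right. Using the defining commutation rule $E\cdot s=\sigma(s)\,E$ and its iterate $E^{i-1}\cdot s=\sigma^{i-1}(s)\,E^{i-1}$ valid for any $s\in k$, one obtains the one-step relation
\[
E^i \;=\; E^{i-1}(E-r) + E^{i-1}\cdot r \;=\; E^{i-1}(E-r) + \sigma^{i-1}(r)\,E^{i-1}.
\]
A straightforward induction on $i$, with base case $E^0=1=\dfact{r}{0}{\sigma}$, then yields
\[
E^i \;\equiv\; \sigma^{i-1}(r)\,\dfact{r}{i-1}{\sigma} \;=\; \prod_{j=0}^{i-1}\sigma^j(r) \;=\; \dfact{r}{i}{\sigma} \quad \text{mod } (E-r) \text{ on the right},
\]
where commutativity inside $k$ is used to rearrange the factors into the standard form. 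Weighting by $a_i$ on the left and summing over $0\le i\le n$ gives $R=\sum_{i=0}^n a_i\,\dfact{r}{i}{\sigma}$, which is the asserted criterion.

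\textbf{Expected obstacle.} There is essentially no substantive difficulty; this is the noncommutative analogue of the classical remainder theorem for polynomials. The only points demanding care are the distinction between left and right division in $k[E;\sigma]$, the placement of the iterated coefficient $\sigma^{i-1}(r)$ on the correct side during the reduction step, and the appeal to the uniqueness of the right-division remainder (guaranteed by the right Euclidean property) which is what legitimizes identifying $\sum_i a_i\,\dfact{r}{i}{\sigma}$ with $R$. The hypothesis $a_n\neq 0$ is not actually used in the argument; it only fixes the degree of $L$.
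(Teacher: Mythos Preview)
Your argument is correct. You carry out a direct noncommutative ``remainder theorem'': reduce each $E^i$ modulo $E-r$ on the right via the one-step relation $E^i = E^{i-1}(E-r)+\sigma^{i-1}(r)E^{i-1}$, induct to get $E^i\equiv \dfact{r}{i}{\sigma}$, and read off the remainder $R=\sum_i a_i\,\dfact{r}{i}{\sigma}$ by uniqueness of right division.

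The paper takes a different route. It adjoins a transcendental $y$ over $k$ with $\sigma y = r y$ (so that $(E-r)(y)=0$ by construction), writes $L=Q(E-r)+s$ with $s\in k$, and evaluates: $L(y)=Q(0)+sy=sy$, while also $L(y)=\sum_i a_i\,\sigma^i y = \big(\sum_i a_i\,\dfact{r}{i}{\sigma}\big) y$. Comparing forces $s=\sum_i a_i\,\dfact{r}{i}{\sigma}$. This is the ``evaluate at a generic solution'' viewpoint: $y$ plays the role of the hypergeometric object whose shift quotient is $r$. Your approach stays entirely inside $k[E;\sigma]$ and avoids the auxiliary extension, which is more self-contained; the paper's approach makes explicit the conceptual link to hypergeometric solutions (the $y$ is exactly the object whose existence motivates calling $r$ a hypergeometric candidate) and is reused later in the paper, e.g., in the proof of Theorem~\ref{th:PiBounds}.
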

\begin{proof}
Let $y$ be transcendental over $k$ and extend $\sigma$ to an
endomorphism
of $k(y)$ by defining $\sigma y = r y$.
Let $L = Q(E - r) + s$ be the right-division of $L$ by $E - r$ where
$Q \in k[E;\sigma]$ and $s \in k$.
Then, $L(y) = Q(\sigma y - r y) + s y = s y$,
so $E - r$ is a right-factor of $L$ if and only if $L(y) = 0$.
Since $\sigma y = r y$, \eqref{lm:sn} implies that $\sigma^i y = \dfact{r}{i}{\sigma} y$
for any $i \ge 0$. Therefore,
$$
L(y) =
\sum_{i=0}^n a_i\, \sigma^i y = y \sum_{i=0}^n a_i\, \dfact{r}{i}{\sigma}
$$
and it follows that $E-r$ is a right-factor of $L$ if and only if
$\sum_{i=0}^n a_i\, \dfact{r}{i}{\sigma} = 0$.
\end{proof}

\begin{example}\label{Exp:HyperDFHn2}
\rm
Based on the results of Example~\ref{Exp:HyperDFHn1} the following equations with $j=1,2$ hold by Lemma~\ref{lm:riccati}:
\begin{equation}\label{Equ:Order2Ricc}
a_0\dfact{r_j}{0}{\sigma}+a_1\,\dfact{r_j}{1}{\sigma}+a_2\dfact{r_j}{2}{\sigma}=0.
\end{equation}
As it turns out, the representations~\eqref{Equ:rj} with the components $\tilde{r}:=\frac{1+h+h x}{x+1}$ and $h^j$ with $j=1,2$ yield
\begin{equation}\label{Equ:RicattiHn}
a_0\,h^j\dfact{\tilde{r}}{0}{\sigma}\,+a_1\,\sigma(h^j)\,\dfact{\tilde{r}}{1}{\sigma}+a_2\,\sigma^2(h^j)\dfact{\tilde{r}}{2}{\sigma}=0.
\end{equation}
With
\begin{equation*}
H_{\nu+i}=\sigma^i(h)|_{h\mapsto H_\nu,x\mapsto \nu}\quad\text{ and }\quad
\prod_{l=1}^{\nu+i}H_l=(\dfact{\tilde{r}}{i}{\sigma}|_{x\mapsto \nu})\prod_{l=1}^{\nu}H_l 
\end{equation*}
for $\nu,i\in\set N$ it follows from~\eqref{Equ:RicattiHn} that~\eqref{Equ:HnSol} are solutions of the recurrence given in Example~\ref{Exp:RecExp}(I).
\end{example}

\begin{theorem}
\label{th:riccati}
Let $(k,\sigma)$ be a difference field. If we can compute all
the solutions in $k$ and all the hypergeometric candidates for
equations with coefficients in $k$, then we can compute all the
hypergeometric solutions of equations with coefficients in $k$.
\end{theorem}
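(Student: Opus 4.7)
By assumption we can compute all hypergeometric candidates for equations with coefficients in $k$, so let $S \subset k$ be such a finite set associated with the given operator $L = \sum_{i=0}^n a_i E^i$. By definition, every $r \in k^*$ for which $E - r$ is a right factor of $L$ has the form $r = u\,\sigma(v)/v$ for some $u \in S$ and some $v \in k^*$. The task therefore reduces to: for each fixed $u \in S$, determine all $v \in k^*$ for which the element $r := u\,\sigma(v)/v$ makes $E-r$ a right factor of $L$.

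For fixed $u \in S$, I would translate this condition into a linear difference equation on $v$. By Lemma~\ref{lm:riccati}, $E - u\,\sigma(v)/v$ is a right factor of $L$ if and only if $\sum_{i=0}^n a_i\, \dfact{(u\sigma(v)/v)}{i}{\sigma} = 0$. Using identity~\eqref{lm:sn} of Lemma~\ref{lm:ssi} applied to the element $u\sigma(v)/v = \sigma(uv)/v \cdot (u/\sigma u)\cdots$, or more directly by expanding the product, one obtains
\[
\dfact{\bigl(u\,\sigma(v)/v\bigr)}{i}{\sigma} \;=\; \dfact{u}{i}{\sigma}\,\frac{\sigma^i v}{v}.
\]
Clearing the denominator $v$, the right-factor condition becomes $L_u(v) = 0$, where
\[
L_u \;:=\; \sum_{i=0}^{n} a_i\, \dfact{u}{i}{\sigma}\, E^i \;\in\; k[E;\sigma].
\]
Since we may assume $a_n \neq 0$ and since $u \in k^*$ forces $\dfact{u}{n}{\sigma} = \prod_{i=0}^{n-1} \sigma^i u \neq 0$ in the field $k$, the operator $L_u$ is nonzero.

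Now I would invoke the second hypothesis: we can compute all solutions in $k$ of equations with coefficients in $k$. Applying this to the nonzero operator $L_u$, we obtain a finite $C$-basis $S_u \subset k$ of the solution space $\{v \in k : L_u(v) = 0\}$; discarding $0$ if present we may take $S_u \subset k^*$. By construction, every $v \in k^*$ yielding a hypergeometric right factor $E - u\sigma(v)/v$ of $L$ with multiplier $u$ is a $C$-linear combination $v = \sum_{v' \in S_u} c_{v'} v'$, and conversely any such nonzero linear combination produces a right factor. Since $\sigma$ is $C$-linear, $\sigma(v) = \sum_{v' \in S_u} c_{v'} \sigma(v')$, so
\[
r \;=\; u\,\frac{\sigma(v)}{v} \;=\; u\,\frac{\sum_{v'\in S_u} c_{v'}\,\sigma v'}{\sum_{v'\in S_u} c_{v'}\,v'},
\]
which is exactly the output format required by the definition.

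The main conceptual step is the reduction in the second paragraph: turning the multiplicatively-parameterized factorization problem into a linear one for $v$ via the identity for $\dfact{(u\sigma v/v)}{i}{\sigma}$. Once this is in place, the theorem follows immediately by combining the two black-box hypotheses --- hypergeometric candidates to enumerate the possible $u$, and linear solvability in $k$ to parameterize the admissible $v$ for each $u$. No real obstacle arises beyond verifying that $L_u \neq 0$ so that the linear solver is applicable, which is immediate here.
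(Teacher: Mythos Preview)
Your proposal is correct and follows essentially the same route as the paper's proof: use the candidate set $S$ to reduce to a linear equation $L_u(v)=0$ via the identity $\dfact{(u\,\sigma v/v)}{i}{\sigma}=\dfact{u}{i}{\sigma}\,\sigma^i v/v$ (which is exactly how the paper invokes~\eqref{lm:sn}), then solve $L_u$ in $k$ for each $u\in S$. Your explicit check that $L_u\neq 0$ is a small addition the paper leaves implicit.
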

\begin{proof}
Let $L = \sum_{i=0}^n a_i E^i \in k[E;\sigma]$
be a linear ordinary difference
operator with coefficients in $k$ and $a_n \ne 0$.
By hypothesis, we can compute a finite set $S \subset k$ such that
for any $r \in k^\ast$, if $E - r$ is a right factor of $L$ in $k[E;\sigma]$,
then $r = u \sigma(v)/v$ for some $u \in S$ and $v \in k^\ast$.
Let $E - r$ be such a right factor. By Lemma~\ref{lm:riccati},
$\sum_{i=0}^n a_i \dfact{r}{i}{\sigma} = 0$. Since $r = u\sigma(v)/v$
for some $u \in S$ and $v \in k^\ast$, it follows from \eqref{lm:sn} that
$\dfact{r}{i}{\sigma} = \dfact{u}{i}{\sigma} \sigma^i(v)/v$, whence
$$
0 = \sum_{i=0}^n a_i\, \dfact{r}{i}{\sigma} =
\sum_{i=0}^n a_i\, \dfact{u}{i}{\sigma}\, \frac{\sigma^i v}v\,.
$$
By hypothesis, we can compute for each $u \in S$ a finite basis
$S_u \subset k^\ast$ of the $C$-vector space
$\{y \in k;\ L_u(y) = 0\}$ where
\begin{equation}\label{Equ:Lu}
L_u = \sum_{i=0}^n a_i\, \dfact{u}{i}{\sigma} E^i
\end{equation}
and $C = \Const_\sigma(k)$. Since $L_u(v) = 0$, it follows that there are constants
$c_w \in C$ such that $v = \sum_{w \in S_u} c_w w$, and so
$$
r = u\, \frac{\sigma v}{v}
= u \,\,\frac{\sum_{w \in S_u} c_w\, \sigma w}{\sum_{w\in S_u} c_w w}.
$$
As $E - r$ was an arbitrary right-factor of $L$ in $k[E;\sigma]$, this means that we can
compute all the hypergeometric solutions of equations with coefficients in $k$.
\end{proof}

\begin{example}\label{Exp:HyperDFHn3}
\rm
We take the hypergeometric candidate $u=\frac{1+h+h x}{x+1}\in S$ from Example~\ref{Exp:HyperDFHn1} and construct
\begin{multline}\label{Equ:HnPolyRecurrence}
L_u=(1 + h + h x) (3 + 2 h + 2 x + 3 h x + h x^2)\,E^0\\ 
-h (3 + 2 x) (3 + 2 h + 2 x + 3 h x + h x^2)\,E^1\\
+h (2 + x)^2 (1 + h + h x)\,E^2
\end{multline}
as given in~\eqref{Equ:Lu}.
Computing the set of solutions
\begin{equation}\label{Equ:HnPolyRecurrenceSol}
\{v\in\mathbb{Q}(x)(h):\,L_{u}(v)=0\}=\{c_1\,h+c_2\,h^2: c_1,c_2\in\mathbb{Q}\},
\end{equation} 
we can determine the components~\eqref{Equ:rj}. Later we will present an algorithm in Section~\ref{Sec:RationalSolutions} for computing the solutions~\eqref{Equ:HnPolyRecurrenceSol} of~\eqref{Equ:HnPolyRecurrence}. 
In addition, we have to explain how the set $S$ of hypergeometric candidates for $L$ can be determined. These aspects will be explored further in the next two sections. 
\end{example}

\section{A normal form for rational functions}\label{Sec:GeneralNF}

In this section we will require another important notion~\cite{Bronstein2000} that is closely related to the dispersion introduced in~\cite{Abramov:71,Abramov:89a}; compare also the definition of Karr's \emph{specification of equivalence} in~\cite{Karr81}.

\begin{definition}
Let $k$ be a field, $t$ transcendental over $k$, and $\sigma$ an automorphism of the polynomial ring\footnote{Note that any automorphism $\sigma$ of $k[t]$ has the form $\sigma(t)=a\,t+b$ for some $a\in k^*$ and $b\in k$; compare~\cite{Petkov:10}.}
$k[t]$. For any $a,b\in k[t]$ we define their \emph{spread} as
$$\spr_\sigma (a,b)=\{m\in\set N: \gcd(a,\sigma^m(b))\neq1\}$$
and their \emph{dispersion} as
$$\dis_\sigma(a,b)=\begin{cases}
-1& \text{ if } \spr_{\sigma}(a,b) \text{ is empty},\\
\max \spr_{\sigma}(a,b)& \text{ if } \spr_{\sigma}(a,b) \text{ is finite and nonempty},\\
+\infty & \text{ if } \spr_{\sigma}(a,b) \text{ is infinite.}
\end{cases};$$
In addition, we set $\spr_\sigma(a):=\spr_\sigma(a,a)$ for $a\in k[t]$.
\end{definition}

\begin{theorem}
\label{PNF}
Let $k$ be a field, $t$ transcendental over $k$, and $\sigma$ an automorphism of the polynomial ring
$k[t]$. Then for any $r \in k(t)^\ast$, there are polynomials $a,b,c \in k[t]$ such that
in $k(t)$
\begin{equation}
\label{eq:decomp}
r\ =\ \frac{a}{b} \cdot \frac{\sigma c}{c},
\end{equation}
where
\begin{enumerate}
\item[\rm (i)] $\spr_\sigma (a,b) = \emptyset$,
\item[\rm (ii)] $a \coprime c$,
\item[\rm (iii)] $b \coprime \sigma c$.
\end{enumerate}
If, in addition, $\sigma$ maps $k$ onto $k$, and $\semiper{k[t]}{\sigma} = k[t]^{\sigma}$, then
\begin{enumerate}
\item[\rm (iv)] $\spr_\sigma (c)$ is finite.
\end{enumerate}
If 
emptiness of spread in $k[t]$ is decidable, the polynomials $a,b,c$ can be computed.
\end{theorem}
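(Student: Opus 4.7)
The plan is to write $r = p/q$ in lowest terms with $p, q \in k[t]$ and $p \coprime q$, and then strip off shift-related common factors of $p$ and $q$ by an Abramov-style iteration on the dispersion $N := \dis_\sigma(p,q)$, processing shifts from smallest to largest. If $N = -1$ the triple $(a,b,c) = (p,q,1)$ trivially satisfies (i)--(iii). Otherwise, for $m = 1, \ldots, N$ in turn, I compute $g_m = \gcd(p, \sigma^m q)$ and update
\[
p \leftarrow p/g_m, \qquad q \leftarrow q/\sigma^{-m} g_m, \qquad c \leftarrow c \cdot \prod_{i=1}^{m} \sigma^{-i} g_m,
\]
iterating each step until $\gcd(p, \sigma^m q) = 1$ so that the full multiplicity at shift $m$ is removed. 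The telescoping identity $\sigma\bigl(\prod_{i=1}^{m}\sigma^{-i}g_m\bigr)/\prod_{i=1}^{m}\sigma^{-i}g_m = g_m/\sigma^{-m}g_m$ shows that $(p/q)(\sigma c/c)$ is invariant at each step, so after termination $r = (a/b)(\sigma c/c)$ with $a, b$ the final values of $p, q$.

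Condition (i) follows because maximal extraction at shift $m$ forces $\gcd(a, \sigma^m b) = 1$, and this coprimality is preserved at later steps since $p$ and $q$ only lose factors; for $m > N$ no common factor existed to begin with. For (ii) and (iii), I would work orbit by orbit: within the shift orbit of an irreducible $\phi$, encode the multiplicities of occurrences of $\sigma^j \phi$ in $p$, $q$, $c$ as integers $j \in \set Z$. Each extraction at step $m$ matches a $p$-position $J_p$ with the $q$-position $J_q = J_p - m$ and inserts the interior $[J_q, J_p - 1]$ into the $c$-positions. If an unmatched $p$-position $j_p$ contributing to $a$ were to lie in some such $[J_q, J_p - 1]$, then at the strictly smaller shift $m' = j_p - J_q$, processed earlier, both $j_p$ and $J_q$ still had positive multiplicity (since $j_p$ survives to $a$ and $J_q$ is needed at the later step $m > m'$), so the greedy extraction at step $m'$ would already have exhausted $J_q$, leaving nothing to pair with $J_p$ at step $m$. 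This contradiction gives $a \coprime c$, and a symmetric argument on unmatched $q$-positions inside the shifted intervals $[J_q + 1, J_p]$ yields $b \coprime \sigma c$.

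For (iv) I would append a cleanup loop: while $c$ has an irreducible semi-invariant factor $\phi$ (i.e.\ $\sigma\phi = v\phi$ with $v \in k^*$), replace $(a, c)$ by $(va, c/\phi)$; this leaves $r$ unchanged and preserves (i)--(iii) since $v \in k[t]^*$. On termination, no irreducible factor of $c$ is semi-invariant. Suppose now for contradiction that some irreducible factor $\phi$ of $c$ is ``periodic'', i.e.\ $\sigma^n \phi = \alpha \phi$ for some $n > 0$ and $\alpha \in k^*$. Then $\phi^n$ satisfies $\sigma^n(\phi^n) = \alpha^n \phi^n$, so $\phi^n$ is semi-periodic; the hypothesis $\semiper{k[t]}{\sigma} = k[t]^\sigma$ makes $\phi^n$ semi-invariant, giving $(\sigma\phi)^n = \sigma(\phi^n) = v \phi^n$ for some $v \in k^*$, and unique factorization in $k[t]$ then forces $\sigma\phi = \lambda \phi$ for some $\lambda \in k^*$ with $\lambda^n = v$, contradicting that $\phi$ is not semi-invariant. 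Hence no irreducible factor of $c$ is periodic, so within each shift orbit at most one $n \ge 0$ can yield a coincidence between a factor of $c$ and a factor of $\sigma^n c$, and $\spr_\sigma(c)$ is finite. Computability of $a, b, c$: all the steps use only polynomial $\gcd$ and shifts together with the determination of dispersion, which is exactly what decidability of emptiness of spread provides.

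The main obstacle I expect is the multiplicity bookkeeping in the proof of (ii) and (iii): once factors of $p$ and $q$ can occur with multiplicity greater than $1$, the ``unmatched position lies inside a matched interval'' contradiction must be phrased as an invariant on the multiplicity vectors of each shift orbit updated throughout the run of the algorithm, not merely on sets of positions, in order for the greedy argument to close cleanly.
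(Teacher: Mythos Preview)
Your main loop has a genuine gap: you iterate over $m = 1,\dots,N$ with $N = \dis_\sigma(p,q)$, but the theorem does not assume $N$ is finite, and indeed it need not be. The paper's own example with $\sigma t = 1-t$, $f = t^3$, $g = (t-1)(t-2)(t-3)$ has $\spr_\sigma(f,g) = \{1,3,5,\dots\}$, so your outer loop never terminates there. The related computability claim is also off: decidability of emptiness of spread lets you find $\min\spr_\sigma(p,q)$ by linear search (you know the search halts), but it does not hand you the dispersion. The paper's algorithm sidesteps both issues by looping \emph{while} $\spr_\sigma(a_i,b_i)\ne\emptyset$ and extracting at $h_i = \min\spr_\sigma(a_{i-1},b_{i-1})$; termination then follows from $\deg a_i < \deg a_{i-1}$, independently of whether the dispersion is finite.

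Once the loop is repaired this way, your argument is essentially the paper's, but the paper's proofs of (ii)--(iv) are much shorter because they exploit the minimality of $h_i$ directly and avoid all multiplicity bookkeeping. For (ii): if an irreducible $u$ divides both $a=a_n$ and $c$, then $u \mid \sigma^{-j}d_i$ for some $i$ and some $1\le j\le h_i$; since $\sigma^{-h_i}d_i \mid b_{i-1}$ this gives $u \mid \sigma^{h_i-j}b_{i-1}$, and since $a_n \mid a_{i-1}$ we get $h_i - j \in \spr_\sigma(a_{i-1},b_{i-1})$, contradicting $h_i = \min$ of that set. The proof of (iii) is symmetric. For (iv): a semi-periodic (hence, by hypothesis, semi-invariant) irreducible factor $p$ of $c$ would likewise satisfy $p \mid a_{i-1}$ and $p \mid b_{i-1}$, forcing $0 \in \spr_\sigma(a_{i-1},b_{i-1})$ and hence $h_i = 0$, contradicting $h_i \ge h_1 > 0$. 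So no cleanup loop is needed; your alternative route of stripping semi-invariant factors and then ruling out periodic ones is correct, but it is extra machinery that the minimality of the extracted shifts already provides for free.
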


\begin{proof}  Let $r = f/g$ where $f, g \in k[t]\setminus\{0\}$ with $f\coprime g$.
We obtain $a, b, c$ by performing the following steps:
\vskip 0.5pc
\noindent
\verb|input:  |
$f, g \in k[t]$, $f, g \ne 0$; \\
\verb|output: |
$a, b, c \in k[t]$ satisfying~(\ref{eq:decomp}) and (i) -- (iii); \\
\verb|   |\\
\verb|   1.| $\,a_0 := f; \ b_0 := g; \ c_0 := 1; \ i := 0;$ \\
\verb|   2. while |$\spr_\sigma (a_i, b_i) \ne \emptyset$ \verb|do|\\
\verb|   3.   |$i := i+1;$ \\
\verb|   4.   |$h_i := \min \spr_\sigma (a_{i-1}, b_{i-1})$; \\
\verb|   5.   |$d_i := \gcd (a_{i-1}, \sigma^{h_i} b_{i-1});$ \\
\verb|   6.   |$a_i := a_{i-1}/d_i;$ \\
\verb|   7.   |$b_i := b_{i-1}/\sigma^{-h_i}d_i;$ \\
\verb|   8.   |$c_i := c_{i-1} \cdot(\sigma^{-h_i}d_i)^{\sigma, h_i};$ \\
\verb|   9.| $\,n := i; \ a := a_n; \ b := b_n; \ c := c_n;$ \\
\verb|  10. return| $(a, b, c)$.\\
\vskip 0.1pt
\noindent
Clearly, this is an algorithm if one can decide emptiness of spread in $k[t]$:
in line 4 we have $\spr_\sigma (a_{i-1}, b_{i-1}) \ne \emptyset$, hence to find its minimum
test $m = 0, 1, 2, \ldots$ until $\deg \gcd(a_{i-1}, \sigma^m b_{i-1}) > 0$.
By definition of spread,
$\deg d_i > 0$. Therefore $\deg a_i < \deg a_{i-1}$ in line 6, hence the
\verb+while+ loop at lines 2 -- 8 eventually terminates.
At lines 6 and 7, $d_i \divides a_{i-1}$ and 
$ \sigma^{-h_i} d_i \divides b_{i-1}$,
so $a_i, b_i \in k[t]$.

Note that $\gcd(a_0,b_0)=\gcd(f,g)=1$ and thus $h_1>0$. Further, $a_i \divides a_{i-1}$, $b_i \divides b_{i-1}$  and $\spr_\sigma (a_i, b_i)
\subseteq \spr_\sigma (a_{i-1}, b_{i-1}) \setminus \{h_i\}$ for all $i \in \{1,\ldots,n\}$,
so $0<h_1 < h_2 < \cdots < h_n$. Clearly
%
\[
a  = \frac{f}{d_1 \cdots d_n},  \quad
b  = \frac{g}{\sigma^{-h_1}d_1 \cdots \sigma^{-h_n}d_n},  \quad
c  = (\sigma^{-h_1}d_1)^{\sigma, h_1} \cdots (\sigma^{-h_n}d_n)^{\sigma, h_n}, 
\]
therefore, by \eqref{lm:sn},
\[
\frac{a}{b} \cdot \frac{\sigma c}{c} \ =\ \frac{f}{d_1 \cdots d_n} \cdot
\frac{\sigma^{-h_1}d_1 \cdots \sigma^{-h_n}d_n}{g} \cdot 
\frac{d_1 \cdots d_n}{\sigma^{-h_1}d_1 \cdots \sigma^{-h_n}d_n}\ =\ \frac{f}{g}\ =\ r,
\]
proving~(\ref{eq:decomp}).

(i): Upon exiting the \verb+while+ loop, its condition $\spr_\sigma (a_i, b_i) \ne \emptyset$ is false.
Thus $\spr_\sigma (a, b) = \spr_\sigma (a_n, b_n) = \spr_\sigma (a_i, b_i) = \emptyset$.

(ii): Assume that $u \in k[t] \setminus k$ is an irreducible common factor of $a$ and $c$. Then $u \divides a_n$
and $u \divides \sigma^{-j} d_i$ for some $i \in \{1,\ldots, n\}$ with $h_i \ge 1$ and $j \in \{1,\ldots, h_i\}$.
As $b_{i-1} = b_i \sigma^{-h_i}d_i$, it follows that $u \divides \sigma^{h_i - j} b_{i-1}$,
and so $h_i - j \in \spr_\sigma (a_n, b_{i-1})$. On the other hand, $n > i-1$ implies that $a_n \divides a_{i-1}$,
hence $\spr_\sigma (a_n, b_{i-1}) \subseteq \spr_\sigma (a_{i-1}, b_{i-1}) \subseteq \{h_i, h_i+1, \ldots \}$.
As $j \ge 1$, we get $h_i - j \notin \spr_\sigma (a_n, b_{i-1})$. This contradiction shows that $a \coprime c$.

(iii): Assume that $u \in k[t] \setminus k$ is an irreducible common factor of $b$ and $\sigma c$. Then $u \divides b_n$
and $u \divides \sigma^{1-j} d_i$ for some $i \in \{1,\ldots, n\}$ with $h_i \ge 1$ and $j \in \{1,\ldots, h_i\}$.
As $a_{i-1} = a_i d_i$, it follows that $u \divides \sigma^{1 - j} a_{i-1}$. Hence $\sigma^{j-1}u \divides a_{i-1}$
and $\sigma^{j-1}u \divides \sigma^{j-1} b_{n}$, so $j-1 \in \spr_\sigma (a_{i-1}, b_n)$.
On the other hand, $n > i-1$ implies that $b_n \divides b_{i-1}$,
hence $\spr_\sigma (a_{i-1}, b_n) \subseteq \spr_\sigma (a_{i-1}, b_{i-1}) \subseteq \{h_i, h_i+1, \ldots \}$.
As $j \le h_i$, it follows that $j-1 \notin \spr_\sigma (a_{i-1}, b_n)$. This contradiction shows that $b \coprime \sigma c$.

(iv): Assume that $\spr_\sigma (c)$ is infinite.
Then $c$ has a factor $p \in \semiper{k[t]}{\sigma} \setminus k$
by Theorem 6(iii) of~\cite{Bronstein2000}.
By Lemma 3(iv) of~\cite{Bronstein2000}, we can assume that $p$ is irreducible.
It follows that $p \divides (\sigma^{-h_i}d_i)^{\sigma, h_i}$ for some $i \in \{1,\ldots,n\}$
with $h_i \ge 1$, and further that $p \divides \sigma^{-j}d_i$ for some $j \in \{1,\ldots,h_i\}$.
Therefore $\sigma^j p \divides a_{i-1}$ and $\sigma^{j-h_i} p \divides b_{i-1}$.
As $p \in \semiper{k[t]}{\sigma} = k[t]^{\sigma}$ by assumption, this implies
that $p \divides a_{i-1}$ and $p \divides b_{i-1}$.
Hence $h_i = \min \spr_\sigma (a_{i-1},b_{i-1}) = 0$,
a contradiction.
\end{proof}

\begin{example}
\rm
Let $f(t) = t^3$ and $g(t) = (t-1)(t-2)(t-3)$. We apply the algorithm given in the proof of Theorem \ref{PNF} to
$f(t)/g(t)$ for two different automorphisms $\sigma$ of $k[t]$.

\smallskip
a) If $\sigma$ is the unique $k$-automorphism of $k[t]$ satisfying $\sigma t = t+1$, then
$\spr_\sigma (f,g) = \{1,2,3\}$, $a(t) = b(t) = 1$, $c(t) = (t-1)^3(t-2)^2(t-3)$,
and $\spr_\sigma (c) = \{0,1,2\}$ is finite. In this case $\semiper{k[t]}{\sigma} = k[t]^{\sigma} = k$.

\smallskip
b) If $\sigma$ is the unique $k$-automorphism of $k[t]$ satisfying $\sigma t = 1-t$, then
$\spr_\sigma (f,g) = \{1,3,5,\ldots\}$, $a(t) = t^2$, $b(t) = -(t-2)(t-3)$, $c(t) = 1-t$,
and $\spr_\sigma (c) = \{0,2,4,\ldots\}$ is infinite. Note that in this case $\sigma^2 p(t) = p(t)$ for all
$p \in k[t]$, so $\semiper{k[t]}{\sigma} = k[t]$, but $t \notin k[t]^{\sigma}$ because there is no $u \in k[t]^* = k^*$
such that $1-t = ut$. Hence $\semiper{k[t]}{\sigma} \ne k[t]^{\sigma}$.
\end{example}

\section{Hypergeometric solutions in \pisiSE-extensions}\label{Sec:PiSiNF}

In Section~\ref{Sec:GeneralNF} we have introduced a normal form for general automorphisms of $k[t]$. In the following we will specialize and refine this result further to unimonomial extensions that enjoy further properties.

\begin{lemma}
\label{lm:abdiv}
Let $(k,\sigma)$ be a difference field and $t$ be unimonomial over $k$.
Let $L = \sum_{i=0}^n a_i E^i \in k[t][E;\sigma]$
be a linear ordinary difference operator with coefficients in $k[t]$
where $a_0 a_n \ne 0$, and let $r \in k(t)^\ast$ be such that $E-r$ is a
right factor of $L$ in $k(t)[E;\sigma]$.
Then, in the decomposition~(\ref{eq:decomp}) for $r$,
we have $a \mid a_0$ and $b \mid \sigma^{1-n} a_n$.
\end{lemma}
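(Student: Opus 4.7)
The plan is to convert the right-factor condition on $E-r$ into a polynomial identity via Lemma~\ref{lm:riccati} and then read off divisibility by examining that identity modulo $a$ and modulo $\sigma^{n-1}b$, respectively. Throughout, the three coprimality properties (i)--(iii) from Theorem~\ref{PNF} do exactly the cancellation work we need.

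First, since $E-r$ is a right factor of $L$ in $k(t)[E;\sigma]$, Lemma~\ref{lm:riccati} gives
\[
\sum_{i=0}^n a_i\,\dfact{r}{i}{\sigma} \;=\; 0.
\]
Writing $r = (a/b)(\sigma c/c)$ and using~\eqref{lm:sn} (or a direct telescoping), one obtains
\[
\dfact{r}{i}{\sigma} \;=\; \frac{\dfact{a}{i}{\sigma}}{\dfact{b}{i}{\sigma}}\cdot\frac{\sigma^i c}{c}.
\]
Substituting, multiplying through by $c\,\dfact{b}{n}{\sigma}$, and rewriting $\dfact{b}{n}{\sigma}/\dfact{b}{i}{\sigma}=\dfact{(\sigma^i b)}{n-i}{\sigma}$ via~\eqref{lm:snsi} yields the key polynomial identity
\begin{equation}\label{eq:polyRiccati}
\sum_{i=0}^n a_i\,\dfact{a}{i}{\sigma}\,\dfact{(\sigma^i b)}{n-i}{\sigma}\,\sigma^i c \;=\; 0 \qquad \text{in } k[t].
\end{equation}

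Next I would extract $a\mid a_0$ by reducing~\eqref{eq:polyRiccati} modulo $a$. For every $i\ge 1$ the factor $\dfact{a}{i}{\sigma}$ is divisible by $a$, so only the $i=0$ term survives:
\[
a_0\,\dfact{b}{n}{\sigma}\,c \;\equiv\; 0 \pmod{a}.
\]
Property (i), $\spr_\sigma(a,b)=\emptyset$, gives $\gcd(a,\sigma^j b)=1$ for each $0\le j\le n-1$ and hence $\gcd(a,\dfact{b}{n}{\sigma})=1$; property (ii) gives $\gcd(a,c)=1$. Therefore $a\mid a_0$.

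Symmetrically, for $b\mid\sigma^{1-n}a_n$ I would reduce~\eqref{eq:polyRiccati} modulo $\sigma^{n-1}b$. For each $i<n$ the product $\dfact{(\sigma^i b)}{n-i}{\sigma}=\sigma^i b\cdot\sigma^{i+1}b\cdots\sigma^{n-1}b$ contains the factor $\sigma^{n-1}b$, so only the $i=n$ term survives:
\[
a_n\,\dfact{a}{n}{\sigma}\,\sigma^n c \;\equiv\; 0 \pmod{\sigma^{n-1}b}.
\]
Applying $\sigma^{-(n-1)}$, property (i) (via $\gcd(a,\sigma^m b)=1$ for all $m\ge 0$) yields $\gcd(\sigma^{n-1}b,\dfact{a}{n}{\sigma})=1$, and property (iii), $\gcd(b,\sigma c)=1$, yields $\gcd(\sigma^{n-1}b,\sigma^n c)=1$. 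Hence $\sigma^{n-1}b\mid a_n$, i.e., $b\mid \sigma^{1-n}a_n$.

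The conceptual step is really just step~1 (clearing denominators to reach~\eqref{eq:polyRiccati}); after that both divisibilities fall out by a single modular reduction. The one mild subtlety to watch is that $\sigma$ is only a priori an automorphism of $k$, not of $k[t]$, but since $t$ is unimonomial the induced map extends to an automorphism of $k[t]$, so all shifts $\sigma^{-j}$ needed above are well-defined and preserve divisibility; this is where the unimonomial hypothesis is actually used.
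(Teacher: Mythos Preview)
Your proof is correct and follows essentially the same route as the paper: you derive the polynomial identity~\eqref{eq:polyRiccati} (the paper's equation~(\ref{eq:abc})) by clearing denominators, then isolate the $i=0$ and $i=n$ terms and use the coprimality conditions (i)--(iii) from Theorem~\ref{PNF} exactly as the paper does. The only cosmetic difference is that you phrase the two extractions as reductions modulo $a$ and modulo $\sigma^{n-1}b$, while the paper says ``$a$ divides every term for $i>0$'' and ``$\sigma^{n-1}b$ divides every term for $i<n$''; these are the same argument.
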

\begin{proof}
Since $E-r$ is a right factor of $L$,
Lemma~\ref{lm:riccati} implies that
$\sum_{i=0}^n a_i\, \dfact{r}{i}{\sigma} = 0$.
Replacing $r$ in this equation by its decomposition~(\ref{eq:decomp})
and using \eqref{lm:sn}, we obtain
$$
\sum_{i=0}^n a_i\,
\frac{\dfact{a}{i}{\sigma}}{\dfact{b}{i}{\sigma}}
\frac{\sigma^i c}c\ =\ 0\,.
$$
Multiplying through by $\dfact{b}{n}{\sigma} c$ and using \eqref{lm:snsi},
we establish that
\begin{equation}
\label{eq:abc}
\sum_{i=0}^n a_i\, \dfact{a}{i}{\sigma}
\dfact{(\sigma^i b)}{n-i}{\sigma} \sigma^i c\ =\ 0\,.
\end{equation}
Since $a$ divides every term in the sum~(\ref{eq:abc}) for $i > 0$,
it follows that $a \mid a_0 \dfact{b}{n}{\sigma} c$,
and Theorem~\ref{PNF}\,(i),(ii) implies that $a \mid a_0$.
Similarly, $\sigma^{n-1} b$ divides  every term in
the sum~(\ref{eq:abc}) for $i < n$, so
$\sigma^{n-1} b \mid a_n \dfact{a}{n}{\sigma} \sigma^n c$
and $b \mid \sigma^{1-n}a_n\, \dfact{\left(\sigma^{1-n}a\right)}{n}{\sigma} \sigma c$.
It follows from Theorem~\ref{PNF}\,(i),(iii)
that $b \mid \sigma^{1-n} a_n$.
\end{proof}

In the following we will restrict to the class of \pisiSE-monomials introduced by~\cite{Karr81}.

\begin{definition}
Let $(k,\sigma)$ be a difference field and let $t$ be a unimonomial over $k$ with $\Const_\sigma(k(t))
= \Const_\sigma(k)$. We say that:
\begin{itemize}
\item $t$ is a \emph{$\Pi$-monomial over $k$} if $\sigma t / t \in k$,
\item $t$ is a \emph{$\Sigma^*$-monomial over $k$} if $\sigma t - t \in k$,
\item $t$ is a \emph{\pisiSE-monomial over $k$} if either $\sigma t / t \in k$ or $\sigma t - t \in k$,
\item $(k(t),\sigma)$ is a \emph{\pisiSE-extension (resp.\ a \piE-extension, a \sigmaSE-extension) of $(k,\sigma)$}
if $t$ is a \pisiSE-monomial (resp.\ a \piE-monomial, a \sigmaSE-monomial) over $k$,
\item a difference field extension $(K,\sigma)$ of $(k,\sigma)$ is \emph{given by a tower of \pisiSE-monomials
over $k$} if $(K,\sigma) = \left(k(t_1)\dots(t_e),\sigma\right)$ 
where $t_i$ is a \pisiSE-monomial over $k(t_1)\dots(t_{i-1})$ for all $i$ with $1\leq i\leq e$,
\item $(k,\sigma)$ is a \emph{\pisiSE-field over $C$} if $C=\Const_\sigma(k)$ and $(k,\sigma)$
is given by a tower of \pisiSE-monomials over $C$.
\end{itemize}
\end{definition}

As worked out in~\cite{Karr81,Karr85} (see also~\cite{Bronstein2000,Schneider:01,DR1}) one obtains alternative characterizations of \piE-monomials and \sigmaSE-monomials that will be used later. In this regard, the following extra notion is needed.

\begin{definition}
Let $(k,\sigma)$ be a difference field. 
$\alpha\in k$ is called a \emph{$\sigma$-radical} over $k$ if there are $g\in k^*$ and $n\in\set N\setminus\{0\}$ with $\sigma(g)=\alpha^n\,g$.
\end{definition}

\begin{theorem}[\cite{Karr81}]
\label{thm:pisiProp}
Let $(K,\sigma)$ be a difference field extension of $(k,\sigma)$, and let $t\in K^*$. 
Then  $t$ is:
\begin{itemize}
\item[\rm (i)] a \piE-monomial over $k$ iff $\sigma(t)/t \in k$ and
$\sigma(t)/t$ is not a $\sigma$-radical over $k$,
\item[\rm (ii)]  a \sigmaSE-monomial over $k$ iff $\sigma t - t \in k$ and there is no $w \in k$ such that
$\sigma t - t = \sigma w - w$.
\end{itemize}
\end{theorem}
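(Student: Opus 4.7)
The plan is to prove each direction by contraposition, exploiting the definition that a \piE- (resp.\ \sigmaSE-) monomial is a unimonomial of the specified shift-type together with the constancy requirement $\Const_\sigma(k(t))=\Const_\sigma(k)$. For both (i) and (ii), the ``only if'' direction is short: it produces an obstruction element in $k(t)\setminus k$ that is fixed by $\sigma$, contradicting constancy. For ``if,'' one must separately verify transcendence and no new constants.

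For (i), write $\alpha=\sigma(t)/t\in k^*$. If $\alpha$ is a $\sigma$-radical, say $\sigma(g)=\alpha^n g$ with $g\in k^*$ and $n\ge1$, then $t^n/g\in k(t)^*$ satisfies $\sigma(t^n/g)=t^n/g$, yet it cannot lie in $k$ (as $t$ is transcendental in a \piE-extension), contradicting $\Const_\sigma(k(t))=\Const_\sigma(k)$. Conversely, assume $\alpha$ is not a $\sigma$-radical. To show transcendence, suppose $t$ has minimal polynomial $p(x)=x^n+c_{n-1}x^{n-1}+\dots+c_0$ over $k$; applying $\sigma$ and dividing by $\alpha^n$ yields a monic polynomial of degree $n$ satisfied by $t$, hence equal to $p$, so $\sigma(c_i)=\alpha^{n-i}c_i$ for each $i$. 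Since $c_0\ne0$ (by minimality and $t\ne0$), this exhibits $\alpha$ as a $\sigma$-radical, a contradiction. For constancy, take $f=p/q\in k(t)^*$ with $\gcd(p,q)=1$ and $\sigma(f)=f$; then $\sigma(p)=\lambda p$ and $\sigma(q)=\lambda q$ for a common $\lambda\in k^*$. Since $\sigma$ acts on $k[t]$ by $\sigma(t^i)=\alpha^i t^i$, the equation $\sigma(p)=\lambda p$ reads coefficient-wise $\sigma(p_i)=\lambda\alpha^{-i}p_i$; if $p$ had two nonzero coefficients $p_i,p_j$ with $i<j$, then $\sigma(p_j/p_i)=\alpha^{i-j}(p_j/p_i)$ would make $\alpha$ a $\sigma$-radical. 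Hence $p$ and $q$ are monomials, $f=(p_i/q_j)t^{i-j}$, and the same argument applied to $f$ forces $i=j$, so $f\in k$.

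For (ii), set $\beta=\sigma(t)-t\in k$. If $\beta=\sigma(w)-w$ for some $w\in k$, then $t-w\in k(t)\setminus k$ is fixed by $\sigma$, contradicting constancy. Conversely, assume no such $w$ exists. To show transcendence, let $p(x)=x^n+c_{n-1}x^{n-1}+\dots+c_0$ be minimal; applying $\sigma$ and using $\sigma(t)=t+\beta$, expansion of $(t+\beta)^n$ shows the monic polynomial obtained still has $t$ as a root, so by minimality its $t^{n-1}$-coefficient equals $c_{n-1}$, giving $\sigma(c_{n-1})-c_{n-1}=-n\beta$. Then $w=-c_{n-1}/n\in k$ solves $\sigma(w)-w=\beta$, a contradiction. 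For constancy, let $f=p/q\in k(t)^*$ with $\gcd(p,q)=1$, $q$ monic, and $\sigma(f)=f$. As in (i), $\sigma(q)=q$ as a polynomial. Writing $q=t^m+q_{m-1}t^{m-1}+\dots$, the $t^{m-1}$-coefficient of $\sigma(q)=(t+\beta)^m+\sigma(q_{m-1})(t+\beta)^{m-1}+\dots$ equals $m\beta+\sigma(q_{m-1})$, so $\sigma(q_{m-1})-q_{m-1}=-m\beta$; for $m\ge1$ this again yields $w=-q_{m-1}/m$ solving the forbidden equation, forcing $m=0$, i.e., $q=1$. The same subleading-coefficient analysis applied to $\sigma(p)=p$ reduces $p$ to its constant term, so $f\in\Const_\sigma(k)$.

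The main technical obstacle I anticipate is the \sigmaSE-case in (ii), since $\sigma$ no longer acts diagonally on $k[t]$ and one must extract information from the single coefficient of $t^{d-1}$ in the binomial expansion of $\sigma$ applied to a degree-$d$ polynomial. This subleading coefficient is precisely the place where the obstruction $\sigma(w)-w=\beta$ is manufactured, and the bookkeeping has to be done cleanly for both the denominator (to get $q=1$) and the numerator (to get $p\in k$). The \piE-case (i) is easier because $\sigma$ acts diagonally on monomials, so comparisons reduce to individual coefficients without binomial expansions.
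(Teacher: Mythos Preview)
Your argument is correct. Note, however, that the paper itself does not supply a proof of this theorem: it is stated with the attribution \cite{Karr81} and no proof environment follows, so there is nothing in the paper to compare your proposal against line by line.

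That said, your route is exactly the standard one underlying Karr's characterization, and it is cleanly executed. A couple of minor remarks on presentation. In part~(i), when you pass from $\sigma(f)=f$ with $f=p/q$, $\gcd(p,q)=1$, to $\sigma(p)=\lambda p$ and $\sigma(q)=\lambda q$, it is worth making explicit the one-line justification: from $q\,\sigma(p)=p\,\sigma(q)$ and $\gcd(p,q)=1$ one gets $p\mid\sigma(p)$, and equality of degrees (since $\sigma$ acts diagonally on monomials) forces $\sigma(p)=\lambda p$. In part~(ii), when you apply the subleading-coefficient analysis to $\sigma(p)=p$ for the numerator, $p$ need not be monic; you should first read off $\sigma(p_d)=p_d$ from the leading coefficient before extracting $\sigma(p_{d-1})-p_{d-1}=-d\,p_d\,\beta$, which then gives $w=-p_{d-1}/(d\,p_d)$. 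You clearly have this in mind, but the phrase ``the same subleading-coefficient analysis'' hides this small extra step. Finally, the edge case $t\in k$ (minimal polynomial of degree~$1$) is covered by your transcendence argument in both (i) and (ii), but it does no harm to say so explicitly.
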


\begin{example}
\rm
Consider the difference field $(\mathbb{Q}(x)(h),\sigma)$ from Example~\ref{Exp:HyperDFHn1}: Obviously, $x$ is unimonomial over $\mathbb{Q}$. Since there is no $w\in\mathbb{Q}$ with $\sigma(w)-w=1$, it is a \sigmaSE-monomial by part (ii) of Theorem~\ref{thm:pisiProp}. Similarly, $h$ is unimonomial over $\mathbb{Q}(x)$ and there is no $w\in\mathbb{Q}(x)$ such that $\sigma(w)-w=\frac1{x+1}$. Thus $h$ is a \sigmaSE-monomial over $\mathbb{Q}(x)$ by part (ii) of Theorem~\ref{thm:pisiProp}. Summarizing, 
$(\mathbb{Q}(x)(h),\sigma)$ is a \pisiSE-field over $\mathbb{Q}$. In particular, $\Const_\sigma(\mathbb{Q}(x)(h))=\mathbb{Q}$.
\end{example}

\begin{example}\label{Exp:FactorialDF}
\rm
Consider the difference field $(k,\sigma)$ that is built from the rational function $k=\mathbb{Q}(x)(p)$ and the field automorphism $\sigma:k\to k$ defined by $\sigma(c)=c$ for all $c\in\mathbb{Q}$, $\sigma(x)=x+1$ and $\sigma(p)=(x+1)p$. As observed in the previous example, $x$ is a \sigmaSE-monomial over $\mathbb{Q}$. Furthermore, there are no $m\in\set N\setminus\{0\}$ and $w\in\set Q(x)\setminus\{0\}$ such that $\sigma(w)=(x+1)^m\,w$. Thus by part (i) of Theorem~\ref{thm:pisiProp} it follows that $p$ is a \piE-monomial over $\mathbb{Q}(x)$. Summarizing, $(\mathbb{Q}(x)(p),\sigma)$ is a \pisiSE-field over $\mathbb{Q}$, in particular, $\Const_\sigma(\mathbb{Q}(x)(p))=\mathbb{Q}$. 
\end{example}

\begin{corollary}\label{Cor:NormalFormPiSi}
Let $(k,\sigma)$ be a difference field, let $t$ be a \pisiSE-monomial over $k$, and let $r\in k(t)^\ast$. 
Then there are $a,b,c\in k[t]\setminus\{0\}$ such that~\eqref{eq:decomp} holds where
\begin{enumerate}
\item[\rm (i)] $\spr_\sigma (a,b) = \emptyset$,
\item[\rm (ii)] $a \coprime c$,
\item[\rm (iii)] $b \coprime \sigma c$,
\item[\rm (iv)] $\spr_\sigma (c)$ is finite,
\item[\rm (v)] $c(0)\neq0$ if $\sigma(t)/t\in k$.
\end{enumerate}
If emptiness of spread in $k[t]$ is decidable, such polynomials $a,b,c$ can be computed.
\end{corollary}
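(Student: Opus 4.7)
The plan is to derive this corollary almost entirely from Theorem~\ref{PNF}, with a small post-processing step to obtain property (v). First I apply Theorem~\ref{PNF} directly to $r\in k(t)^\ast$, since any $\Pi\Sigma^*$-monomial $t$ over $k$ is in particular unimonomial, and $\sigma$ extends to an automorphism of $k[t]$ with $\sigma(t)=\alpha t+\beta$. This yields polynomials $a,b,c\in k[t]$ satisfying \eqref{eq:decomp} and properties (i)--(iii) of Theorem~\ref{PNF}.

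To secure (iv), I need to verify the two hypotheses of Theorem~\ref{PNF}\,(iv). The first, that $\sigma$ maps $k$ onto $k$, is immediate since $(k,\sigma)$ is a difference field. The second, $\semiper{k[t]}{\sigma} = k[t]^{\sigma}$, is the structural fact that $\Pi\Sigma^*$-monomials are precisely the well-behaved unimonomials: in a $\Sigma^*$-monomial extension, $\sigma^n(p)=u^n p$ with $u\in k[t]^\ast=k^\ast$ forces via comparison of top coefficients that the $\sigma^n$-orbit of $p$ is bounded, and combined with $\Const_\sigma(k(t))=\Const_\sigma(k)$ this forces $\sigma(p)=up$; the $\Pi$-case is analogous by examining constant and leading terms. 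This equality is essentially the content of Theorem~6 of~\cite{Bronstein2000} in conjunction with the Karr-style characterizations in Theorem~\ref{thm:pisiProp}, and I would cite it rather than re-derive it. Once it is available, Theorem~\ref{PNF}\,(iv) gives (iv) of the corollary.

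For property (v), only the $\Pi$-case needs attention, since (v) is vacuous when $\sigma(t)-t\in k$. When $\sigma(t)/t=\alpha\in k^\ast$, write $c=t^m\tilde c$ with $m\ge 0$ and $\tilde c(0)\ne 0$. Then $\sigma(t^m)/t^m=\alpha^m$, so
\[
\frac{\sigma c}{c}\ =\ \alpha^m\,\frac{\sigma\tilde c}{\tilde c},
\]
and replacing the triple $(a,b,c)$ by $(\alpha^m a,\, b,\, \tilde c)$ preserves \eqref{eq:decomp}. It is then a routine check that all earlier properties are inherited: multiplying $a$ by the unit $\alpha^m\in k^\ast$ does not alter its irreducible factors in $k[t]$, so (i) is preserved; (ii) follows since $a\coprime c=t^m\tilde c$ implies $a\coprime \tilde c$; (iii) follows since $b\coprime\sigma c=\alpha^m t^m\sigma\tilde c$ implies $b\coprime \sigma\tilde c$; and (iv) is preserved because $\spr_\sigma(\tilde c)\subseteq \spr_\sigma(c)$ as $\tilde c\mid c$.

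The main obstacle is the structural input $\semiper{k[t]}{\sigma}=k[t]^{\sigma}$ used for (iv); the example preceding the corollary (with $\sigma t=1-t$) shows that without the $\Pi\Sigma^*$ hypothesis this can fail and $\spr_\sigma(c)$ can be infinite, so the whole point of the $\Pi\Sigma^*$ restriction is to rule this out. The algorithmic claim follows by inspection: the procedure in the proof of Theorem~\ref{PNF} is algorithmic whenever emptiness of spread in $k[t]$ is decidable, and the trivial factorization $c=t^m\tilde c$ in the $\Pi$-case is clearly computable.
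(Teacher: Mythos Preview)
Your proof is correct and, for properties (i)--(iv), proceeds exactly as the paper does: apply Theorem~\ref{PNF}, then invoke the structural identity $\semiper{k[t]}{\sigma}=k[t]^{\sigma}$ for \pisiSE-monomials to activate part~(iv) of that theorem (the paper cites \cite[Theorems~3 and~4]{Bronstein2000} rather than Theorem~6, but the substance is the same).

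Where you diverge is in (v). You obtain $c(0)\ne 0$ by a post-processing step, factoring out $t^m$ from $c$ and absorbing $\alpha^m$ into $a$; this is valid and your checks that (i)--(iv) are preserved are correct. The paper instead observes that no modification is needed: by \cite[Theorem~5]{Bronstein2000}, in the \piE-case a polynomial in $k[t]$ has infinite spread if and only if it is divisible by $t$, so the finiteness of $\spr_\sigma(c)$ established in (iv) already forces $t\nmid c$, i.e.\ $c(0)\ne 0$. The paper's route is slightly cleaner (it shows the output of Theorem~\ref{PNF} already satisfies (v) with no adjustment), while yours is more self-contained (it avoids a further citation). Both are fine.
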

\begin{proof}
Since $t$ is a unimonomial over $k$, it is transcendental over $k$ and $\sigma$ is an automorphism of $k[t]$.
Hence by Theorem~\ref{PNF}, there are polynomials $a,b,c\in k[t]\setminus\{0\}$ such that \eqref{eq:decomp} holds
and conditions (i)--(iii) are satisfied. 

If $t$ is a $\Sigma^*$-monomial over $k$ then it follows from
\cite[Theorem~4]{Karr81} or~\cite[Theorem~3]{Bronstein2000} that 
\[
\semiper{k[t]}{\sigma} = k[t]^\sigma = k.
\]
If $t$ is a $\Pi$-monomial over $k$ then it follows from
\cite[Theorem~4]{Karr81} or~\cite[Theorem~4]{Bronstein2000} that 
\[
\semiper{k[t]}{\sigma}\ =\ k[t]^\sigma\ =\ \{f\,t^m: f \in k, m \in \bb{Z}, m \ge 0\}.
\]
In either case, Theorem~\ref{PNF} implies that condition (iv) holds as well.
Finally, from~\cite[Theorem~5]{Bronstein2000} (and using the assumption that $t$ is a \piE-monomial) it follows that $f\in k[t]$ has
infinite spread iff it is divisible by $t$. Since $\spr_\sigma (c)$ is finite, 
we conclude that $c(0)\neq 0$.
\end{proof}

\begin{lemma}
\label{lm:lctc}
Let $(k,\sigma)$ be a difference field and $t$ be a \pisiSE-monomial over $k$ with $\sigma(t)=\alpha\,t+\beta$. Let
$p \in k[t]$ be nonzero and $n \ge 0$ be an integer. Then
\begin{enumerate}
\item[\rm (i)]
$\deg (\sigma^n p)\ =\ \deg p$,
\item[\rm (ii)]
$\lc(\sigma^n p)\ =\ \sigma^n(\lc(p)) \,\dfact{(\alpha^{\deg p})}{n}{\sigma}$,
\item[\rm (iii)]
$\deg (\dfact{p}{n}{\sigma})\ =\ n \deg p$,
\item[\rm (iv)]
$\lc(\dfact{p}{n}{\sigma})\ =\ \dfact{\lc(p)}{n}{\sigma}\dfact{(\alpha^{\deg p})}{n}{\sigma,\sigma}$.
\end{enumerate}
\end{lemma}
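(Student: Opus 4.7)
The plan is to prove (i) and (ii) together by induction on $n$, and then derive (iii) and (iv) as immediate consequences by multiplicativity across the factors of $\dfact{p}{n}{\sigma} = \prod_{i=0}^{n-1}\sigma^i p$.

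For the base case $n=1$ of (i) and (ii), write $p = \sum_{j=0}^{d} c_j t^j$ with $d = \deg p$ and $c_d = \lc(p) \ne 0$. Since $\sigma(t) = \alpha t + \beta$ with $\alpha \in k^*$, applying $\sigma$ gives $\sigma(p) = \sum_{j=0}^d \sigma(c_j)(\alpha t + \beta)^j$, whose leading term is $\sigma(c_d)\alpha^d\, t^d$. Because $\sigma$ is a field automorphism and $\alpha \ne 0$, this leading coefficient is nonzero, so $\deg(\sigma p) = d$ and $\lc(\sigma p) = \sigma(\lc p)\,\alpha^d$, matching (i)--(ii) for $n=1$ (recall that $\dfact{(\alpha^d)}{1}{\sigma} = \alpha^d$). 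The case $n=0$ is trivial since $\dfact{(\alpha^d)}{0}{\sigma} = 1$.

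For the inductive step, assume the claims hold for $n-1$. Applying $\sigma$ once more to $\sigma^{n-1}p$, the base case yields $\deg(\sigma^n p) = \deg(\sigma^{n-1}p) = \deg p$ and
\[
\lc(\sigma^n p) \;=\; \sigma\bigl(\lc(\sigma^{n-1}p)\bigr)\,\alpha^{\deg p}
\;=\; \sigma\bigl(\sigma^{n-1}(\lc p)\,\dfact{(\alpha^{\deg p})}{n-1}{\sigma}\bigr)\,\alpha^{\deg p}.
\]
Using that $\sigma\bigl(\dfact{a}{n-1}{\sigma}\bigr) = \prod_{i=1}^{n-1}\sigma^i(a)$, the factor $\alpha^{\deg p} = \sigma^0(\alpha^{\deg p})$ fills in the missing $i=0$ term, so that
\[
\lc(\sigma^n p) \;=\; \sigma^n(\lc p)\,\prod_{i=0}^{n-1}\sigma^i(\alpha^{\deg p})
\;=\; \sigma^n(\lc p)\,\dfact{(\alpha^{\deg p})}{n}{\sigma},
\]
which is exactly (ii); this also re-establishes (i).

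Finally, for (iii) and (iv), expand $\dfact{p}{n}{\sigma} = \prod_{i=0}^{n-1}\sigma^i p$. Part (i) gives $\deg(\sigma^i p) = \deg p$ for each $i$, and since $k[t]$ is an integral domain, degrees add, proving (iii). Multiplying leading coefficients and using (ii),
\[
\lc\bigl(\dfact{p}{n}{\sigma}\bigr)
\;=\; \prod_{i=0}^{n-1}\lc(\sigma^i p)
\;=\; \prod_{i=0}^{n-1}\sigma^i(\lc p)\cdot\prod_{i=0}^{n-1}\dfact{(\alpha^{\deg p})}{i}{\sigma}
\;=\; \dfact{\lc(p)}{n}{\sigma}\,\dfact{(\alpha^{\deg p})}{n}{\sigma,\sigma},
\]
which is (iv). I do not expect any serious obstacle here; the only delicate point is tracking the $\alpha$-factors accurately through the induction, which is handled by the observation that premultiplying $\sigma(\dfact{a}{n-1}{\sigma})$ by $a$ reconstructs $\dfact{a}{n}{\sigma}$.
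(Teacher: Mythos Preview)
Your proof is correct and follows essentially the same approach as the paper. The only minor organizational difference is that the paper computes $\sigma^n t = \dfact{\alpha}{n}{\sigma}\,t + a_n$ in one step and then reads off (i)--(ii) directly, whereas you reach the same formulas by an explicit induction on $n$; for (iii)--(iv) both arguments multiply the already-established leading-term information across the factors of $\dfact{p}{n}{\sigma}$.
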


\begin{proof}
Write $p = \sum_{j=0}^d p_j t^j$ with $p_d \ne 0$. Since $\sigma t = \alpha t + \beta$,
$\sigma^n t = \dfact{\alpha}{n}{\sigma} t + a_n$ for some $a_n \in k$.
Therefore, $\sigma^n(p_j t^j) = \sigma^n(p_j) (\dfact{\alpha}{n}{\sigma}t + a_n)^j$, which implies that
$\sigma^n(p) = \sigma^n(p_d)\dfact{(\alpha^d)}{n}{\sigma} t^d + q_n$ where $q_n \in k[t]$
and $\deg(q_n) < d$. This proves (i) and (ii). Furthermore,
$$
\dfact{p}{n}{\sigma}\ =\ \prod_{i=0}^{n-1} \sigma^i p
\ =\ \prod_{i=0}^{n-1} (\sigma^i(p_d)\dfact{(\alpha^d)}{i}{\sigma} t^d + q_i)
\ =\ \dfact{p_d}{n}{\sigma}\dfact{(\alpha^d)}{n}{\sigma,\sigma} t^{nd} + q
$$
where $q \in k[t]$ and $\deg(q) < nd$. This proves (iii) and (iv).
\end{proof}

We are now ready to present our main result to compute hypergeometric solutions over a \pisiSE-monomial.
We note that a specialized version of this result has been also utilized in~\cite{Petkovsek1992,APP:98,BP:99}.

\begin{theorem}
\label{th:hyper}
Let $(k,\sigma)$ be a difference field and $t$ be a \pisiSE-monomial over $k$.
\begin{enumerate}
\item[\rm (i)]
If we can compute all the hypergeometric candidates for equations with coefficients in $k$,
then we can compute all the hypergeometric candidates for equations with coefficients in $k(t)$.
\item[\rm (ii)]
If, in addition, we can also compute a basis for the solutions in $k[t]$ (if $t$ is a \sigmaSE-monomial)
or in $k[t^{-1}]$ (if $t$ is a \piE-monomial) of equations with coefficients in $k(t)$,
then we can compute
all the hypergeometric solutions of equations with coefficients in $k(t)$.
\end{enumerate}
\end{theorem}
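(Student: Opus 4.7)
The plan is to combine the normal form of Corollary~\ref{Cor:NormalFormPiSi} with the divisibility bounds of Lemma~\ref{lm:abdiv} in order to parameterize every hypergeometric right-factor root in $k(t)$ by finitely many data over $k$, and then to invoke the hypothesis on $k$ for part (i) and the extra polynomial-solver hypothesis on $k(t)$ for part (ii) to resolve the remaining freedom.

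For part (i), after clearing denominators so that $L=\sum_{i=0}^n a_iE^i \in k[t][E;\sigma]$ with $a_0a_n\neq 0$, any right-factor root $r\in k(t)^*$ decomposes as $r=(a/b)(\sigma c/c)$ by Corollary~\ref{Cor:NormalFormPiSi}, and Lemma~\ref{lm:abdiv} forces $a\mid a_0$, $b\mid \sigma^{1-n}a_n$. Hence the monic parts $\tilde a,\tilde b$ of $a,b$ live in an explicit finite set. I would fix such a pair and write $a=\lambda_a\tilde a$, $b=\lambda_b\tilde b$, $\lambda:=\lambda_a/\lambda_b\in k^*$; substitution into identity~(\ref{eq:abc}) gives
\[
\sum_{i=0}^n a_i\,\dfact{\lambda}{i}{\sigma}\,\dfact{\tilde a}{i}{\sigma}\,\dfact{(\sigma^i\tilde b)}{n-i}{\sigma}\,\sigma^i c\ =\ 0 \qquad\text{in }k[t].
\]
Extracting the coefficient of the top power of $t$ via Lemma~\ref{lm:lctc} first pins down $\deg\tilde a$ relative to $\deg\tilde b$ (in the $\Sigma^*$-case they must coincide; in the $\Pi$-case the difference is controlled by $\alpha$); then, absorbing the surviving $\dfact{(\alpha^{\deg(\cdot)})}{i}{\sigma,\sigma}$-factors and the quotient $\sigma^i\lc(c)/\lc(c)$ into a single rescaled unknown $\mu\in k^*$, the top-coefficient equation takes the form $\sum_{i=0}^n \hat a_i\,\dfact{\mu}{i}{\sigma}=0$ with $\hat a_i\in k$ computable from $a_i,\tilde a,\tilde b$. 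By Lemma~\ref{lm:riccati}, $E-\mu$ is a right factor of $\hat L:=\sum\hat a_iE^i\in k[E;\sigma]$; the hypothesis on $k$ then yields finitely many candidates for $\mu$, hence for $\lambda$ up to $\sigma w/w$-equivalence in $k^*$. Taking the union over all feasible pairs $(\tilde a,\tilde b)$ produces the required finite set $S\subset k(t)$ of hypergeometric candidates for $L$.

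For part (ii), once $S$ is in hand, Theorem~\ref{th:riccati} reduces the task to computing, for each $u\in S$, a $C$-basis of the $k(t)$-solutions of $L_u$ as in~\eqref{Equ:Lu}. The extra hypothesis supplies only solutions living in $k[t]$ (resp.\ $k[t^{-1}]$), so I plan to reduce to that case by a denominator-absorption step: writing any $v\in k(t)^*$ as $v=p/q$ with $p\coprime q$ in $k[t]$, the identity
\[
r\ =\ u\,\frac{\sigma v}{v}\ =\ \left(u\,\frac{q}{\sigma q}\right)\frac{\sigma p}{p}
\]
shows that replacing the candidate $u$ by $u':=u\,q/\sigma q$ reduces the search to polynomial $v=p$. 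The relevant denominators $q$ form a finite effective set: in the $\Sigma^*$-case this follows by applying the normal form to $L_{u'}$ and using the analogue of Lemma~\ref{lm:abdiv}; in the $\Pi$-case conditions (iv) and (v) of Corollary~\ref{Cor:NormalFormPiSi} confine $q$ (up to scalars) to a pure power of $t$, which is precisely what searching in $k[t^{-1}]$ rather than $k[t]$ encodes. Enlarging $S$ by the finitely many corrections $q/\sigma q$ and invoking the extra hypothesis on each $L_{u'}$ then yields the desired polynomial numerators $p$, and hence a basis of the solution space as required.

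The main obstacle I expect is the leading-coefficient bookkeeping in part (i), particularly in the $\Pi$-case, where the contributions $\dfact{(\alpha^{\deg(\cdot)})}{i}{\sigma,\sigma}$ from Lemma~\ref{lm:lctc}(iv) must be woven together with $\sigma^i\lc(c)/\lc(c)$ in precisely the right way so that $\hat L$ has coefficients cleanly in $k$ and so that $\mu$ successfully separates the $\lambda$-dependence from the $c$-dependence. A secondary difficulty in part (ii) is proving that the enlargement of $S$ by denominator corrections remains finite and effectively computable; this should follow from Corollary~\ref{Cor:NormalFormPiSi}(iv) (finiteness of $\spr_\sigma(c)$) together with the Lemma~\ref{lm:abdiv}-type bounds applied to $L_u$.
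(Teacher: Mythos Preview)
Your overall strategy for part~(i) matches the paper's: normal form, Lemma~\ref{lm:abdiv}, leading-coefficient extraction via Lemma~\ref{lm:lctc}, then invoke the hypothesis on $k$. But the step ``Extracting the coefficient of the top power of $t$ \ldots\ first pins down $\deg\tilde a$ relative to $\deg\tilde b$ (in the $\Sigma^*$-case they must coincide)'' is wrong and would discard most candidates. In Example~\ref{Exp:HyperDFHn4} (a $\Sigma^*$-case) the only surviving pair has $\deg a=1$, $\deg b=0$. The paper imposes no such constraint: one loops over \emph{all} monic divisor pairs $(a,b)$ with $a\mid a_0$, $b\mid\sigma^{1-n}a_n$; for each pair one sets $\mu=\max_i\tau(i)$ with $\tau(i)=\deg a_i+i\deg a+(n-i)\deg b$ and forms $L_{a,b}=\sum_{\tau(i)=\mu}\lc(a_i)\,\dfact{(\alpha^{\deg a-\deg b})}{i}{\sigma,\sigma}\,E^i$ (up to a common factor). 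The ``absorbing'' step you anticipate is exactly the substitution $\tilde z=z\,\alpha^{\deg c}$, which cleanly separates the $c$-dependence and yields an operator with coefficients in $k$; this is the bookkeeping you flag as the main obstacle, and the paper's use of \eqref{lm:ssnssi} and Lemma~\ref{lm:lctc}(iv) is what makes it come out.

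Your part~(ii) takes an unnecessary detour and the finiteness argument for the denominators is circular (you define $u'$ via $q$, then appeal to the normal form of $L_{u'}$ to bound $q$). The paper avoids this entirely. Because the candidates produced in part~(i) already incorporate the factor $a/b$, the residual unknown is precisely $vc$ (with $v\in k^*$, $c\in k[t]$ monic), which lies in $k[t]$ by construction in the $\Sigma^*$-case. In the $\Pi$-case one uses $\alpha^{-\deg c}=\sigma(t^{-\deg c})/t^{-\deg c}$ to rewrite the residual unknown as $vct^{-\deg c}\in k[t^{-1}]$ (and $c(0)\neq 0$ from Corollary~\ref{Cor:NormalFormPiSi}(v) guarantees this is genuinely a polynomial in $t^{-1}$). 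Thus the extra hypothesis applies directly to the operator $L_{a,b,u}$ of~\eqref{Equ:Labu}; no enlargement of $S$ by denominator corrections is needed.
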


\begin{proof}
Let $t$ be a \pisiSE-monomial over $k$ with $\sigma\,t=\alpha\,t+\beta$
where $\alpha \in k^*$ and $\beta \in k$.
Let $L = \sum_{i=0}^n a_i E^i \in k[t][E;\sigma]$ with $a_n \ne 0$.
Let $r \in k(t)^\ast$ be such that $E - r$ is a right factor of $L$
in $k(t)[E;\sigma]$. By Lemma~\ref{lm:riccati}, we have
$\sum_{i=0}^n a_i \dfact{r}{i}{\sigma} = 0$.
From Corollary~\ref{Cor:NormalFormPiSi} it follows by extraction of leading coefficients
that there are $z \in k^*$ and monic polynomials $a,b,c \in k[t]$ such that 
\begin{equation}
\label{eq:zdecomp}
r\ =\ z\, \frac ab \,\frac{\sigma c}c
\end{equation}
and conditions (i)--(v) of Corollary~\ref{Cor:NormalFormPiSi} are satisfied. 
Replacing decomposition (\ref{eq:decomp}) by (\ref{eq:zdecomp}) in the proof of Lemma
\ref{lm:abdiv} changes equation~(\ref{eq:abc}) into
\begin{equation}
\label{Equ:LeadingCoeff}
\sum_{i=0}^n a_i\, \dfact{z}{i}{\sigma}
\dfact{a}{i}{\sigma} \dfact{(\sigma^i b)}{n-i}{\sigma} \sigma^i c = 0\,.
\end{equation}
Let $\mu = \max_{0\le i \le n} \tau(i)$ where $\tau(i) = \deg a_i + i \deg a + (n-i) \deg b$.
Using Lemmas~\ref{lm:ssi} and \ref{lm:lctc} and the fact that $a, b, c$ are monic, we obtain
\begin{eqnarray*}
0\ &=& \mbox{ coefficient of } t^{\mu+ \deg c} \mbox{ in }
{\sum_{i=0}^n a_i\, \dfact{z}{i}{\sigma}
\dfact{a}{i}{\sigma} \dfact{(\sigma^i b)}{n-i}{\sigma} \sigma^i c} \\
&=& \sum_{0 \le i \le n \atop \tau(i) = \mu} \lc(a_i)\,\dfact{z}{i}{\sigma}
\,\lc(\dfact{a}{i}{\sigma})\,\lc\left(\dfact{(\sigma^i b)}{n-i}{\sigma}\right)\,\lc(\sigma^i c) \\
&=& \sum_{0 \le i \le n \atop \tau(i) = \mu} \lc(a_i)\,\dfact{z}{i}{\sigma}
\,\dfact{\left(\alpha^{\deg a }\right)}{i}{\sigma,\sigma}
\,\dfact{\left(\lc(\sigma^i b)\right)}{n-i}{\sigma} \dfact{\left(\alpha^{\deg b }\right)}{n-i}{\sigma,\sigma}
\,\dfact{\left(\alpha^{\deg c }\right)}{i}{\sigma}\\
&=& \sum_{0 \le i \le n \atop \tau(i) = \mu} 
\,\lc(a_i)\,\dfact{\left(\alpha^{\deg a }\right)}{i}{\sigma,\sigma}
\,\left(\dfact{\left(\alpha^{\deg b }\right)}{i}{\sigma}\right)^{\sigma, n-i}
\dfact{\left(\alpha^{\deg b }\right)}{n-i}{\sigma,\sigma}
\,\dfact{\left(z\,\alpha^{\deg c }\right)}{i}{\sigma}\\
&=& \sum_{0 \le i \le n \atop \tau(i) = \mu}
\lc(a_i) \,\dfact{\left(\alpha^{\deg a -\deg b }\right)}{i}{\sigma,\sigma}
\,\dfact{\left(\alpha^{\deg b }\right)}{n}{\sigma,\sigma} \,\dfact{\tilde{z}}{i}{\sigma} 
\end{eqnarray*}
where $\tilde{z}=z\,\alpha^{\deg c }\in k^*$. Then Lemma~\ref{lm:riccati} implies that $E - \tilde{z}$ 
is a right factor of
\begin{equation}\label{Equ:Lab}
L_{a,b} =
\sum_{0 \le i \le n \atop \tau(i) = \mu}
\lc(a_i) \,\dfact{\left(\alpha^{\deg a -\deg b }\right)}{i}{\sigma,\sigma}
\,\dfact{\left(\alpha^{\deg b }\right)}{n}{\sigma,\sigma} \, E^i
\end{equation}
in $k[E;\sigma]$. Since we can compute all the hypergeometric candidates for equations with
coefficients in $k$, we can compute a finite set $S_{a,b} \subset k$ such
that $\tilde{z} = u \sigma(v)/v$ for some $u \in S_{a,b}$ and $v \in k^\ast$.
Then replacing $z$ by $\tilde{z}\alpha^{-\deg c }=\alpha^{-\deg c }u\sigma(v)/v$ in~(\ref{eq:zdecomp})
we establish that
\begin{equation}
\label{eq:uvdecomp}
r\ =\ u\,\alpha^{-\deg c }\ \frac ab\, \frac{\sigma (vc)}{vc}
\ =\begin{cases}
\ u\, \frac ab\, \frac{\sigma(vc)}{vc}&\text{if $t$ is a \sigmaSE-monomial},\\
\ u\, \frac ab\, \frac{\sigma(vct^{-\deg c })}{vct^{-\deg c }}&\text{if $t$ is a \piE-monomial}.
\end{cases}
\end{equation}
By Lemma~\ref{lm:abdiv},
the finite set $\{u\,a/b;\ u \in S_{a,b},\ a \mid a_0,\ b \mid \sigma^{1-n} a_n,\, a, b \mbox{ monic}\}$
is the set of all the hypergeometric candidates for $L$, which proves (i).

At this point, if we assumed that we can compute all the solutions
in $k(t)$ of equations with coefficients in $k(t)$, then we could
compute all the hypergeometric solutions of equations with coefficients
in $k(t)$ by Theorem~\ref{th:riccati}. We assume however only the weaker
condition that we can compute a basis for the solutions in $k[t]$ (if $t$ is a \sigmaSE-monomial)
resp.\ the solutions in $k[t^{-1}]$ (if $t$ is a \piE-monomial) of such equations. For a given candidate $u\,a/b$,
replacing $r$ by its decomposition~(\ref{eq:uvdecomp}) in $\sum_{i=0}^n a_i \dfact{r}{i}{\sigma} = 0$ we obtain
$L_{a,b,u}(vc) = 0$ if $t$ is a \sigmaSE-monomial, and $L_{a,b,u}(vct^{-\deg c }) = 0$ if $t$ is a \piE-monomial
where
\begin{equation}\label{Equ:Labu}
L_{a,b,u}\ =\
\sum_{i=0}^n a_i\, \dfact{u}{i}{\sigma}
\dfact{a}{i}{\sigma} \dfact{(\sigma^i b)}{n-i}{\sigma} E^i\,.
\end{equation}
Note that $vc \in k[t]$ and $vct^{-\deg c } \in k[t^{-1}]$.
By hypothesis, we can compute a basis for the solutions in $k[t]$, resp.\ in $k[t^{-1}]$
of $L_{a,b,u}$, so doing this for each of the finitely many candidates $u a/b$
we can compute all the hypergeometric solutions of $L$, which proves (ii).
\end{proof}

\begin{remark}\label{Remark:SwithToPolynomialCase}
\rm
In the proof of Theorem~\ref{th:hyper} we looked at the leading coefficient in equation~\eqref{Equ:LeadingCoeff}.
However, if $t$ is a \piE-monomial, one can also look at the trailing coefficient in this equation. 
In this way, the factor $\alpha^{\deg c }$ depending on the degree of $c$ can be avoided. It is this extra factor
that leads to the problem of finding solutions in $k[t^{-1}]$ instead of in $k[t]$. However, this is not
a restriction at all:
If $t$ is a \piE-monomial with $\sigma(t)=\alpha\,t$, then also $\hat{t}:=t^{-1}$ is a \piE-monomial with 
$\sigma(\hat{t})=\alpha^{-1}\hat{t}$. Thus finding solutions in $k[t^{-1}]$ --- as required in Theorem~\ref{th:hyper} --- 
is equivalent to finding solutions in $k[\hat{t}]$. Problems of this type will be solved in the next section; see also Example~\ref{Exp:HyperDFn!6}(3).
\end{remark}

\begin{example}\label{Exp:HyperDFHn4}
\rm
We reconsider Examples~\ref{Exp:HyperDFHn1}--\ref{Exp:HyperDFHn3} in the light of the proof of Theorem~\ref{th:hyper}.
Following its underlying procedure we compute a finite set of hypergeometric candidates for $L$ with the goal to find all its hypergeometric solutions.
First, take all monic factors that divide $a_0$ and all monic factors that divide $\sigma^{1-2}(a_2)$ and collect them in $A$ and $B$, respectively. 
Clearing in addition all denominators (w.r.t.\ $x$) and defining $\bar{\alpha}=1+h+h x$ and $\bar{\beta}=3+2 h+2 x+3 h x+h x^2$
we get the sets
\begin{align*}
A&=\{1,\bar{\alpha},\bar{\alpha}^2,\bar{\beta},\bar{\alpha}\,\bar{\beta},\bar{\alpha}^2\,\bar{\beta},\bar{\beta}^2,\bar{\alpha}\,\bar{\beta}^2,\bar{\alpha}^2\,\bar{\beta}^2\},\\
B&=\{1,h,h\,x-1,h(h\,x-1)\}.
\end{align*}
Next, we have to loop through all $(a,b)\in A\times B$ and compute for each operator $L_{a,b}$ given in~\eqref{Equ:Lab} (with $\alpha=1$) a finite set of hypergeometric candidates. For instance, for $(a,b)=(\bar{\alpha},1)$ (note that here we do not insist that $a$ is monic) we get (after clearing common factors) the operator
\begin{equation*}
L_{\bar{\alpha},1}=E^0-(3 + 2 x)\,E^1+(2 + x)^2\,E^2
\end{equation*}
and can extract the following set of hypergeometric  candidates for $L_{\bar{\alpha},1}$: $S_{\bar{\alpha},1}=\{1,\frac1{x+1},\frac1{(x+1)^2}\}$ by recursion (for the base case see Theorem~\ref{Thm:GroundField} below). Computing for each $a,b\in A\times B$ such a set $S_{a,b}$ of hypergeometric candidates, one can produce a set of hypergeometric candidates for $L$ by  
\begin{equation}\label{Equ:HgSetCalc}
S=\{\tfrac{a}{b}\,u:\,(a,b)\in A\times B, u\in S_{a,b}\}.
\end{equation}
In particular, $S$ contains the elements of $\{\frac{\bar{\alpha}}{1}\,u:u\in S_{\bar{\alpha},1}\}$. 
Among all candidates for $L$ only the candidate $\frac{\bar{\alpha}}{1}\frac1{x+1}=\frac{1+h+h x}{x+1}$ contributes.
In other words, after some checks (see also Example~\ref{Exp:HyperDFHn5}) we can restrict $S$ to the minimal set $\{\frac{1+h+h x}{x+1}\}$; compare Example~\ref{Exp:HyperDFHn1}.
More precisely, only for $(a,b,u)=(\bar{\alpha},1,\frac1{x+1})$ the operator~\eqref{Equ:Labu} has a nonzero solution in $\mathbb{Q}(x)[h]$. Namely, for $L_{\bar{\alpha},1,\frac{1}{x+1}}$, which is nothing else than~\eqref{Equ:HnPolyRecurrence},
we compute the set of solutions~\eqref{Equ:HnPolyRecurrenceSol}. Note that in Example~\ref{Exp:HyperDFHn3} (using Theorem~\ref{th:riccati}) we searched for solutions in $\set Q(x)(h)$. By the construction of the tuples $(a,b)$ and Theorem~\ref{th:hyper} it follows that the desired solutions have to be searched only in $\set Q(x)[h]$. As claimed earlier in Example~\ref{Exp:HyperDFHn2} it follows that
$E-r_j$ with~\eqref{Equ:rj}
are all right-hand factors of $L$. Together with Example~\ref{Exp:HyperDFHn2} we conclude that~\eqref{Equ:HnSol} are all hypergeometric solutions of the recurrence given in Example~\ref{Exp:RecExp}(I).
\end{example}

\begin{example}\label{Exp:HyperDFn!1}
\rm
In $(k,\sigma)$ from Example~\ref{Exp:FactorialDF} the factorials $m!$ with $(m+1)!=(m+1)m!$ can be rephrased by the variable $p$, and the linear recurrence in Example~\ref{Exp:RecExp}(II) can be represented by the linear difference operator~\eqref{Equ:Operator2} with 
\begin{align*}
a_0&=-2 p^2 (1+x)^2 (2+x)(
7
+3 p
+6 x
+5 p x
+x^2
+2 p x^2
),\\
a_1&=p (1+x) (2+x) (
16
+7 p
+16 x
+12 p x
+3 x^2
+4 p x^2
),\\
a_2&=-(2
+p
+4 x
+2 p x
+x^2).
\end{align*}
While computing a finite set of hypergeometric candidates for $L$ only the factors $a=p\mid a_0$ and $b=1\mid \sigma^{1-2}(a_2)$ contribute. We consider the operator $L_{a,b}$ defined in~\eqref{Equ:Lab} for $a=p,b=1$ which gives
$$L_{p,1}=-2 (1+x)^2 (2+x) (3+2 x)E^0+(2+x) (7+12 x+4 x^2)\,E^1-(1+2 x)\,E^2.$$
From this operator we can extract the hypergeometric candidates 
$1+x$ and $2 (1+x)^2$ that are relevant.
Thus we look at the corresponding operators~\eqref{Equ:Labu} with $(a,b,u)\in\{(p,1,x+1),(p,1,2(x+1)^2)\}$:\begin{align*}
L_{p,1,x+1}=&-2 (7 + 3 p + 6 x + 5 p x + x^2 + 2 p x^2)\,E^0\\
  &\hspace*{1cm}+ 
	(16 + 7 p + 16 x + 12 p x + 3 x^2 + 4 p x^2)E\\
  &\hspace*{2cm}	-(2 + p + 4 x + 2 p x + 
	x^2)E^2,\\
L_{p,1,2(x+1)^2}=&-(7 + 3 p + 6 x + 5 p x + x^2 + 
2 p x^2)\,E^0\\
	&\hspace*{1cm}+(1 + x) (16 + 7 p + 16 x + 12 p x + 3 x^2 + 
	4 p x^2)\,E\\
	&\hspace*{2cm}-2 (1 + x) (2 + x) (2 + p + 4 x + 2 p x + x^2)\,E^2.
\end{align*}
The solutions of the two operators in $\mathbb{Q}(x)[p^{-1}]$ are $1$ and $1+\frac{x^2}{p}$, respectively.  
Thus the factors $E-r_j$ with
$$r_1=(x+1)p\quad\text{ and }\quad r_2=2(x+1)^2p\frac{\sigma((p+x^2)/p)}{(p+x^2)/p}$$
are right-hand factors of $L$, and by Lemma~\ref{lm:riccati} the Riccati equation~\eqref{Equ:Order2Ricc} holds. 
Alternatively, taking $\tilde{r}_1=(x+1)p$, $v_1=1$ and $\tilde{r}_2=2(x+1)p$, $v_2=p+x^2$ we get
$r_j=\tilde{r}\frac{\sigma(v_j)}{v_j}$ for $j=1,2$. In particular, this yields
\begin{equation}\label{Equ:Ricattin!}
a_0\,v_j\dfact{\tilde{r}_j}{0}{\sigma}\,+a_1\,\sigma(v_j)\,\dfact{\tilde{r}_j}{1}{\sigma}+a_2\,\sigma^2(v_j)\dfact{\tilde{r}_j}{2}{\sigma}=0
\end{equation}
for $j=1,2$. Since
\begin{align*}
(\nu+i)!&=\sigma^i(p)|_{p\mapsto \nu!,x\mapsto \nu},\\
\prod_{l=1}^{\nu+i}l!&=(\dfact{\tilde{r}_1}{i}{\sigma}|_{x\mapsto \nu})\prod_{l=1}^{\nu}l!,\\
\prod_{l=1}^{\nu+i}(2\cdot l!)&=(\dfact{\tilde{r}_2}{i}{\sigma}|_{x\mapsto\nu})\prod_{l=1}^{\nu}(2\cdot l!)
\end{align*}
for $\nu,i\in\set N$, we conclude from~\eqref{Equ:Ricattin!} that~\eqref{Equ:n!Sol} are solutions of the recurrence given in Example~\ref{Exp:RecExp}(II).
\end{example}

\begin{remark}\label{Remark:Improvement}
\rm
Following the constructive proof of Theorem~\ref{th:hyper} the size of the set of candidates for an operator $L=\sum_{i=0}^na_iE^i\in k[t][E;\sigma]$ can get very large. First, the set $A\times B$ might be large where $A$ contains all monic factors $a\in k[t]$ with $a\mid a_0$ and $B$ contains all monic factors $b\in K[t]$ with $b\mid\sigma^{-n+1}(a_n)$ (as already exemplified in Example~\ref{Exp:HyperDFHn4}, it might be convenient to drop the constraint that the factors are monic), and second, also the obtained sets of hypergeometric candidates $S_{a,b}$ for the underlying operators $L_{a,b}\in k[E;\sigma]$ might be of considerable size. Thus the set~\eqref{Equ:HgSetCalc} of hypergeometric candidates for $L$ might increase dramatically. The following strategies  implemented in the summation package \texttt{Sigma}  (compare also~\cite[page~155]{PWZ}) might lead to smaller sets $S$:\\ 
(1) One can drop all tuples $(a,b)\in A\times B$ where only one summand in~\eqref{Equ:Lab} remains, i.e., if $|\{0\leq i\leq n\mid \tau(i)=\mu\}|=1$ holds.\\
(2) In case that one can compute the spread in $k[t]$, one can remove all pairs $(a,b)\in A\times B$ with $\spr_\sigma (a,b) \neq \emptyset$. By Theorem~\ref{Thm:SigmaComputable} this is, e.g., possible if the ground field $(k,\sigma)$ is $\sigma$-computable (see Definition~\ref{Def:computable}). In particular, one can compute $\spr_\sigma(a,b)$ if $(k,\sigma)$ is a \pisiSE-field where the constant field $C$ possesses certain algorithmic properties (see Theorem~\ref{Thm:GroundField} below).\\
(3) Looking at~\eqref{eq:uvdecomp} it follows that only those pairs $(a,b)\in A\times B$ need to be considered that can be combined to a hypergeometric solution of $L_{a,b}$. In particular, given the hypergeometric solutions of $L_{a,b}$, one can reuse this information for the computation of the hypergeometric solutions of $L$ itself.\\
(4) Consider the right factor $E-r$ of $L$ with $r=\frac{a}{b}\,\frac{\sigma(v)}{v}$. In (3) we suppose that $\spr_{\sigma}(a,b)=\emptyset$. Using ideas from~\cite{Schneider05c,Petkov:10} one can even find $a,b\in k[t]$ and $v\in k(t)$ with $r=\frac{a}{b}\,\frac{\sigma(v)}{v}$ such that $\spr_{\sigma}(a,b)=\emptyset$ and $\spr_{\sigma}(b,a)=\emptyset$. In other words, for any irreducible factors $\alpha,\beta\in k[t]$ with $\alpha\mid a$ and $\beta\mid b$ there does not exist an $l\in\set Z$ with  $\frac{\sigma^l(\alpha)}{\beta}\in k$. In this case we also say that $\alpha$ and $\beta$ are $\sigma$-coprime. As a consequence, it suffices to search for hypergeometric candidates $\frac{a}{b}$ where the irreducible factors in $a,b$ are $\sigma$-coprime. 
Using this extra insight, one can often decrease the number of candidates. But one has to pay a price: in general, $v$ is no longer a polynomial in $t$, but a rational function in $t$. In particular, one has to look for solutions of~\eqref{Equ:Labu} not in $k[t]$ (resp.\ $k[t^{-1}]$), but in $k(t)$. However, the solver in \texttt{Sigma} is rather efficient and this extra complication is often 
negligible compared to the advantage of obtaining a smaller number of candidates.
\end{remark}

\begin{example}\label{Exp:HyperDFHn5}
\rm
In Example~\ref{Exp:HyperDFHn4} there are $9\cdot 4=36$ tuples in $A\times B$. Among them one can remove all tuples $(a,b)$ with $\deg(a)-\deg(b)\neq1$ due to improvement~(1) in Remark~\ref{Remark:Improvement}. From the remaining $10$ tuples we can exclude all those where $\spr_{\sigma}(a,b)\neq\emptyset$; see improvement~(2) in Remark~\ref{Remark:Improvement}. With
$$\sigma(h)=\frac{1 + h + h x}{1 + x}\quad\text{ and }\quad \sigma^2(h)=\frac{
3 + 2 h + 2 x + 3 h x + h x^2}{(1 + x) (2 + x)}$$
we conclude that one can restrict the candidates further to $(a,b)\in\{(1+h+hx,1),(3 + 2 h + 2 x + 3 h x + h x^2,1)\}$.
\end{example}

\begin{example}\label{Exp:HyperDFHn5b}
\rm
Using in addition refinement (4) of Remark~\ref{Remark:Improvement}, it suffices to take the sets $A=\{1,h,h^2,h^3,h^4\}$ and $B=\{1,h,h^2\}$. In particular, only the $7$ pairs  $(1,1),(h,1),(h^2,1),(h^3,1),(h^4,1),(1,h),(1,h^2)$ of $A\times B$ have to be considered. Applying also improvement~(1) of Remark~\ref{Remark:Improvement} one can restrict this set further to $(h,1)$. This feature is illustrated also within the Mathematica session~\myIn{\ref{MMA:RatCase}} of Example~\ref{Exp:MMAHyper}.
\end{example}

\section{Rational solutions}\label{Sec:RationalSolutions}

Throughout this section,
$(k,\sigma)$ is a difference field and $t$ is a \pisiSE-monomial over $k$.
The results of the previous sections have reduced the problem of
computing hypergeometric solutions of linear ordinary difference equations with coefficients in $k(t)$ to 
computing all solutions in $k[t]$ of linear ordinary 
difference equations with coefficients in $k(t)$ and computing all the hypergeometric candidates
for equations with coefficients in $k$.

Our ultimate goal is to find hypergeometric solutions when $(k,\sigma)$ is given by a tower of \pisiSE-monomials 
over a difference field $(K,\sigma)$ where certain algorithmic subproblems can be handled in $K$; in particular, 
when $(k,\sigma)$ is a \pisiSE-field over $C$ where $C=\Const_{\sigma}(k)$. In this section we work out how
to compute all solutions in $k[t]$ of linear ordinary difference equations with coefficients in $k(t)$ under
the assumption that certain subproblems can be handled in $k$. Here the essential idea is to solve various linear
difference equations in a subfield (with fewer \pisiSE-monomials) and to combine the solutions to obtain a solution 
in the larger field. In order to accomplish this task, we consider a slightly more general problem: We need to find
solutions not in $k[t]$, but in $k(t)$. Moreover, within the reduction we are faced with the {\em problem
of solving parameterized linear difference equations}. To state this problem, let $V \subseteq k$ be a
linear subspace of $k$ over $C=\Const_\sigma(k)$.

\noindent Problem PLDE (Parameterized Linear Difference Equations) with solutions in~$V$:\\[0.2cm]
{\sc Given:} \ $0\neq a=(a_0,a_1,\dots,a_\ell)\in k^{\ell+1}$
and $b=(b_1,b_2,\dots,b_m)\in k^m$.\\[0.2cm]
{\sc Find:} \ a basis of the $C$-linear subspace ${\mathcal V}(a,b,V)$ of $k\times C^m$
of all solutions $(g,c_1,c_2,\dots,c_m) \in V\times C^m$ of
\begin{equation}
\label{PLDE}
a_\ell\, \sigma^\ell g + \cdots + a_1\, \sigma g + a_0 g\ =
\ c_1\,b_1 + c_2\,b_2 + \cdots + c_m b_m.
\end{equation}
Note that $\dim {\mathcal V}(a,b,V) \le \ell+m$ .
In the special case $V=k$, we say that we can {\em compute all solutions of parameterized linear difference equations
with coefficients in $k$\/} if, given any $a\in k^{\ell+1}$ and $b\in k^m$, we can compute a basis of the solution
space ${\mathcal V}(a,b,k)$.


\begin{example}\label{Exp:HyperDFHn6}
\rm
In Example~\ref{Exp:HyperDFHn3} (resp.~Example~\ref{Exp:HyperDFHn1}) we needed the solutions~\eqref{Equ:HnPolyRecurrenceSol} in $\mathbb{Q}(x)[h]$ of the operator~\eqref{Equ:HnPolyRecurrence}.
In the above notation, this problem can be encoded by the solution space
$${\mathcal V}((a_0,a_1,a_2),(0),\mathbb{Q}(x)[h])=\{(g,c)\in\mathbb{Q}(x)[h]\times\set Q : L_{\alpha,1,\frac{1}{x+1}}(g)=c\,0\}$$
with 
\begin{equation}\label{Equ:HnCoeff}
\begin{split}
a_0=&(1 + h + h x) (3 + 2 h + 2 x + 3 h x + h x^2),\\
a_1=&-h (3 + 2 x) (3 + 2 h + 2 x + 3 h x + h x^2),\\ 
a_2=&h (2 + x)^2 (1 + h + h x).
\end{split}
\end{equation}
Using our algorithmic machinery from below, it follows
that 
\begin{equation}\label{Equ:BasisHn}
\{(h,0),(h^2,0),(0,1)\}
\end{equation} 
is a basis of the vector space ${\mathcal V}((a_0,a_1,a_2),(0),\mathbb{Q}(x)[h])$ over $\mathbb{Q}$. As a consequence, we get~\eqref{Equ:HnPolyRecurrenceSol}.
\end{example}

\begin{example}\label{Exp:HyperDFn!2}
\rm
In Example~\ref{Exp:HyperDFn!1} we needed all solutions in $\mathbb{Q}(x)[p^{-1}]$ of 
the operator $L_{p,1,2(x+1)^2}=a_0\,E^0+a_1\,E^1+a_2\,E^2$ with 
\begin{equation}\label{Equ:LSubFieldn!}
\begin{split}
a_0=&-(7 + 3 p + 6 x + 5 p x + x^2 + 2 p x^2),\\ 
a_1=&(1 + x) (16 + 7 p + 16 x + 12 p x + 3 x^2 + 4 p x^2),\\
a_2=&-2 (1 + x) (2 + x) (2 + p + 4 x + 2 p x + x^2).
\end{split}
\end{equation}
In other words, we need a basis of the solution space
\begin{equation}\label{Equ:Vn!}
{\mathcal V}((a_0,a_1,a_2),(0),\mathbb{Q}(x)[p^{-1}])=\{(g,c)\in\mathbb{Q}(x)[p^{-1}]\times\set Q: L_{p,1,2(x+1)^2}(g)=c\,0\}.
\end{equation}
Using the algorithmic machinery given below we compute the basis
\begin{equation}\label{Equ:Basisn!}
\{(0,1),(1+\tfrac{x^2}{p},0)\} 
\end{equation}
which gives the solution $g=1+\frac{x^2}{p}$ of $L_{p,1,2(x+1)^2}(g)=0$. 
\end{example}

\begin{remark}\label{Remark:Summation}
\rm
Problem PLDE plays an important role in symbolic summation \cite{PWZ}. Apart from solving difference 
equations, it contains parameterized telescoping (where $\ell=1,a_0=-1,a_1=1$) and hence Zeilberger's creative
telescoping paradigm~\cite{Zeilberger:91}. It also covers an important subproblem in summation of holonomic functions~\cite{Zeilberger:90a,Chyzak:00,Koutschan:13}. This observation has been explored further in the setting of \pisiSE-fields in \cite{Schneider:07a,Schneider:13a}.   
Concrete examples with this refined toolbox can be found, e.g., in the proof of Lemma~\ref{Lemma:DoubleSum}. 
\end{remark}

Before stating the main result of this section (Theorem~\ref{Thm:RatSolver}), we need to introduce the
following associated problem.

Let $(k,\sigma)$ be a difference field.  We call $u,v \in k$ \emph{similar} (and write $u\sim_{k,\sigma} v$)
if $u = v\, \sigma(w)/w$ for some $w \in k^\ast$.
The \emph{pseudo-orbit} problem is, given $u,v \in k^\ast$, to
compute the set
\begin{equation}
\label{Gamma}
\Gamma(u,v;k) = \{\gamma\in\mathbb{Z} \st u^\gamma \sim_{k,\sigma} v\}\,.
\end{equation}
When $\sigma$ is the identity on $k$, this problem reduces to the {\em orbit problem},
\ie to the problem of finding all $\gamma \in \bb{Z}$ such that $u^\gamma = v$, which is solved for
``reasonable'' fields in~\cite{AbraBron2000} whenever $u$ is
not a root of unity.

\begin{theorem}\label{Thm:RatSolver}
	Let $(k,\sigma)$ be a difference field and $t$ be a \pisiSE-monomial over $k$. 
	\begin{enumerate}
		\item If one can solve the
		pseudo-orbit problem in $k$, compute
		all the hypergeometric candidates for equations with coefficients in $k(t)$,
		and compute all solutions of parameterized linear difference equations with coefficients in $k$,
		then one can compute all solutions of parameterized linear difference equations with coefficients in $k[t]$.
		\item If one can compute in addition dispersions in $k[t]$, then one can compute all solutions of parameterized linear difference equations with coefficients in $k(t)$.
	\end{enumerate} 
\end{theorem}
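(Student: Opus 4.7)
The strategy is to settle part~(1) by reducing PLDE over $k[t]$ to a degree bound on polynomial solutions plus a single PLDE over $k$, and then to derive part~(2) via denominator bounding.

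\textbf{Part (1).} Let $\sigma(t) = \alpha t + \beta$ with $\alpha\in k^*$ and $\beta\in k$. Once a bound $D \ge \deg g$ is known for any polynomial solution $g$, the ansatz $g = \sum_{j=0}^D g_j t^j$ with unknowns $g_j \in k$, substituted into~\eqref{PLDE} and expanded in powers of $t$, produces a coupled parameterized linear difference equation in the variables $g_0,\ldots,g_D$ and $c_1,\ldots,c_m$ with coefficients in $k$. By the third hypothesis we compute a basis of its $C$-solution space, and re-assembling yields a basis of ${\mathcal V}(a,b,k[t])$.

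The crux is therefore the degree bound. Assume $g \in k[t]$ has degree $d$ with $g_d \ne 0$. By Lemma~\ref{lm:lctc}, $\lc(\sigma^i g) = \sigma^i(g_d)\,\dfact{(\alpha^d)}{i}{\sigma}$, so equating the coefficient of $t^{\mu_a + d}$ in~\eqref{PLDE}, where $\mu_a = \max_i \deg a_i$, yields
\begin{equation*}
\sum_{i:\,\deg a_i = \mu_a} \lc(a_i)\,\dfact{(\alpha^d)}{i}{\sigma}\,\sigma^i(g_d) \ =\ \sum_{j:\,\deg b_j = \mu_a + d} c_j\,\lc(b_j).
\end{equation*}
If the right side is nonzero, then $d$ is forced to one of the finitely many values $\deg b_j - \mu_a$. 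Otherwise the equation is homogeneous; setting $\rho = \sigma(g_d)/g_d$ and using $\dfact{(\alpha^d)}{i}{\sigma}\dfact{\rho}{i}{\sigma} = \dfact{(\alpha^d\rho)}{i}{\sigma}$, Lemma~\ref{lm:riccati} says that $E - \alpha^d\rho$ is a right factor in $k[E;\sigma]$ of $L^{\mathrm{top}} := \sum_{i:\,\deg a_i = \mu_a}\lc(a_i)\,E^i$. If $t$ is a $\Sigma^*$-monomial ($\alpha = 1$) this reduces to $L^{\mathrm{top}}(g_d) = 0$ over $k$, solvable by the third hypothesis; when the $k$-solution space is zero then $d \le \max_j\deg b_j - \mu_a$, and otherwise each basis vector for $g_d$ must be fed back and the next subleading coefficient (of $t^{\mu_a + d - 1}$, and iteratively below) compared, producing successive linear conditions on $d$ that isolate only finitely many admissible values. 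If $t$ is a $\Pi$-monomial ($\beta = 0$), we apply the hypergeometric-candidate hypothesis to $L^{\mathrm{top}}$, viewed inside $k(t)[E;\sigma]$, to obtain a finite set $S \subset k(t)^*$ of candidates $u$ with $\alpha^d\rho = u\,\sigma(w)/w$ for some $w \in k(t)^*$; since $\alpha^d\rho \in k^*$, filtering $S$ down to those $u$ for which the product belongs to $k^*$ yields a finite list of similarity constraints $\alpha^d \sim_{k,\sigma} u'$ in $k$, and the admissible exponents then lie in the finite union $\bigcup_{u'} \Gamma(\alpha, u'; k)$, which is computable by the pseudo-orbit hypothesis. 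Taking $D$ to be the maximum of all finitely many candidate values completes the bound.

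\textbf{Part (2).} For a PLDE with coefficients in $k(t)$, first clear common denominators to assume the coefficients lie in $k[t]$ while still seeking solutions in $k(t)$. Bronstein's denominator-bounding procedure~\cite{Bronstein2000}, extended to the parameterized setting in~\cite{Schneider:05a}, then produces a universal denominator $u \in k[t]$ such that every rational solution has the form $g = f/u$ with $f \in k[t]$; this procedure needs precisely the computation of dispersions in $k[t]$, which is the extra hypothesis of part~(2). Substituting $g = f/u$ and clearing shifts of $u$ yields a PLDE for $f$ with coefficients in $k[t]$, solvable by part~(1). The hardest ingredient of the whole argument is the $\Sigma^*$-branch of the degree-bound analysis when $L^{\mathrm{top}}$ has nonzero $k$-solutions: controlling termination of the iterative subleading procedure is precisely the long-standing problem alluded to in the historical remark of the introduction.
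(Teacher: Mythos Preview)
Your overall architecture matches the paper's: reduce to a degree bound, then extract coefficients, and for part~(2) precede this with a universal denominator. The genuine gap is the point you yourself flag as hardest.

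In the $\Sigma^*$-case you assert that feeding back a basis for $g_d$ and comparing subleading coefficients yields ``successive linear conditions on $d$ that isolate only finitely many admissible values,'' but you supply no argument that this iteration terminates. The paper spends Lemmas~\ref{Lemma:DegBoundn=0} and~\ref{Lemma:DegBound:n->n+1} setting up exactly this procedure and then, in Theorem~\ref{th:SigmaBounds}, proves termination by constructing an auxiliary difference field $k(x)((t^{-1}))(\xi)$ in which $\xi$ behaves like $t^x$ (Lemmas~\ref{Lemma:UnchangedConstantXi} and~\ref{Lemma:ShiftRelationInXi}, Corollary~\ref{Cor:UnchangedConstant}): if the iteration never stops, one can assemble from the accumulated data formal solutions $Y_0,Y_1,\ldots$ of $L(Y)=0$ in this larger field that are linearly independent over its constant field $(\Const_\sigma k)(x)$, contradicting the bound on the solution space coming from the order of $L$. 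Your closing remark that controlling termination ``is precisely the long-standing problem alluded to in the historical remark'' correctly identifies the obstacle but does not overcome it; as written, your proposal is a description of the algorithm without a proof that it halts.

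Two smaller points. First, once a degree bound $D$ is known, the hypothesis only gives you \emph{scalar} PLDE over $k$, not coupled systems; the paper (Step~3 of Section~\ref{Sec:RationalSolutions}, following~\cite{Karr81,Schneider:05a}) exploits the triangular shape in $t$ to peel off one coefficient at a time via a scalar PLDE in $k$, substitute, and recurse on the degree. Second, in part~(2) Bronstein's bound leaves an undetermined factor $t^m$ when $t$ is a $\Pi$-monomial; bounding that $m$ again requires hypergeometric candidates together with the pseudo-orbit problem (Theorem~\ref{th:PiBounds}, Corollary~\ref{Cor:DenBound}), so citing dispersion alone is not sufficient.
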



We prove the general case~(2) of Theorem~\ref{Thm:RatSolver} by giving an algorithm which, given $a \in k(t)^{\ell+1}$ and
$b \in k(t)^m$, finds a basis of the solution space ${\mathcal S} := {\mathcal V}(a,b,k(t))$ of equation~\eqref{PLDE}.
It consists of a preprocessing step, and of three main steps whose description comprises the rest of this section and
is interspersed with a series of auxiliary lemmas, theorems and corollaries which prove correctness of the algorithm. Part~(1) of Theorem~\ref{Thm:RatSolver} will follow by a simplified version of the proposed algorithm.

\textit{0. Preprocessing.} By clearing denominators we may assume that $a \in k[t]^{\ell+1}$ and
$b \in k[t]^m$. Since in the case $\ell=0$ solution of \eqref{PLDE} is straightforward, we may assume that $\ell \geq 1$.

\textit{1. Denominator bounding.} We compute a polynomial $d\in k[t]\setminus\{0\}$ (called a {\em universal
denominator\/} for rational solutions of equation~\eqref{PLDE}) such that for any
$(g, c_1, \dots, c_m) \in {\mathcal S}$, we have $d\,g\in k[t]$.

\begin{remark}
	If one is only interested in polynomial solutions in $k[t]$, one can skip step~1 of the algorithm and can proceed with step~2. This is for instance the case in Example~\ref{Exp:HyperDFHn6}: we are only interested in solutions in $k[t]$ with $k:=\set Q(x)$ and $t:=h$, and not in $k(t)$.
\end{remark}

For the rational case ($\sigma(t)=t+1$ and $k=\Const_\sigma(k)$) and for the $q$-rational case ($\sigma(t)=q\,t$, 
$k=\Const_\sigma(k)$, $q$ not a root of unity) this problem is solved in~\cite{Abramov:89a, Abramov:95a}. 
Bronstein~\cite{Bronstein2000} generalizes these ideas to an algorithm for \pisiSE-extensions under the assumption
that one can compute dispersions in $k[t]$. Namely, using Theorems 8 and 10 of~\cite{Bronstein2000} one can compute a universal denominator if $t$ is a \sigmaSE-monomial, and find a universal denominator -- up to a factor of the form $t^m$ -- if $t$ is a \piE-monomial. This aspect (among others) is also elaborated in~\cite[Theorem~2]{Schneider:04c} in the setting of \piE-monomials; for a generalizaton to coupled higher order systems, see~\cite{MS:2018}. For first-order parameterized difference equations, Karr~\cite{Karr81} computes the extra factor $t^m$ for \piE-monomials
under the assumption that one can solve the pseudo-orbit problem in $k$; a detailed proof can be found in~\cite[Theorem~6]{Schneider:04c}. In order to derive the extra factor $t^m$ for higher-order equations (for special cases see also~\cite[Sec.~4.4]{Schneider:04c}), we
need in addition 
all the hypergeometric candidates for equations with coefficients in $k(t)$; see Theorem~\ref{th:PiBounds}.
In short, by Theorems 8 and 10 of~\cite{Bronstein2000} (or more concretely by~\cite[Theorem~2]{Schneider:04c}) together with Theorem~\ref{th:PiBounds} below we obtain the following result.

\begin{corollary}\label{Cor:DenBound}
Let $(k,\sigma)$ be a difference field and $t$ a \pisiSE-monomial over $k$. 
If one can solve the pseudo-orbit problem in $k$, compute dispersions in $k[t]$, and compute
all the hypergeometric candidates for equations with coefficients in $k(t)$, then one can compute a universal
denominator for solutions of parameterized linear difference equations with coefficients in $k(t)$.
\end{corollary}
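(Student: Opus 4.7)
My plan is to split on the type of the \pisiSE-monomial $t$ and then combine the two off-the-shelf denominator bounds with the new result on \piE-exponents that the paper will prove later.

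First I would treat the \sigmaSE-case, which is the easier one. When $\sigma(t)=t+\beta$ with $\beta\in k$, Theorem~8 of~\cite{Bronstein2000} (or equivalently~\cite[Theorem~2]{Schneider:04c}) produces a universal denominator $d\in k[t]\setminus\{0\}$ for~\eqref{PLDE} directly, given the ability to compute dispersions in $k[t]$ and to solve the auxiliary sub-problems in $k$; both are granted by our hypotheses. Nothing further is needed here.

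In the \piE-case, where $\sigma(t)=\alpha\,t$ with $\alpha\in k^\ast$, I would first apply the analogous cited result (Theorem~10 of~\cite{Bronstein2000}, or~\cite[Theorem~2]{Schneider:04c}) to produce a polynomial $\tilde d\in k[t]$ with $\tilde d(0)\neq 0$ that accounts for every irreducible factor of the denominator of a solution $g$ except possibly $t$ itself. Thus the denominator of any $g$ with $(g,c_1,\dots,c_m)\in\mathcal{V}(a,b,k(t))$ divides $t^m \tilde d$ for some $m\ge 0$ depending on the solution, and the problem reduces to bounding this exponent $m$ uniformly. For first-order parameterized equations, Karr (see also~\cite[Theorem~6]{Schneider:04c}) already supplies such a bound from the pseudo-orbit hypothesis in $k$; for the general higher-order case the required bound will be delivered by Theorem~\ref{th:PiBounds} below, whose hypotheses are precisely the pseudo-orbit capability in $k$ together with the ability to compute all hypergeometric candidates for equations with coefficients in $k(t)$. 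Taking $M$ to be that bound, the product $d:=t^{M}\tilde d$ is a universal denominator for~\eqref{PLDE} over $k(t)$.

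The hard part is clearly not the assembly above but the \piE-exponent bound itself: Theorem~\ref{th:PiBounds} is the new ingredient that closes the longstanding gap, and it is the reason the corollary must assume the computability of all hypergeometric candidates over $k(t)$. Everything else is routine bookkeeping on top of the Bronstein/Schneider denominator-bounding machinery.
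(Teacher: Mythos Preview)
Your proposal is correct and follows essentially the same route as the paper: the text preceding the corollary explicitly says that Theorems~8 and~10 of~\cite{Bronstein2000} (equivalently~\cite[Theorem~2]{Schneider:04c}) give the full denominator in the \sigmaSE-case and the denominator up to a factor $t^m$ in the \piE-case, and that the missing exponent is then bounded via Theorem~\ref{th:PiBounds}. The only cosmetic point is that you reuse the letter $m$ for the unknown $t$-exponent while the ambient PLDE already uses $m$ for the number of parameters; picking a fresh symbol would avoid confusion.
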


\begin{example}\label{Exp:HyperDFHn7}
\rm
	In Example~\ref{Exp:HyperDFHn6} we considered the problem to compute a basis of ${\mathcal V}((a_0,a_1,a_2),(0),\mathbb{Q}(x)[h])$. Using the algorithms from~\cite{Bronstein2000,Schneider:04c} it follows even that $d=1$ is a denominator bound of $V={\mathcal V}((a_0,a_1,a_2),(0),\mathbb{Q}(x)(h))$. This implies that $V={\mathcal V}((a_0,a_1,a_2),(0),\mathbb{Q}(x)[h]).$
\end{example}

\begin{example}\label{Exp:HyperDFn!3}
\rm
Following Example~\ref{Exp:HyperDFn!2} we have to calculate a solution of~\eqref{Equ:Vn!} with~\eqref{Equ:LSubFieldn!}. Using the algorithms given in~\cite{Bronstein2000,Schneider:04c}, it follows even that
\begin{equation}\label{Equ:n!DenBound}
{\mathcal V}((a_0,a_1,a_2),(0),\mathbb{Q}(x)(p))={\mathcal V}((a_0,a_1,a_2),(0),\mathbb{Q}(x)[p^{-1}])
\end{equation}
holds; here $p$ takes over the role of $t$. This means that a universal denominator is given by $d=p^m$ for some $m\in\set N$. Using our new algorithm provided in Theorem~\ref{th:PiBounds} we will compute in Example~\ref{Exp:HyperDFn!6} the denominator bound $d=p^m$ with $m=1$.
\end{example}

Given a universal denominator $d$, we substitute $f/d$ for $g$ in \eqref{PLDE}, and look for all polynomial
solutions $(f, c_1, \dots, c_m) \in k[t] \times C^m$ of the resulting parameterized difference equation
\begin{equation}
\label{Equ:PLDEPoly}
\tilde a_\ell\,\sigma^\ell f + \cdots + \tilde a_1\, \sigma f + \tilde a_0 f\ =
\ c_1\,b_1 + c_2\,b_2 + \cdots + c_m\,b_m
\end{equation}
where $\tilde a_i = a_i/\sigma^i d$ for $i = 0, 1, \ldots, \ell$.
Note that the set $V'$ of all such solutions forms a subspace of $k[t]\times C^m$ over $C$ whose dimension, say $r$,
is bounded by $m+\ell$. Namely, given a basis $\{(f_i, c_{i1}, \ldots, c_{im})\}_{1\leq i\leq r}$ of $V'$, 
$\{(f_i/d, c_{i1}, \ldots, c_{im})\}_{1\leq i\leq r}$ provides a basis of our desired solution space ${\mathcal S}$.

We remark that the coefficients $\tilde a_i$ are usually elements from $k(t)$. Repeating the preprocessing step 0 we clear denominators and may assume that the $\tilde{a}_i$ and $b_i$ are again elements from $k[t]$.

\begin{example}\label{Exp:HyperDFn!4}
\rm
	In Example~\ref{Exp:HyperDFn!3} (see also Example~\ref{Exp:HyperDFn!6})	we derived the denominator bound $d=p$ for~\eqref{Equ:n!DenBound}. Thus the solutions in~\eqref{Equ:n!DenBound} have the form $g=g_{-1}p^{-1}+g_0p^0=\frac{p\,g_{-1}+g_0}{p}$ with $g_{-1},g_0\in\set Q(x)$. So we set $\tilde{a}_i=\frac{a_i}{\sigma^i(p)}=\frac{a_i}{\dfact{(x+1)}{i}{\sigma}p}$
	 and look for all solutions 
	$f=p\,g_{-1}+g_0$ of $\tilde{a}_0\,f+\tilde{a}_1\,\sigma(f)+\tilde{a}_2\,\sigma^2(f)=0$. Note that in this particular instance the denominator in each of the $\tilde{a}_i$ is precisely $p$. By clearing this common denominator we get
	\begin{equation}\label{Equ:tildeai}
\begin{split}
\tilde{a}_0=&-(7 + 3 p + 6 x + 5 p x + x^2 + 2 p x^2),\\ 
\tilde{a}_1=&\ 16 + 7 p + 16 x + 12 p x + 3 x^2 + 4 p x^2,\\
\tilde{a}_2=& -2 (2 + p + 4 x + 2 p x + x^2).
\end{split}
\end{equation}	
In other words, we search for all solutions in ${\mathcal V}((\tilde a_0,\tilde a_1,\tilde a_2),(0),\mathbb{Q}(x)[p])$ with~\eqref{Equ:tildeai}. 
\end{example}

\textit{2. Degree bounding:} In order to determine all solutions of~\eqref{Equ:PLDEPoly}, we try to find a
{\em degree bound\/} for $f$, \ie an integer $b\in\set N\cup\{-1\}$ such that $\deg f \leq b$ for any
$(f,c_1, \ldots, c_m)\in V'$.

In the rational case~\cite{Abramov:89b,Petkovsek1992} and in the $q$-rational case~\cite{Abramov:95a}, such a degree
bound can be computed. Moreover, for first-order difference equations this bound can be computed~\cite{Karr81} for a
\pisiSE-monomial $t$ over $k$ provided that one can solve the pseudo-orbit problem and one can solve parameterized 
first-order difference equations with coefficients in $k$; for more details and generalizations, see
Schneider~\cite{Schneider:05b}. Combining the results of Theorems~\ref{th:PiBounds} and \ref{th:SigmaBounds} below,
the higher-order case can be summarized as follows.

\begin{corollary}\label{Cor:DegBound}
Let $(k,\sigma)$ be a difference field and let $t$ be a \pisiSE-monomial over $k$. If one can solve the pseudo-orbit
problem in $k$, one can compute all the hypergeometric candidates for equations with coefficients in $k(t)$, and one
can compute all solutions of parameterized linear difference equations with coefficients in $k$, then one can compute
a degree bound for polynomial solutions of parameterized linear difference equations with coefficients in $k(t)$.
\end{corollary}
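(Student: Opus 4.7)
The corollary is essentially a packaging statement: it combines two separate degree-bounding results (Theorem~\ref{th:PiBounds} for \piE-monomials and Theorem~\ref{th:SigmaBounds} for \sigmaSE-monomials) into a single \pisiSE-level statement. My plan is therefore a case split on the type of the monomial $t$, after which each case is handled by the appropriate theorem under the assumed algorithmic hypotheses.

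First, following the preprocessing of step~0 and the denominator bounding of step~1, I may assume I am looking for all polynomial solutions $(f,c_1,\dots,c_m)\in k[t]\times C^m$ of a parameterized equation of the shape~\eqref{Equ:PLDEPoly} with coefficients $\tilde a_i\in k[t]$ and inhomogeneous parts $b_j\in k[t]$. Writing $f=\sum_{j=0}^d f_j t^j$ with $f_d\ne 0$, a degree bound is obtained by examining the top coefficient in $t$ of the left-hand side and comparing it to that of the right-hand side, so the task reduces to bounding $d$ from the resulting ``leading coefficient equation''.

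If $t$ is a \sigmaSE-monomial, then by Lemma~\ref{lm:lctc} the operator $\sigma$ preserves degrees and leading coefficients transform linearly, so the leading coefficient equation in $f_d$ becomes itself a parameterized linear difference equation with coefficients in $k$. The hypothesis that such equations in $k$ can be solved, together with a standard analysis of when the degree can still grow (the case when the leading terms cancel), produces the required bound; this is exactly what Theorem~\ref{th:SigmaBounds} delivers. If on the other hand $t$ is a \piE-monomial with $\sigma(t)=\alpha t$, then by Lemma~\ref{lm:lctc} one gets factors of the form $\dfact{(\alpha^d)}{i}{\sigma}$ in the leading coefficient equation, and the degree $d$ enters the equation as an \emph{exponent} of $\alpha$. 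Bounding $d$ therefore amounts to deciding for which integers $d$ the element $\alpha^d$ is similar (in the sense of~(\ref{Gamma})) to a hypergeometric candidate coming from the leading equation; this is where both the pseudo-orbit solver in $k$ and the hypergeometric candidates for equations with coefficients in $k(t)$ are invoked, giving Theorem~\ref{th:PiBounds}.

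Combining the two cases, all three assumed capabilities are used somewhere: the pseudo-orbit solver in $k$ and the hypergeometric-candidate procedure in $k(t)$ are needed for the \piE-case to pin down the possible exponents of $\alpha$, while the parameterized solver in $k$ covers the leading coefficient equation in both cases. The main obstacle I anticipate is entirely concentrated in the \piE-case: reducing the degree-bound problem to the pseudo-orbit problem cleanly requires that all hypergeometric candidates that could match $\alpha^d$ (up to similarity) have been enumerated, which is precisely the content of the hypothesis on hypergeometric candidates in $k(t)$ and is technically the most delicate ingredient -- the \sigmaSE-case is comparatively routine once a parameterized solver in $k$ is available.
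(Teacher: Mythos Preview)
Your approach is correct and matches the paper's: the corollary is indeed just the case split on whether $t$ is a \piE- or a \sigmaSE-monomial, invoking Theorem~\ref{th:PiBounds} in the first case (using the pseudo-orbit solver in $k$ and the hypergeometric-candidate procedure) and Theorem~\ref{th:SigmaBounds} in the second (using the parameterized solver in $k$).

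One remark on your closing assessment, though: your intuition that the \sigmaSE-case is ``comparatively routine'' is the opposite of the historical truth. As the paper's introduction recounts, the \piE-bound was settled early on, whereas the \sigmaSE-degree bound (Theorem~\ref{th:SigmaBounds}) resisted solution for many years and required the elaborate iterative procedure of Lemmas~\ref{Lemma:DegBoundn=0}--\ref{Lemma:DegBound:n->n+1} together with the Laurent-series termination argument via Lemma~\ref{Lemma:UnchangedConstantXi} and Corollary~\ref{Cor:UnchangedConstant}. Your sketch of the \sigmaSE-case (``the leading coefficient equation in $f_d$ becomes itself a parameterized linear difference equation with coefficients in $k$'') only captures the first step and glosses over the genuine difficulty: when the leading terms cancel, one must iterate, and proving that this iteration terminates is the crux.
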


\begin{example}\label{Exp:HyperDFHn8}
\rm
	Continuing  Example~\ref{Exp:HyperDFHn7} we have to compute a basis of the solution space $V={\mathcal V}((\tilde a_0,\tilde a_1,\tilde a_2),(0),\mathbb{Q}(x)[h])$ with $\tilde a_i=a_i$. In Example~\ref{Exp:HyperDFHn9} we will see that a degree bound is $b=2$. Together with the algorithms summarized in Step~3 we obtain the basis~\eqref{Equ:BasisHn} of $V$.
\end{example}

\begin{example}\label{Exp:HyperDFn!5}
	\rm
	Continuing Example~\ref{Exp:HyperDFn!4} we need a basis of the solution space $V={\mathcal V}((\tilde a_0,\tilde a_1,\tilde a_2),(0),\mathbb{Q}(x)[p])$. By Example~\ref{Exp:HyperDFn!6}(2) below we will see that $b=1$ is a degree bound of $V$. Note that this agrees with the observation in Example~\ref{Exp:HyperDFn!4}. Using the reduction technique of the next subsection we find the needed basis~\eqref{Equ:Basisn!}.
\end{example}

\textit{3. Finding the polynomial coefficients.}
Having obtained a degree bound $b$, one looks for $c_i\in C$ and $f_i\in k$ such that~\eqref{Equ:PLDEPoly} holds for
$f=\sum_{i=0}^b f_i t^i$. If $t$ is a \piE-monomial, this task can be accomplished by a very general algorithm
summarized in~\cite[Theorem~1]{Bronstein2000}: it works for unimonomial extensions which possess a {\em special
element}, \ie an element $h\in k(t)\setminus k$ with $\sigma h / h \in k$. Since in \sigmaSE-extensions such
elements do not exist, we rely on the following strategy that is applicable to both, \piE-monomials and
\sigmaSE-monomials (for the first-order case, see~\cite{Karr81}, and for the higher-order case together with detailed
proofs, see~\cite{Schneider:05a}): First determine the possible leading coefficients $f_b$ and the parameters $c_i$ by
solving a specific instance of a parameterized linear difference equation in $(k,\sigma)$, then substitute the
obtained solution into~\eqref{Equ:PLDEPoly} and look for $f - f_b t^b = \sum_{i=0}^{b-1} f_it^i$ recursively.

During the substitution mentioned above, the number of unknown parameters might increase (or decrease). 
This is the reason why we need to solve parameterized difference equations and not merely homogeneous 
difference equations (without parameters).

Summarizing, one can obtain the solution of~\eqref{Equ:PLDEPoly}
by solving several parameterized linear difference equations in the smaller field $(k,\sigma)$. 
In particular, collecting all the computational properties of Corollaries~\ref{Cor:DenBound} and~\ref{Cor:DegBound} gives part~(2) of Theorem~\ref{Thm:RatSolver}.
If one solves equations only in $k[t]$, step~1 of our algorithm can be skipped. Thus the computation of spreads in $k[t]$ (see Corollary~\ref{Cor:DenBound}) is not required, yielding part~(1) of Theorem~\ref{Thm:RatSolver}.

In Subsections~\ref{pi-case} resp.~\ref{sigma-case} we prove Theorems~\ref{th:PiBounds} resp.~\ref{th:SigmaBounds}
in order to establish Corollaries~\ref{Cor:DenBound} and~\ref{Cor:DegBound},
and hence also Theorem~\ref{Thm:RatSolver}.

\subsection{Denominator bounds and degree bounds for \piE-monomials}
\label{pi-case}

In this subsection, $t$ is a \piE-monomial over $(k,\sigma)$ satisfying $\sigma t = \alpha t$
with $\alpha \in k^\ast$.

Any $L \in k[t][E;\sigma]\setminus\{0\}$ can be uniquely decomposed as
\begin{equation}
\label{eq:project}
L = \sum_{j=\nu(L)}^{\deg L } t^j L_j
\end{equation}
where the $L_j$ are in $k[E;\sigma]$
and $L_{\nu(L)} \ne 0 \ne L_{\deg L }$; recall that $\nu(L)$ and $\deg L$ denote the trailing and leading degrees of $L$, respectively.

The following is the basis of our algorithm for bounding
the order and degree of solutions in $k[t,t^{-1}]$ of operators with
coefficients in $k[t]$; see also~\cite[Lemma~3]{Schneider:05b}. Here $\nu(f)$ denotes the order (sometimes also called valuation) of a Laurent polynomial $f\in k[t,t^{-1}]$.

\begin{lemma}
\label{lm:expbound}
Let $L\in k[t][E;\sigma]\setminus\{0\}$ and $b_1,\dots,b_m \in k[t,t^{-1}]$.
If there are integers $\gamma \le \delta$,
$y_\gamma,\dots,y_\delta \in k$ and $c_1,\dots,c_m \in \Const_\sigma(k)$
such that $y_\gamma y_\delta \ne 0$ and
\begin{equation}
\label{eq:expbound}
L\paren{y_\gamma t^{\gamma}+\cdots+y_{\delta} t^{\delta}}
= c_1 b_1+\cdots+c_m b_m\,,
\end{equation}
then,
\begin{enumerate}
\item[(i)]
either $\gamma \ge \min_{1\le j\le m}\nu(b_j) - \nu(L)\ $
or $\ L_{\nu(L)}(y_\gamma t^\gamma) = 0$, and
\item[(ii)]
either $\delta \le \max_{1\le j\le m}\deg b_j  - \deg L \ $
or $\ L_{\deg L }(y_\delta t^\delta) = 0$
\end{enumerate}
where the minimum and maximum are taken over those $j$ for which $b_j \ne 0$.
\end{lemma}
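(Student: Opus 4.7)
My plan is to extract the extreme $t$-power coefficients from both sides of \eqref{eq:expbound} and observe that, because $t$ is a $\Pi$-monomial, each monomial in $t$ is mapped by $E^i$ to a scalar multiple of itself (the scalar involving $\alpha$-powers). Consequently, the decomposition \eqref{eq:project} interacts cleanly with Laurent polynomials in $t$, and I will be able to pinpoint exactly where the smallest (resp.\ largest) power of $t$ on the left-hand side comes from.

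More precisely, I would first record the following consequence of $\sigma t=\alpha t$: for any $y\in k$ and $e\in\mathbb Z$, by Lemma~\ref{lm:ssi} we have $\sigma^i(yt^e)=\sigma^i(y)(\dfact{\alpha}{i}{\sigma})^e\,t^e$, so $L_j(yt^e)$ is of the form $\lambda(y,e,j)\,t^e$ for some scalar $\lambda(y,e,j)\in k$, and the scalar $\lambda(y,e,\nu(L))$ equals the coefficient of $t^e$ in $L_{\nu(L)}(yt^e)$. Applying $L=\sum_j t^j L_j$ to $yt^e$ therefore produces a Laurent polynomial whose $t$-support is contained in $\{\nu(L)+e,\ldots,\deg L+e\}$, with extremal terms $t^{\nu(L)+e}L_{\nu(L)}(yt^e)/t^e$ and $t^{\deg L+e}L_{\deg L}(yt^e)/t^e$.

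Now for part~(i): sum the above over $e=\gamma,\ldots,\delta$. Each contribution $L(y_et^e)$ lives in $t$-powers $\{\nu(L)+e,\ldots,\deg L+e\}$, so for $e>\gamma$ the support starts strictly above $\nu(L)+\gamma$. Hence the coefficient of $t^{\nu(L)+\gamma}$ on the left side of \eqref{eq:expbound} comes solely from $L(y_\gamma t^\gamma)$, and it is nonzero precisely when $L_{\nu(L)}(y_\gamma t^\gamma)\ne 0$. If this coefficient is nonzero, then matching with the right-hand side forces some $b_j\ne 0$ to contribute to $t^{\nu(L)+\gamma}$, giving $\nu(b_j)\le\nu(L)+\gamma$ and hence $\gamma\ge\min_{1\le j\le m}\nu(b_j)-\nu(L)$. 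Otherwise $L_{\nu(L)}(y_\gamma t^\gamma)=0$. This is the dichotomy~(i). Part~(ii) is entirely symmetric: I look at the coefficient of $t^{\deg L+\delta}$, note that only $L(y_\delta t^\delta)$ can contribute to the highest $t$-power, and obtain the analogous alternative using $L_{\deg L}(y_\delta t^\delta)$.

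I do not foresee any real obstacle: the proof reduces to the bookkeeping observation that in a $\Pi$-extension the operator $L$ does not mix different $t$-powers of a single monomial, which makes the decomposition \eqref{eq:project} behave like a graded decomposition with respect to $t$-degree. The only care needed is the minor sign convention $y_\gamma y_\delta\ne 0$ (ensuring that $t^\gamma$ and $t^\delta$ genuinely appear) and the remark that $L_{\nu(L)}(y_\gamma t^\gamma)$ is automatically a monomial in $t$ of degree $\gamma$, so its vanishing is an unambiguous condition.
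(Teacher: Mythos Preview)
Your proposal is correct and follows essentially the same route as the paper: both use the decomposition \eqref{eq:project} together with the key observation that for a $\Pi$-monomial $L_j(y t^e)\in k\cdot t^e$, and then read off the coefficient of $t^{\nu(L)+\gamma}$ (resp.\ $t^{\deg L+\delta}$) on each side of \eqref{eq:expbound}. The only cosmetic difference is that the paper argues the contrapositive (``if the inequality fails then the extremal coefficient must vanish''), whereas you argue the direct implication; logically these are identical.
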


\begin{proof}
Using the decomposition~(\ref{eq:project}) we have
$$
L\paren{y_\gamma t^{\gamma}+\cdots+y_{\delta} t^{\delta}} \ =
\sum_{j=\nu(L)}^{\deg L } t^j L_j \paren{\sum_{i=\gamma}^{\delta} y_i t^i} \ =
\sum_{j=\nu(L)}^{\deg L } \sum_{i=\gamma}^{\delta} t^j L_j (y_i t^i).
$$
Since $E(y_i t^i) = \sigma(y_i) \sigma(t)^i = (\sigma(y_i) \alpha^i) t^i$
where $\alpha = \sigma(t)/t \in k$, it follows that for each $j$,
$L_j(y_i t^i) = z_{ji} t^i$ for some $z_{ji} \in k$. Therefore,
\begin{equation}
\label{eq:lsum}
L\paren{y_\gamma t^{\gamma}+\cdots+y_{\delta} t^{\delta}}\ =
\sum_{j=\nu(L)}^{\deg L } \sum_{i=\gamma}^{\delta} z_{ji} t^{j+i}\,.
\end{equation}

\noindent
(i) Suppose that $\gamma + \nu(L) < \min_{1\le j\le m}\nu(b_j)$.
It then follows from~(\ref{eq:expbound}) and~(\ref{eq:lsum}) that
$z_{\nu(L), \gamma} = 0$, hence that $L_{\nu(L)}(y_\gamma t^\gamma) = 0$.

\noindent
(ii) Suppose that $\delta + \deg L > \max_{1\le j\le m}\deg b_j$.
It then follows from~(\ref{eq:expbound}) and~(\ref{eq:lsum}) that
$z_{\deg L , \delta} = 0$, hence that $L_{\deg L }(y_\delta t^\delta) = 0$.
\end{proof}

We use the following fact that has been exploited already in~\cite{Karr81} to handle the first-order case;
for details and clarifications, see~\cite[Lemma~3]{Schneider:04c} and~\cite[Lemma~6]{Schneider:05b}.

\begin{lemma}\label{Lemma:Uniqueness}
Let $(k,\sigma)$ be a difference field and let $u,v\in k$ where $u$ is not a  $\sigma$-radical over $k$. Then
$\Gamma(u,v;k)$  $($as defined in~\eqref{Gamma}$)$ has at most one element.
\end{lemma}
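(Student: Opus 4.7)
The plan is to prove the contrapositive by showing that any two distinct elements of $\Gamma(u,v;k)$ force $u$ to be a $\sigma$-radical over $k$. Suppose toward contradiction that $\gamma_1,\gamma_2 \in \Gamma(u,v;k)$ with $\gamma_1 \neq \gamma_2$. By the definition of $\Gamma$, there exist $w_1,w_2 \in k^\ast$ with $u^{\gamma_i} = v\,\sigma(w_i)/w_i$ for $i=1,2$. The common factor $v$ drops out upon dividing, yielding $u^{\gamma_1-\gamma_2} = \sigma(w_1/w_2)/(w_1/w_2)$.

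Next I would arrange the exponent to be positive. Without loss of generality assume $\gamma_1 > \gamma_2$; otherwise swap the roles of $\gamma_1,\gamma_2$ (equivalently, invert $w_1/w_2$). Setting $n := \gamma_1 - \gamma_2 \in \mathbb{N} \setminus \{0\}$ and $w := w_1/w_2 \in k^\ast$, the relation becomes $\sigma(w) = u^n\,w$. This is precisely the defining equation for $u$ to be a $\sigma$-radical over $k$, contradicting the hypothesis on $u$. Hence $\Gamma(u,v;k)$ contains at most one element.

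There is no real obstacle here: the argument is purely formal manipulation of the definitions of $\sim_{k,\sigma}$ and of $\sigma$-radical. The only small care needed is to ensure one reduces to a strictly positive exponent $n$, since the definition of $\sigma$-radical demands $n \in \mathbb{N} \setminus \{0\}$; this is why the reduction to $\gamma_1 > \gamma_2$ via inversion of $w_1/w_2$ is the one genuine (trivial) step rather than a direct citation.
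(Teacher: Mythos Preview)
Your proof is correct and follows essentially the same route as the paper's own argument. The paper's proof is simply more terse: it observes that $\gamma_1\neq\gamma_2$ in $\Gamma(u,v;k)$ gives $u^{|\gamma_1-\gamma_2|}\sim_{k,\sigma}1$, hence $u$ is a $\sigma$-radical; your version unpacks this by explicitly writing out the witnesses $w_1,w_2$ and dividing, but the content is identical.
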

\begin{proof}
Suppose that $\gamma_1\neq\gamma_2$
are in $\Gamma(u,v;k)$. Then $u^{\vert\gamma_1 - \gamma_2\vert} \sim_{\sigma,k} 1$, and hence $u$ is 
a $\sigma$-radical, a contradiction.
\end{proof}

\begin{theorem}
\label{th:PiBounds}
Let $(k,\sigma)$ be a difference field and let $t$ be a $\Pi$-monomial over $k$ with $\sigma t = \alpha t$.
If we can solve pseudo-orbit problems in $k$ and if we can compute
all the hypergeometric candidates for equations with coefficients in $k(t)$, then we can bound the order and degree of solutions in $k[t,t^{-1}]$ of parameterized linear difference equations with coefficients in $k[t]$ and inhomogeneous parts in $k[t,t^{-1}]$.
\end{theorem}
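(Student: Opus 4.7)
By Lemma~\ref{lm:expbound}(ii), a solution $y=y_\gamma t^\gamma+\cdots+y_\delta t^\delta\in k[t,t^{-1}]$ of~\eqref{PLDE} either satisfies the explicit bound $\delta\leq \max_{b_j\ne 0}\deg b_j-\deg L$, or else $L_{\deg L}(y_\delta t^\delta)=0$. Part~(i) of the same lemma gives the dual dichotomy for $\gamma$ via $L_{\nu(L)}(y_\gamma t^\gamma)=0$. Since $L_{\deg L},L_{\nu(L)}\in k[E;\sigma]$, the task reduces to bounding the finite set of $\delta$'s (resp.\ $\gamma$'s) for which the second alternative is realised with $y_\delta\in k^\ast$ (resp.\ $y_\gamma\in k^\ast$). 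I describe the $\delta$-case; the $\gamma$-case is symmetric, swapping $\deg$ with $\nu$ and the asymptotics at $t\to\infty$ with those at $t\to 0$.

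When $L_{\deg L}(y_\delta t^\delta)=0$, the element $y_\delta t^\delta\in k(t)^\ast$ is a hypergeometric solution of $L_{\deg L}$ whose ratio $s=\alpha^\delta\,\sigma(y_\delta)/y_\delta$ lies in $k^\ast$. By Lemma~\ref{lm:riccati}, $E-s$ is a right factor of $L_{\deg L}$ in $k(t)[E;\sigma]$, so by hypothesis $s=u\,\sigma(w)/w$ for some $u$ in a finite computable set $S\subset k(t)^\ast$ and some $w\in k(t)^\ast$. Extracting a leading coefficient from Corollary~\ref{Cor:NormalFormPiSi} we write $u=z\cdot(a/b)\cdot\sigma(c)/c$ with $z\in k^\ast$ and $a,b,c\in k[t]$ monic satisfying the listed coprimality/spread conditions and $c(0)\ne 0$. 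Comparing the $t$-degree and $t$-valuation of both sides of $s=u\,\sigma(w)/w$ (both are $0$ on the left, while $\deg_t$ and $\operatorname{ord}_t$ of any coboundary $\sigma(\,\cdot\,)/(\,\cdot\,)$ are also $0$) forces $\deg a=\deg b$ and $a(0)b(0)\ne 0$, and a short argument combining $\spr_\sigma(a,b)=\emptyset$ with the uniqueness of the normal form of a coboundary then forces $a=b=1$. Absorbing $\sigma(c)/c$ into the witness we obtain $u=z\in k^\ast$ and a modified witness $w'\in k(t)^\ast$ with $\sigma(w')/w'\in k^\ast$.

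Because $t$ is a $\Pi$-monomial, Theorem~4 of~\cite{Karr81} (equivalently Theorem~4 of~\cite{Bronstein2000}) characterises the semi-invariants of $k[t]$ as exactly $\{f\,t^m:f\in k,\ m\geq 0\}$, so $w'$ must be of the form $c'\,t^m$ for some $c'\in k^\ast$ and $m\in\set Z$. Then $s\sim_{\sigma,k}z\alpha^m$ and also $s\sim_{\sigma,k}\alpha^\delta$, hence $\alpha^{\delta-m}\sim_{\sigma,k}z$. Since Theorem~\ref{thm:pisiProp}(i) asserts that $\alpha$ is not a $\sigma$-radical over $k$, Lemma~\ref{Lemma:Uniqueness} guarantees that $\Gamma(\alpha,z;k)$ is either empty or a singleton $\{\gamma_z\}$, and by hypothesis $\gamma_z$ is computable by the pseudo-orbit oracle in $k$; this yields $\delta=\gamma_z+m$. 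A careful bookkeeping of the $t$-leading and $t$-trailing data of the identity $s=u\,\sigma(w)/w$ confines $m$ to a finite, explicitly computable set determined by the $t$-degree and $t$-order data of the operator, so each $u\in S$ produces finitely many candidate values $\delta_u=\gamma_z+m$, and the required upper bound on $\delta$ is $\max\bigl(\max_{b_j\ne 0}\deg b_j-\deg L,\ \max_{u\in S}\delta_u\bigr)$; the dual argument with $L_{\nu(L)}$ and the expansion at $t=0$ produces the lower bound on $\gamma$. The principal obstacle is the pair of rigidity claims---forcing $a=b=1$ and bounding $m$---which rests on the uniqueness of the normal form, the non-$\sigma$-radicality of $\alpha$, and the rigid structure of semi-invariants in $\Pi$-extensions.
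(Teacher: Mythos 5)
Your opening reduction via Lemma~\ref{lm:expbound} is exactly the paper's, and converting $L_{\deg L}(y_\delta t^\delta)=0$ into a Riccati condition for $s=\alpha^\delta\,\sigma(y_\delta)/y_\delta\in k^\ast$ is also right. The gap is in how you exploit the hypergeometric-candidate oracle. You invoke it over $k(t)$, obtaining $s=u\,\sigma(w)/w$ with $u\in S\subset k(t)$ and, crucially, a witness $w\in k(t)^\ast$. After normalizing $u$ you arrive (in effect) at $s\sim_{k(t),\sigma}z$ for some $z\in k^\ast$ and must then pin down $\delta$ from $\alpha^\delta\sim_{k,\sigma}s$. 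But the witness may carry an arbitrary factor $t^m$, and since $\sigma(t^m)/t^m=\alpha^m\in k^\ast$ this factor has $t$-degree and $t$-valuation zero: your ``careful bookkeeping of the $t$-leading and $t$-trailing data'' cannot see it, and $m$ is genuinely unbounded. Equivalently, $\alpha=\sigma(t)/t$ \emph{is} a $\sigma$-radical over $k(t)$ (witness $t$ itself), so Lemma~\ref{Lemma:Uniqueness} is unavailable over $k(t)$, and the set of admissible $\delta$ you extract is a full coset of $\set Z$ rather than a single value. The paper avoids this by applying the candidate computation \emph{over $k$} to $L_{\nu(L)},L_{\deg L}\in k[E;\sigma]$ (their coefficients and the ratio $s$ all lie in $k$), so the witness $v$ lies in $k^\ast$, the conclusion becomes $\delta\in\Gamma(\alpha,u;k)$ with $u$ in a finite subset of $k$, and Theorem~\ref{thm:pisiProp}(i) together with Lemma~\ref{Lemma:Uniqueness} makes each such set a singleton or empty.

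A secondary problem: your ``short argument'' that $\spr_\sigma(a,b)=\emptyset$ together with $u\,\sigma(w)/w\in k^\ast$ forces $a=b=1$ is false as stated. Take an irreducible $\pi\in k[t]\setminus k$ whose self-spread is $\{0\}$ and set $a=\pi$, $b=\sigma\pi$: then $\spr_\sigma(a,b)=\emptyset$ (the spread is not symmetric in its arguments), yet $a/b=\sigma(\pi^{-1})/\pi^{-1}$ is a coboundary. One could repair this step by deciding algorithmically whether a given $u$ is $\sim_{k(t),\sigma}$-equivalent to an element of $k^\ast$, but the unbounded power of $t$ in the witness cannot be repaired: the route through candidates over $k(t)$ does not recover $\delta$.
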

\begin{proof}
Let $L = \sum_{i=0}^n a_i E^i \in k[t][E;\sigma]$
be a linear ordinary difference operator with coefficients in $k[t]$
where $a_0 a_n \ne 0$, and take the inhomogeneous parts
$b_1,\dots,b_m \in k[t,t^{-1}]$. Let $c_1,\dots,c_m\in \Const_\sigma(k)$ and
$y = y_\gamma t^{\gamma}+\cdots+y_{\delta} t^{\delta}$ 
with $\gamma\leq\delta$, $y_i\in k$ for $\gamma\leq i\leq\delta$ and $y_\gamma y_\delta \ne 0$ be such that~\eqref{eq:expbound} holds.
First suppose that 
\begin{equation}\label{Equ:TailLCConstraint}
L_{\nu(L)}(y_\gamma t^\gamma)
\ =\ 0\ =\ L_{\deg L }(y_\delta t^\delta)
\end{equation}
where $L_{\nu(L)}$ and $L_{\deg L }$ are given by~(\ref{eq:project}).
Following the arguments in the proof of Lemma~\ref{lm:riccati} we conclude that $E - \alpha^\gamma \sigma y_\gamma / y_\gamma$
and $E - \alpha^\delta \sigma y_\delta / y_\delta$ are right factors
of $L_{\nu(L)}$ and $L_{\deg L }$, respectively.
Since we can compute all the hypergeometric candidates for equations
with coefficients in $k$, we can compute from the input operators $L_{\nu(L)}$ and $L_{\deg L }$ the
finite sets
$S_\nu,S_{\deg} \subset k$ such that
$$
\alpha^\gamma\, \frac{\sigma y_\gamma}{y_\gamma} = u_\nu \frac{\sigma v}v
\mbox{\quad and\quad }
\alpha^\delta\, \frac{\sigma y_\delta}{y_\delta} = u_{\deg} \frac{\sigma w}w
$$
for some $u_\nu \in S_\nu$, $u_{\deg} \in S_{\deg}$ and $v,w \in k^\ast$.
This is equivalent to
$$
\alpha^\gamma = u_\nu \frac{\sigma v_\nu}{v_\nu}
\mbox{\quad and\quad }
\alpha^\delta = u_{\deg} \frac{\sigma v_{\deg}}{v_{\deg}}
$$
where $v_\nu = v/y_\gamma \in k^\ast$ and $v_{\deg} = w/y_\delta \in k^\ast$.
Therefore $\gamma \in \Gamma(\alpha,u_\nu;k)$
and $\delta \in \Gamma(\alpha,u_{\deg};k)$.
Since we can solve pseudo-orbit problems in $k$, we can compute
the sets $\Gamma(\alpha,u;k)$ and $\Gamma(\alpha,v;k)$ for any $u\in S_{\deg}$ and $v\in S_{\nu}$, in particular for $u_{\deg}\in S_{\deg}$ and $u_{\nu}\in S_{\nu}$.
By Theorem~\ref{thm:pisiProp}(i), $\alpha$ is not a $\sigma$-radical over $k$, 
hence by Lemma~\ref{Lemma:Uniqueness} each of these sets has at most one element.
Thus $\bigcup_{u \in S_\nu} \Gamma(\alpha,u; k)$ and
$\bigcup_{u\in S_{\deg}} \Gamma(\alpha,u; k)$ are finite non-empty sets
of candidates for $\gamma$ and $\delta$, respectively. With
\begin{equation}\label{Equ:DefinemM}
m := \min \bigcup_{u\in S_\nu} \Gamma(\alpha,u; k)\mbox{\quad and\quad }M := \max \bigcup_{u\in S_{\deg}} \Gamma(\alpha,u; k)
\end{equation}
we get $\gamma\geq m$ and $\delta\leq M$ by construction.\\
Now suppose that~\eqref{Equ:TailLCConstraint} does not hold. If $L_{\nu(L)}(y_\gamma t^\gamma)\neq0$ 
then it follows by part (i) of Lemma~\ref{lm:expbound} that $\gamma \ge \min_{1\le j\le m}\nu(b_j) - \nu(L)$, and  
if $L_{\deg L }(y_\delta t^\delta)\ \neq\ 0$, then $\delta \le \max_{1\le j\le m}\deg b_j  - \deg L \ $ by part (ii) of Lemma~\ref{lm:expbound}. So regardless of whether~\eqref{Equ:TailLCConstraint} holds or not, we have
\begin{align*}
\gamma&\geq\min(m,\min_{1\le j\le m}\nu(b_j) - \nu(L)),\\
\delta&\leq\max(M,\max_{1\le j\le m}\deg b_j  - \deg L)
\end{align*}
where the bounds on the right-hand sides can be computed from the given inhomogeneous parts $b_1,\dots,b_m$ and the operators $L_{\nu(L)}$ and $L_{\deg L }$, respectively. 
\end{proof}

\begin{example}\label{Exp:HyperDFn!6}
\rm
We demonstrate the algorithm given in the proof of Theorem~\ref{th:PiBounds}.\\
(1) Given the operator $L=L_{p,1,2(x+1)^2}
=a_0\,E^0+a_1\,E+a_2\,E^2$ with~\eqref{Equ:LSubFieldn!} we compute with $\deg L=1$ and $\nu(L)=0$ the operators
$$L_{1}=-(3 + 2 x)\,E^0+(7 + 12 x + 4 x^2)\,E -2 (2 + x) (1 + 2 x)\,E^2$$
and 
\begin{multline*}
L_{0}=-(7 + 6 x + x^2)\,E^0 +(1 + x) (4 + x) (4 + 3 x)\,E\\ 
-2 (1 + x) (2 + x) (2 + 
4 x + x^2)\,E^2;
\end{multline*}
note that in the derived $L_1$ we removed the common factor $x+1$. For $L_{1}$ we can compute the set of hypergeometric candidates 
$S_{\deg}=\{1\}$; more precisely we computed the hypergeometric solution $1$ of $L_1$. In addition, we obtain that $\Gamma(x+1,1;\set Q(x))=\emptyset$ (see Example~\ref{Exp:KarrM}), and thus set $M=0$.\\
For $L_0$ we get the hypergeometric candidates $S_{\nu}=\{\frac1{2 (1 + x)},\frac{1 + x}{x^2}\}$ and we obtain $\Gamma(x+1,\frac1{2 (1 + x)};\set Q(x))=\emptyset$
and 
$\Gamma(x+1,\frac{1 + x}{x^2};\set Q(x))=\{-1\}$; see Example~\ref{Exp:KarrM}. This gives
$m=\min(0,-1)=-1$. Summarizing, the solutions of $L(g)=0$ are of the form $g=g_0+g_{-1}p^{-1}$. In particular, $d=p$ is a denominator bound of~\eqref{Equ:Vn!}.\\
(2) Similarly, we can apply this degree bounding method to the operator $\tilde{L}=\tilde{a}_0\,E^0+\tilde{a}_1\,E+\tilde{a}_2\,E^2$ with~\eqref{Equ:tildeai} and obtain $m=0$ and $M=1$; note that $\tilde{L}=\frac{a_0}{p}\,E^0+\frac{a_1}{(x+1)p}\,E^1+\frac{a_2}{(x+1)(x+2)p}\,E^2$. Thus $b=1$ is a degree bound of ${\mathcal V}((\tilde a_0,\tilde a_1, \tilde a_2),(0),\mathbb{Q}(x)[p])$. \\
(3) As mentioned in Remark~\ref{Remark:SwithToPolynomialCase} there is an alternative approach to compute a basis of~\eqref{Equ:Vn!} by working with the \pisiSE-field $(k(x)(\hat{p}),\sigma)$ with $\sigma(x)=x+1$ and $\sigma(\hat{p})=\frac1{x+1}\hat{p}$. Namely, define $a'_i=a_i|_{p\to\hat{p}^{-1}}$ for $i=0,1,2$. 
Then one is faced with the problem of finding a basis of ${\mathcal V}((a'_0,a'_1,a'_2),(0),\mathbb{Q}(x)[\hat{p}])$. Applying our degree bounding algorithm shows that the solutions are of the form $g'=g'_0+g'_1\hat{p}$. Indeed, we find the basis 
$\{(0,1),(1+x^2\hat{p},0)\}$ which gives the basis~\eqref{Equ:Basisn!} by replacing $\hat{p}$ with $p^{-1}$.
\end{example}

\begin{remark}
\rm
If we consider homogeneous linear difference equations, i.e., we search for solutions $y = y_\gamma t^{\gamma}+\cdots+y_{\delta} t^{\delta}$
with $y_\gamma y_\delta \ne 0$ such that $L(y)=0$, then~\eqref{Equ:TailLCConstraint} holds.
In particular, the following extra information can be gained. 
If either of the sets $\bigcup_{u \in S_\nu} \Gamma(\alpha,u; k)$ or
$\bigcup_{u\in S_{\deg}} \Gamma(\alpha,u; k)$ is empty, or if 
for $m$ and $M$ given in~\eqref{Equ:DefinemM} we have $m>M$, 
then $L(y) = 0$ has no nonzero solution in $k[t,t^{-1}]$. Otherwise, any solution is of the form
$y=y_m t^{m}+\cdots+y_{M} t^{M}$, so $m \le \gamma \le \delta \le M$ are the desired bounds.
\end{remark}

\subsection{Degree bounds for \sigmaSE-monomials}
\label{sigma-case}




We start with a simple observation that is needed in this subsection but also in Subsection~\ref{Sec:ControlConstants} below.

\begin{lemma}\label{Lemma:XExt}
Let $(F(x),\sigma)$ be a difference field extension
of $(F,\sigma)$ and $\sigma x = x$. Then 

$\Const_\sigma\, F(x)=(\Const_\sigma F)(x)$.
\end{lemma}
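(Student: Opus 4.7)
The claim is a pair of inclusions, one of which is essentially definitional and the other of which requires a short argument about reduced fractions.

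The easy inclusion is $(\Const_\sigma F)(x)\subseteq \Const_\sigma F(x)$: any element of $(\Const_\sigma F)(x)$ can be written as $p/q$ with $p,q$ polynomials in $x$ whose coefficients are $\sigma$-fixed; since $\sigma x=x$, every monomial $a_i x^i$ with $a_i\in\Const_\sigma F$ is $\sigma$-fixed, so $p$ and $q$ are fixed, and hence so is $p/q$. I would dispose of this in a single sentence.

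For the harder inclusion $\Const_\sigma F(x)\subseteq (\Const_\sigma F)(x)$, the plan is to exploit uniqueness of the reduced form of a rational function. Given $f\in F(x)$ with $\sigma f=f$, write $f=p/q$ with $p,q\in F[x]$, $\gcd(p,q)=1$, and $q$ monic (this is possible and the representation is unique). Applying $\sigma$ and using $\sigma x=x$ gives $\sigma f=(\sigma p)/(\sigma q)$, where $\sigma p,\sigma q\in F[x]$ have the same degrees as $p,q$ respectively (the coefficients get shuffled by $\sigma$ but the monomials $x^i$ are fixed). Crucially, $\sigma q$ is still monic, because the leading coefficient of $q$ is $1$ and $\sigma(1)=1$. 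Moreover $\gcd(\sigma p,\sigma q)=1$ since $\sigma$ is a ring automorphism of $F[x]$.

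From $\sigma f=f$ we deduce $p\,\sigma q=q\,\sigma p$. Coprimality of $p$ and $q$ then forces $q\mid \sigma q$ in $F[x]$, and since both are monic of the same degree, $\sigma q=q$; substituting back yields $\sigma p=p$. Therefore every coefficient of $p$ and of $q$ lies in $\Const_\sigma F$, so $f=p/q\in (\Const_\sigma F)(x)$, completing the proof.

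The step I would flag as the one requiring the slightest care is the normalization to monic $q$: one must observe that monicness is preserved by $\sigma$ (trivial, but it is exactly what makes the coprimality/divisibility argument go through), and that the choice of monic reduced denominator is unique among all coprime representations. Everything else is routine polynomial arithmetic, and no machinery beyond unique factorization in $F[x]$ is needed.
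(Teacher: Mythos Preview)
Your argument for the transcendental case is correct and matches the paper's proof of that case almost verbatim (reduced fraction, monic denominator, $q\mid\sigma q$ forces $\sigma q=q$, then $\sigma p=p$, then coefficient comparison). The flag you raise about preserving monicity under $\sigma$ is exactly the point the paper also isolates.

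However, there is a genuine gap: the statement does not assume that $x$ is transcendental over $F$, and the paper's proof explicitly treats the algebraic case as well. Your entire machinery (polynomials in $F[x]$, coprimality, unique monic reduced denominator, unique factorization) presupposes that $F[x]$ is a polynomial ring, i.e., that $x$ is transcendental. If $x$ is algebraic over $F$ then $F[x]=F(x)$ is already a field, every nonzero element is a unit, and the notions of ``coprime'' and ``monic denominator'' collapse; the argument as written no longer applies.

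The paper handles the algebraic case separately: if $\mu\in F[y]$ is the minimal polynomial of $x$ with $\deg\mu=d$, write $f=\sum_{i=0}^{d-1}f_i x^i$ with $f_i\in F$; then $\sigma f=f$ gives $\sum_{i=0}^{d-1}(\sigma f_i-f_i)x^i=0$, and linear independence of $1,x,\dots,x^{d-1}$ over $F$ forces each $f_i\in\Const_\sigma F$. This case is not merely a formality: the lemma is later invoked for algebraic constant-field extensions (see the proof of Proposition~\ref{Prop:AlgebraicPLDE}), so the algebraic branch is genuinely needed.
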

\begin{proof}
Obviously $(\Const_\sigma F)(x)\subseteq \Const_\sigma\, F(x)$. Now take $f\in \Const_\sigma F(x)$. 
First suppose that $x$ is transcendental over $F$ .
Write $f = p/q$ where $p, q \in F[x]$ are coprime and $q \ne 0$ is monic.
From $\sigma f = f$ we obtain $q\, \sigma p = p\, \sigma q$, whence $q \mid \sigma q$. Since $\deg\sigma q = \deg q$,
there is $u \in F^\ast$ such that $\sigma q = u\, q$. From $\sigma x = x$ and monicity of $q$ it follows by comparison
of leading coefficients that $u = 1$. So $\sigma q = q$ and $\sigma p = p$. Comparing coefficients of like powers of
$x$ in these two equations we find that $p, q \in (\Const_\sigma F)[x]$, hence $f \in (\Const_\sigma F)(x)$.\\ 
Otherwise suppose that $x$ is algebraic over $F$ and let $\mu=\mu_0+\dots+\mu_d y^d\in F[y]$ be the minimal polynomial, i.e., $\mu$ is monic, $\mu(x)=0$ and $\deg(\mu)=d$ is minimal among all such polynomials. Then we can write
$f=\sum_{i=0}^{d-1}f_ix^i$ with $f_i\in F$ and obtain
$$0=\sigma(f)-f=\sum_{i=0}^{d-1}\sigma(f_i)x^i-\sum_{i=0}^{d-1}f_ix^i=\sum_{i=0}^{d-1}(\sigma(f_i)-f_i)x^i.$$ 
Since $\mu$ is the minimal polynomial, $x^0,\dots,x^{d-1}$ are linearly independent over $F$. Therefore $\sigma(f_i)-f_i=0$, i.e., $f_i\in\Const_\sigma F$ for all $0\leq i<d$ and thus again $f \in (\Const_\sigma F)(x)$. -- So, in both cases it follows that $\Const_\sigma\, F(x) \subseteq (\Const_\sigma F)(x)$, and the lemma is proven.
\end{proof}


In Theorem~\ref{th:SigmaBounds} (using Lemmas~\ref{Lemma:DegBoundn=0} and~\ref{Lemma:DegBound:n->n+1}) a rather involved algorithm is elaborated that determines a degree bound for \sigmaSE-extensions. Similarly to~\cite{Singer:91} we have to apply rather technical constructions to prove its termination. In order to accomplish this task, we will explore the following abstract difference fields further.
Let $(F, \sigma)$ be a difference field and $t$ be a \sigmaSE-monomial over $F$ with $\sigma t = t + \eta$. 
We extend $\sigma$ to the field $F((t^{-1}))$ of formal Laurent series in $t^{-1}$ by defining
$\sigma t^{-1} = (\sigma t)^{-1} = 1/(t + \eta) = t^{-1}/(1 + \eta t^{-1})$
(cf.~\cite[Example 1.2]{Singer:97}).
Then $(F((t^{-1})),\sigma)$ becomes a difference field extension of $(F(t),\sigma)$ with
%
\begin{eqnarray}
\label{auto}
\sigma \left(\sum_{i \ge i_0} a_i t^{-i}\right) &=&  
\sum_{i \ge i_0} \left(\sum_{j=i_0}^i {i-1 \choose i-j} (-\eta)^{i-j} \,\sigma a_j\right) t^{-i}, \\
\sigma^{-1} \left(\sum_{i \ge i_0} a_i t^{-i}\right) &=&  
\sum_{i \ge i_0} \left(\sum_{j=i_0}^i {i-1 \choose i-j} (\sigma^{-1} \eta)^{i-j} \,\sigma^{-1} a_j\right) t^{-i}
\nonumber 
\end{eqnarray}
%
%
for every $\sum_{i \ge i_0} a_i t^{-i} \in F((t^{-1}))$.

\begin{lemma}\label{Lemma:LaurentExt}
Let $(F, \sigma)$ be a difference field, and let $t$ be a \sigmaSE-monomial over $F$ with $\sigma t = t + \eta$. Then
$\Const_\sigma\, F((t^{-1})) = \Const_\sigma F$.
\end{lemma}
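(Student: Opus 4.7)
The inclusion $\Const_\sigma F \subseteq \Const_\sigma F((t^{-1}))$ is immediate since $\sigma$ extends the one on $F$ and fixes the scalars of the series. So the plan is to prove the reverse inclusion by exploiting the explicit formula~\eqref{auto} for the action of $\sigma$ on Laurent series together with the defining property of $\Sigma^*$-monomials from Theorem~\ref{thm:pisiProp}(ii).

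Concretely, I would take a nonzero $f = \sum_{i \ge i_0} a_i t^{-i} \in F((t^{-1}))$ with $a_{i_0} \ne 0$ and assume $\sigma f = f$. Comparing the coefficient of $t^{-i_0}$ in the identity $\sigma f = f$ using~\eqref{auto} gives $\sigma a_{i_0} = a_{i_0}$, so $a_{i_0} \in \Const_\sigma F$. Comparing next the coefficient of $t^{-(i_0+1)}$ yields, after using $\sigma a_{i_0} = a_{i_0}$,
\[
\sigma a_{i_0+1} - a_{i_0+1} \;=\; i_0\,\eta\, a_{i_0}.
\]
If $i_0 \ne 0$, then $a_{i_0} \ne 0$ is a nonzero constant, so $w := a_{i_0+1}/(i_0 a_{i_0}) \in F$ is well defined and, because $\sigma$ fixes the denominator, satisfies $\sigma w - w = \eta$. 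This contradicts Theorem~\ref{thm:pisiProp}(ii), which characterizes $\Sigma^*$-monomials by the non-solvability of $\sigma w - w = \eta$ in $F$. Hence $i_0 = 0$.

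Finally, to conclude that $f = a_0$ and thus $f \in \Const_\sigma F$, I would apply the same dichotomy to $g := f - a_0$: since $a_0 \in \Const_\sigma F$, we still have $\sigma g = g$; but if $g \ne 0$, its leading index is $\ge 1$, violating what was just proved. So $g = 0$ and $f = a_0 \in \Const_\sigma F$.

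The only mildly subtle step is the coefficient extraction itself: one has to make sure that for every integer $i_0$ (possibly negative), the binomial coefficients $\binom{i-1}{i-j}$ in~\eqref{auto} yield, at $i = i_0$, just $\sigma a_{i_0}$ and, at $i = i_0+1$, exactly the linear combination $i_0(-\eta)\sigma a_{i_0} + \sigma a_{i_0+1}$. This is a routine check from the expansion $\sigma t^{-j} = \sum_{k \ge 0}\binom{-j}{k}\eta^k t^{-j-k}$, and is the place where one must take a little care with signs and with the range of summation; everything else is algebra and an appeal to the $\Sigma^*$-monomial property.
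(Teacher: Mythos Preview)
Your proof is correct and follows essentially the same route as the paper: extract the leading and next-to-leading coefficients from~\eqref{auto}, use Theorem~\ref{thm:pisiProp}(ii) to force $i_0=0$, then kill the tail. The only cosmetic difference is that the paper finishes by an explicit induction on $n$ showing $a_1=a_2=\cdots=0$, whereas you package the same induction as ``apply the leading-index-must-be-zero fact to $g=f-a_0$''; both are equally valid.
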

\begin{proof}
%
Clearly $\Const_\sigma F \subseteq \Const_\sigma{F((t^{-1}))}$. To prove the converse,
take any $y \in \Const_\sigma{F((t^{-1}))}$. If $y = 0$ then $y \in \Const_\sigma F$. Otherwise,
there are $i_0 \in \mathbb{Z}$ and $a_{i_0}, a_{i_0+1}, \ldots \in F$ such that
$y = \sum_{i \ge i_0} a_i t^{-i}$ and $a_{i_0} \ne 0$. From $\sigma y = y$ it follows by (\ref{auto}) that
\begin{equation}
\label{coeffs}
\sum_{j=i_0}^i {i-1 \choose i-j} (-\eta)^{i-j} \,\sigma a_j
\ =\ a_i, \qquad {\rm\ for\ all\ }i \ge i_0. 
\end{equation}
For $i = i_0$ this gives $\sigma a_{i_0} \ =\ a_{i_0}$, hence $a_{i_0} \in \Const_\sigma F$.
For $i = i_0+1$, (\ref{coeffs}) yields $\sigma a_{i_0+1} - \eta i_0 a_{i_0} \ =\ a_{i_0+1}$.
If $i_0 \ne 0$ then $\sigma w - w = \eta$ where $w = a_{i_0+1}/(i_0 a_{i_0}) \in F$. 
By Theorem \ref{thm:pisiProp}(ii), this is impossible as $t$ is a \sigmaSE-monomial over $F$,
so $i_0 = 0$ and
\begin{equation}
\label{coeffs0}
\sum_{j=0}^i {i-1 \choose i-j} (-\eta)^{i-j} \,\sigma a_j
\ =\ a_i, \qquad {\rm\ for\ all\ }i \ge 0. 
\end{equation}
Assume $n \ge 1$ and $a_1 = a_2 = \cdots = a_{n-1} = 0$. 
For $i = n$, (\ref{coeffs0}) gives $\sigma a_n \ =\ a_n$, hence $a_n \in \Const_\sigma F$.
For $i = n+1$, (\ref{coeffs0}) yields $\sigma a_{n+1} - \eta n a_n \ =\ a_{n+1}$.
If $a_n \ne 0$ then $\sigma w - w = \eta$ where $w = a_{n+1}/(n a_n) \in F$. 
By Theorem \ref{thm:pisiProp}(ii), this is impossible as $t$ is a \sigmaSE-monomial over $F$,
so $a_n = 0$.
By induction on $n$ it follows that $a_n = 0$ for all $n \ge 1$.
Hence $y = \sum_{i \ge 0} a_i t^{-i} = a_0 \in \Const_\sigma F$, proving that 
$\Const_\sigma{F((t^{-1}))} \subseteq \Const_\sigma F$.
\end{proof}

We will need a \piE-monomial $\xi$ over $k(x)((t^{-1}))$, behaving like the power $t^x$. Informally, $\sigma t^x/t^x =
(\sigma t/t)^x = (1 + \eta t^{-1})^x = \sum_{i=0}^{\infty}\binom{x}{i}\eta^i t^{-i}$, so we introduce $\xi$
in the following way:
\begin{lemma}\label{Lemma:UnchangedConstantXi}
Let $t$ and $x$ be algebraically independent over $k$ where $t$ is a \sigmaSE-monomial over $k$ with $\eta=\sigma(t)-t\in k$ and $\sigma x = x$.
Let $\xi$ be transcendental over $k(x)((t^{-1}))$ with $\sigma \xi = \alpha \xi$ where 
$\alpha = \sum_{i=0}^{\infty}\binom{x}{i}\eta^i t^{-i}$.
Then $\xi$ is a \piE-monomial over $k(x)((t^{-1}))$.
\end{lemma}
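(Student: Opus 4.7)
Since $\xi$ is transcendental over $K := k(x)((t^{-1}))$ and $\sigma \xi = \alpha\, \xi$ with $\alpha \in K^\ast$, $\xi$ is a unimonomial over $K$. By Theorem~\ref{thm:pisiProp}(i), to show that $\xi$ is a $\Pi$-monomial over $K$ it will suffice to prove that $\alpha$ is not a $\sigma$-radical over $K$. Accordingly, the plan is to suppose for contradiction that there exist $n \ge 1$ and $g \in K^\ast$ with $\sigma g = \alpha^n g$, and to extract from this equation an element $v \in k$ satisfying $\sigma v - v = \eta$, contradicting Theorem~\ref{thm:pisiProp}(ii).

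The first step is to apply the Vandermonde identity for the polynomial-in-$x$ binomial coefficients $\binom{x}{i}$ to compute
\[
\alpha^n \;=\; \sum_{m \ge 0} \binom{nx}{m}\, \eta^m\, t^{-m} \;\in\; K.
\]
Writing $g = \sum_{i \ge i_0} g_i t^{-i}$ with $g_i \in k(x)$ and $g_{i_0} \ne 0$, I would apply formula~(\ref{auto}) and compare coefficients of $t^{-i_0}$ and $t^{-i_0 - 1}$ on both sides of $\sigma g = \alpha^n g$. The first comparison gives $\sigma g_{i_0} = g_{i_0}$, so $g_{i_0} \in \Const_\sigma k(x)$. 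Substituting this into the second comparison collapses it to
\[
\sigma g_{i_0+1} - g_{i_0+1} \;=\; (i_0 + nx)\,\eta\, g_{i_0},
\]
so $w := g_{i_0+1}/g_{i_0} \in k(x)$ satisfies $\sigma w - w = (i_0 + nx)\,\eta$.

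The heart of the argument is now a passage to Laurent series at infinity in $x$. Since $\sigma x = x$, the embedding $k(x) \hookrightarrow k((x^{-1}))$ is compatible with $\sigma$, which acts coefficient-wise. Writing $w = \sum_{j \le J} a_j x^j$ with $a_j \in k$, the relation $\sigma w - w = i_0\eta + nx\, \eta$ must hold coefficient by coefficient; comparing the coefficient of $x^1$ yields $\sigma a_1 - a_1 = n\eta$. Since $n \ge 1$, the element $v := a_1/n \in k$ then satisfies the forbidden relation $\sigma v - v = \eta$, giving the required contradiction with Theorem~\ref{thm:pisiProp}(ii).

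The main bookkeeping obstacles I expect are (a) justifying the Vandermonde computation of $\alpha^n$ as an element of $K$, which reduces to the polynomial identity $\binom{a+b}{m} = \sum_{i+j=m}\binom{a}{i}\binom{b}{j}$ applied iteratively with $a = b = x$, and (b) checking that the Laurent-series embedding $k(x) \hookrightarrow k((x^{-1}))$ genuinely intertwines $\sigma$; both are routine once one notes that $\sigma$ fixes $x$ and that $\binom{nx}{m}$ is a polynomial in $x$.
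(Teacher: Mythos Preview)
Your proposal is correct and follows essentially the same route as the paper: reduce via Theorem~\ref{thm:pisiProp}(i) to showing $\alpha$ is not a $\sigma$-radical, compute $\alpha^n$ by Vandermonde, and compare the two lowest-order coefficients in $\sigma g = \alpha^n g$ to obtain $w \in k(x)$ with $\sigma w - w = (i_0 + nx)\eta$. The only divergence is in the final step: the paper simply divides by the nonzero element $i_0 + nx \in k(x)$ to get $w' \in k(x)$ with $\sigma w' - w' = \eta$, contradicting that $t$ is a \sigmaSE-monomial over $k(x)$ (which follows from the hypothesis over $k$ via Lemma~\ref{Lemma:XExt}). Your Laurent-series-in-$x^{-1}$ extraction works too and has the minor virtue of landing directly in $k$, but it is more elaborate than necessary; once you have $\sigma w - w = (i_0+nx)\eta$, division by $i_0+nx$ finishes the argument in one line.
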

\begin{proof}
By Theorem~\ref{thm:pisiProp}(i), it suffices to show that $\alpha$ is not a $\sigma$-radical over $k(x)((t^{-1}))$.
Suppose that it is. Then there are $r, n \in \set Z$, $r>0$, and $u = \sum_{i=n}^{\infty} a_i t^{-i}$ with 
$a_i \in k(x)$ such that $\alpha^r = \sigma u / u$. Since $\alpha$ is a binomial series and by
(\ref{auto}) we obtain
\begin{eqnarray*}
\alpha^r \ &=&\ \left(\sum_{i=0}^{\infty}\binom{x}{i}\eta^i t^{-i}\right)^r\ =\ 
\sum_{i=0}^{\infty}\binom{r x}{i}\eta^i t^{-i}\ =\ 1 + r x \eta t^{-1} + O(t^{-2}), \\
\frac{\sigma u}{u}
\ &=&\ \frac{\sum_{i = n}^\infty \left(\sum_{j=n}^i {i-1 \choose i-j} (-\eta)^{i-j} \,\sigma a_j\right) t^{-i}}
{\sum_{i=n}^{\infty} a_i t^{-i}} \\
\ &=&\ \frac{\sigma a_n + (\sigma a_{n+1} - n\eta \sigma a_n)t^{-1} + O(t^{-2})}
{a_n + a_{n+1}t^{-1} + O(t^{-2})} \\
\ &=&\ \frac{\sigma a_n}{a_n}\left(1 + \left(\sigma\frac{a_{n+1}}{a_n} - n\eta - \frac{a_{n+1}}{a_n}\right)t^{-1}
+ O(t^{-2}) \right).
\end{eqnarray*}
Comparison of coefficients of $t^{-i}$ for $i=0,1$ in
$\alpha^r$ and $\sigma u / u$ shows that $\sigma a_n = a_n$ and $\sigma (a_{n+1}/a_n) - a_{n+1}/a_n = (n + rx)\eta$.
Note that $n + rx \ne 0$ since $x$ is transcendental over $k$, so $\sigma w - w = \eta$ with 
$w = a_{n+1}/(a_n(n + rx)) \in k(x)$. Theorem~\ref{thm:pisiProp}(ii) now implies that $t$ is not a \sigmaSE-monomial
over $k(x)$, a contradiction.
\end{proof}

In particular, we have the following property.

\begin{corollary}\label{Cor:UnchangedConstant}
	Let $t$, $x$ and $\xi$ be as in Lemma \ref{Lemma:UnchangedConstantXi}. Then 
	$\Const_\sigma\, k(x)((t^{-1}))(\xi) = (\Const_\sigma k)(x)$.
\end{corollary}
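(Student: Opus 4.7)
The plan is to peel off the three extensions one by one. Since Lemma~\ref{Lemma:UnchangedConstantXi} has already established that $\xi$ is a \piE-monomial over $F := k(x)((t^{-1}))$, the very definition of a \pisiSE-monomial yields $\Const_\sigma F(\xi) = \Const_\sigma F$ for free. Next, I would invoke Lemma~\ref{Lemma:LaurentExt} with $k(x)$ in place of its $F$ to peel off the Laurent series layer, obtaining $\Const_\sigma k(x)((t^{-1})) = \Const_\sigma k(x)$. Finally, Lemma~\ref{Lemma:XExt} gives $\Const_\sigma k(x) = (\Const_\sigma k)(x)$, and chaining the three equalities delivers the corollary.

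The one step which is not automatic is verifying the hypothesis of Lemma~\ref{Lemma:LaurentExt}: we are given that $t$ is a \sigmaSE-monomial over $k$, but we need it to be one over $k(x)$. Transcendence of $t$ over $k(x)$ is clear from the algebraic independence of $t$ and $x$ over $k$, and $\sigma t - t = \eta \in k \subseteq k(x)$ is immediate. By the characterization in Theorem~\ref{thm:pisiProp}(ii), what remains is to rule out the existence of some $w \in k(x)$ with $\sigma w - w = \eta$. This is the one real obstacle.

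To dispose of it, I would exploit $\sigma x = x$. Writing any candidate $w \in k(x)$ as $w = P + R$ where $P \in k[x]$ and $R$ is a proper fraction (numerator degree strictly less than the degree of the monic denominator), I note that because $\sigma$ acts only on coefficients in $k$, both $k[x]$ and the space of proper fractions are $\sigma$-stable and degrees are preserved. Hence the decomposition $(\sigma P - P) + (\sigma R - R) = \eta \in k$ splits into $\sigma R = R$ and $\sigma P - P = \eta$. Expanding $P = \sum_i p_i x^i$ with $p_i \in k$ and comparing coefficients of $x^0$ reduces the latter to $\sigma p_0 - p_0 = \eta$ for a single element $p_0 \in k$, contradicting Theorem~\ref{thm:pisiProp}(ii) applied to $t$ over $k$. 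With this obstruction cleared, the corollary follows immediately from the chain above.
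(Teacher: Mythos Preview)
Your proof is correct and follows the same three-layer chain as the paper: remove $\xi$ via Lemma~\ref{Lemma:UnchangedConstantXi}, remove the Laurent-series layer via Lemma~\ref{Lemma:LaurentExt} with $F=k(x)$, and finally remove $x$ via Lemma~\ref{Lemma:XExt}. The only point of divergence is how you verify the hypothesis of Lemma~\ref{Lemma:LaurentExt}, namely that $t$ is a \sigmaSE-monomial over $k(x)$. The paper checks the defining condition $\Const_\sigma k(x)(t)=\Const_\sigma k(x)$ directly by rewriting $k(x)(t)=k(t)(x)$ and applying Lemma~\ref{Lemma:XExt} twice, whereas you invoke the characterization of Theorem~\ref{thm:pisiProp}(ii) and rule out a telescoper $w\in k(x)$ by the polynomial-plus-proper-fraction decomposition. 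Both arguments are valid; the paper's is a touch shorter since it recycles Lemma~\ref{Lemma:XExt}, while yours is more hands-on and makes the obstruction concrete.
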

\begin{proof}
	In order to be able to use Lemma \ref{Lemma:LaurentExt}, we need to show first that $t$ is a \sigmaSE-monomial over
	$k(x)$. By the assumptions of Lemma \ref{Lemma:UnchangedConstantXi}, $t$ is transcendental over $k(x)$. Furthermore,
	\begin{eqnarray*}
		\Const_\sigma\, k(x)(t)\ &=&\ \Const_\sigma\, k(t)(x) \quad {\rm by\ the\ assumptions\ of
			\ Lemma\ \ref{Lemma:UnchangedConstantXi}} \\
		\ &=&\ (\Const_\sigma k(t))(x) \quad {\rm by\ Lemma\ \ref{Lemma:XExt}\ with\ } F=k(t) \\
		\ &=&\ (\Const_\sigma k)(x) \quad {\rm since\ } t \mbox{\ is\ a\ \sigmaSE-monomial\ over\ } k \\
		\ &=&\ \Const_\sigma\, k(x) \quad {\rm by\ Lemma\ \ref{Lemma:XExt}\ with\ } F=k, 
	\end{eqnarray*}
	so $t$ is indeed a \sigmaSE-monomial over $k(x)$. Hence
	\begin{eqnarray*}
		\Const_\sigma\, k(x)((t^{-1}))(\xi)\ &=&\ \Const_\sigma\, k(x)((t^{-1})) \quad {\rm by\ 
			\ Lemma\ \ref{Lemma:UnchangedConstantXi}} \\
		\ &=&\ \Const_\sigma\, k(x) \quad {\rm by\ Lemma\ \ref{Lemma:LaurentExt}\ with\ } F=k(x) \\
		\ &=&\ (\Const_\sigma k)(x) \quad {\rm by\ Lemma\ \ref{Lemma:XExt}\ with\ } F=k,
	\end{eqnarray*}
	proving the claim.
\end{proof}

For the termination of our degree bounding procedure we need Lemma~\ref{Lemma:ShiftRelationInXi} below. There a double summation appears that we prove in Lemma~\ref{Lemma:DoubleSum} first. It is remarkable that in our proof below we will utilize symbolic summation algorithms from~\cite{Schneider:05c}  that utilize as backbone our solvers of parameterized linear difference equations that we are currently investigating.

\begin{lemma}\label{Lemma:DoubleSum}
Let $x$ be transcendental\footnote{As a consequence this identity also holds if one replaces $x$ by any element of $k$ or a ring extension of it.} over $k$. For $l\in\set N$ and $a,b\in k$ we have
\begin{equation}\label{Equ:DoubeSumId}
\sum_{k=0}^l\sum_{r=0}^k(-1)^{k-r}a^{l-r}b^{r}\binom{k-1}{k-r}\binom{x}{r}\binom{x}{l-k}=\binom{x}{l}(a+b)^l.\end{equation} 
\end{lemma}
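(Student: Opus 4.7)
My approach is to reduce the double sum to the Chu--Vandermonde convolution followed by the subset-of-a-subset identity, thus avoiding any recourse to the symbolic-summation machinery alluded to in the surrounding discussion.

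First I would reindex the inner sum by $j=k-r$, rewriting the left-hand side of~\eqref{Equ:DoubeSumId} as
$$\sum_{r=0}^{l} a^{l-r} b^{r}\binom{x}{r}\sum_{j=0}^{l-r}(-1)^{j}\binom{r+j-1}{j}\binom{x}{l-r-j}.$$
Using the upper-negation identity $(-1)^{j}\binom{r+j-1}{j}=\binom{-r}{j}$ (valid over $\mathbb{Z}$ for all integers $r,j\geq 0$; at $r=0$ both sides reduce to $\delta_{j,0}$), the inner sum becomes a Chu--Vandermonde convolution
$$\sum_{j=0}^{l-r}\binom{-r}{j}\binom{x}{l-r-j}=\binom{x-r}{l-r}.$$
Hence the left-hand side equals $\sum_{r=0}^{l} a^{l-r} b^{r}\binom{x}{r}\binom{x-r}{l-r}$.

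Next I would apply the subset-of-a-subset identity $\binom{x}{r}\binom{x-r}{l-r}=\binom{x}{l}\binom{l}{r}$, which is a polynomial identity in $x$ over $\mathbb{Z}$. Factoring out $\binom{x}{l}$ and applying the binomial theorem to the remaining sum yields $\binom{x}{l}(a+b)^{l}$, as required.

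The main obstacle is really one of pattern recognition rather than computation: the factor $\binom{k-1}{k-r}$ disguises a Chu--Vandermonde convolution, and once the correct reindexing and upper negation are performed the identity collapses in two lines. Since every identity I invoke is a polynomial identity in $x$ over $\mathbb{Z}$, the assumption that $x$ is transcendental over $k$ is used only to interpret $\binom{x}{n}$ as an element of $k[x]$; no special arithmetic of $k$ enters the argument, and so (as the footnote observes) the identity in fact extends verbatim to any ring element substituted for $x$.
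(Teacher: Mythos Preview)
Your proof is correct and is genuinely different from the paper's own argument. The paper proves Lemma~\ref{Lemma:DoubleSum} by computer-algebra certification: it uses Zeilberger's creative telescoping (via the \texttt{Sigma} package) to derive and verify recurrences~\eqref{Equ:InnerSumRec1}, \eqref{Equ:InnerSumRec2} for the inner sum $F(l,k)$, then again creative telescoping to obtain the first-order recurrence~\eqref{Equ:DoubeSumRec} for the full double sum $S(l)$, and finally checks that the right-hand side of~\eqref{Equ:DoubeSumId} satisfies the same recurrence with matching initial value. The authors even remark that this is a self-referential use of the very PLDE solver being developed in the paper.

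Your route is purely combinatorial and far more economical: after the change of index $j=k-r$ and upper negation $(-1)^j\binom{r+j-1}{j}=\binom{-r}{j}$, the inner sum collapses by Chu--Vandermonde to $\binom{x-r}{\,l-r}$, and then the trinomial-revision identity $\binom{x}{r}\binom{x-r}{l-r}=\binom{x}{l}\binom{l}{r}$ together with the binomial theorem finishes the job. Each step is a standard polynomial identity in $x$ over $\mathbb{Z}$, so the argument is independent of the ambient field $k$ and immediately justifies the footnote about substituting arbitrary ring elements for $x$. What you lose compared to the paper's proof is the illustrative value of seeing the summation machinery in action; what you gain is a two-line human-readable derivation that requires no external packages or certificate verification.
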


\newcommand{\ShortVersion}[1]{}

\begin{proof}
Denote the double sum on the left-hand side by $S(l)$,  
the inner sum by
$$F(l,k)=\sum_{r=0}^kf(l,k,r)=\sum_{r=0}^k(-1)^{k-r}a^{l-r}b^{r}\binom{k-1}{k-r}\binom{x}{r}\binom{x}{l-k}$$ 
and its innermost summand by $f(l,k,r)$. In a first step we compute for the inner sum $F(l,k)$ the recurrences
\begin{multline}\label{Equ:InnerSumRec1}
k (a
+b
) (k
-l
) (1
+k
-l
) F(l,k)\\
-(1
+k
-l
) (1
+k
-l
+x
) (2 a
+b
+2 a k
+b k
-b x
) F(l,1+k)\hspace*{1cm}\\
+a (2+k) (1
+k
-l
+x
) (2
+k
-l
+x
) F(l,2+k)
=0
\end{multline}
and
\begin{equation}\label{Equ:InnerSumRec2}
a (k
-l
+x
) F(l,k)
+(-1
+k
-l
) F(1+l,k)
=0.
\end{equation}
This can be accomplished, e.g., with the summation packages \cite{PS:95,Schneider:07a}. Internally, Zeilberger's creative telescoping paradigm~\cite{PWZ} is utilized (compare Remark~\ref{Remark:Summation}) which enables one to verify the correctness of the computed recurrences. E.g., one derives the summand recurrence
\begin{multline}\label{Equ:InnerSummandRec}
\alpha_0(l,k,r)f(l,k,r)+\alpha_1(l,k,r)f(l,k+1,r)+\alpha_2(l,k,r)f(l,k+2,r)\\
=g(l,k,r+1)-g(l,k,r)
\end{multline}
with
\begin{align*}
\alpha_0(l,k,r)&=k^2 (a
+b
),\\
\alpha_1(l,k,r)&=-\frac{k (1
	+k
	-l
	+x
	)(
	2 a
	+b
	+2 a k
	+b k
	-b x
	)}{k
	-l
},\\
\alpha_2(l,k,r)&=\frac{a k (2+k) (1
	+k
	-l
	+x
	) (2
	+k
	-l
	+x
	)}{(k
	-l
	) (1
	+k
	-l
	)},\\
g(l,k,r)&=\frac{(-1)^{1+k+r} a^{1+l-r} b^r k^2 (-1+r) r}{(1
	+k
	-r
	) (2
	+k
	-r
	)} \binom{-1+k}{k	
	-r
} \binom{x}{r} \binom{x}{-k+l}
\end{align*}
which holds for all $r,k,l\in\set N$ with $0\leq r\leq k$ and $k<l$. Thus one can sum~\eqref{Equ:InnerSummandRec} over $r$ from $0$ to $k$. Taking care of missing terms one obtains~\eqref{Equ:InnerSumRec1} together with the information that it holds for all $k,l\in\set N$ with $k<l$. The case $k=l$ can be checked separately. The verification of~\eqref{Equ:InnerSumRec2} is even simpler and is omitted here. \\
Together with the verified recurrences~\eqref{Equ:InnerSumRec1} and~\eqref{Equ:InnerSumRec2} we can utilize the summation package \texttt{Sigma} to compute the recurrence
\begin{equation}\label{Equ:DoubeSumRec}
(a + b) (l - x) S(l) + (1 + l)S(1 + l)=0
\end{equation}
for the double sum $S(l)$.
Internally we apply the algorithms from~\cite{Schneider:05c} (which activates our parameterized linear difference equation solver explored in this section) which provide the summand recurrence
\begin{equation}\label{Equ:OuterSummandRec}
\beta_0(l)F(l,k)+\beta_1(l)F(l+1,k)=G(l,k+1)-G(l,k)
\end{equation}
with 
\begin{equation}\label{Equ:GTelescoper}
\begin{split}
\beta_0(l)&=-\frac{x (a+b)}{a},\quad \beta_0(l)=-\frac{(1+l) x}{a (l-x)},\\
G(l,k)&=g_0(l,k)F(l,k)+g_1(l,k)F(l,1+k) 
\end{split}
\end{equation}
where
\begin{align*}
g_0(l,k)&=\frac{1}{a (-1
	+k
	-l
	) (l
	-x
	)} \big(
-a k^2
-b k^2
+a k^3
+b k^3
+a k l
\\
&\hspace*{2cm}
+b k l-2 a k^2 l
-2 b k^2 l
+a k l^2
+b k l^2
-a k x
-b l x
\\
&\hspace*{2cm}
+b k l x
-b l^2 x
+b x^2
-a k x^2
-b k x^2
+b l x^2
\big),\\
g_1(l,k)&=\frac{1}{(k
	-l
	) (l
	-x
	)} \big(
-k
-2 k^2
-k^3
+l
+3 k l
+2 k^2 l
-l^2
-k l^2\\
&\hspace*{2cm}
-x
-3 k x
-2 k^2 x
+2 l x
+2 k l x
-x^2
-k x^2
\big).
\end{align*}
The correctness of the recurrence can be verified as follows. 
Plug~\eqref{Equ:GTelescoper} into the expression $\beta_0(l)F(l,k)+\beta_1(l)F(l+1,k)-(G(l,k+1)-G(l,k))$ and replace the occurrences of $F(l+1,k)$ and $F(l,k+2)$ by linear combinations of $F(l,k)$ and $F(l,k+1)$ using the recurrences~\eqref{Equ:InnerSumRec1} and~\eqref{Equ:InnerSumRec2}. By simple rational function arithmetic it turns out that the obtained expression collapses to 0. Since the expressions in~\eqref{Equ:GTelescoper} are well defined and the recurrences~\eqref{Equ:InnerSumRec1} and~\eqref{Equ:InnerSumRec2} are valid for all $l,k\in\set N$ with $k<l$, it follows that~\eqref{Equ:GTelescoper} satisfies the summand recurrence~\eqref{Equ:OuterSummandRec} for all $l,k\in\set N$ with $k<l$.
Hence we can sum~\eqref{Equ:OuterSummandRec} over $k$ from $0$ to $l-1$ and taking care of missing terms (in particular the summand at $k=l$) shows that~\eqref{Equ:DoubeSumRec} holds for all $l\in\set N$.\\ 
To complete the proof, one verifies that also 
the right-hand side of~\eqref{Equ:DoubeSumId} is a solution of the recurrence~\eqref{Equ:DoubeSumRec} and that both sides in~\eqref{Equ:DoubeSumId} agree at $l=0$.
\end{proof}

\ShortVersion{
	\begin{proof}
		Denote the double sum on the left-hand side by $S(l)$. Using the summation package~\texttt{Sigma} (executing the methods from~\cite{Schneider:05c}) one can compute the recurrence
		\begin{equation*}
		(a + b) (l - x) S(l) + (1 + l)S(1 + l)=0.
		\end{equation*}
As a byproduct also proof certificates are produced which enable one to verify the correctness of the derived recurrence; for further details see~\cite{ABPSArxiv:20}.		
		To complete the proof, one verifies that also 
		the right-hand side of~\eqref{Equ:DoubeSumId} is a solution of the above recurrence and that both sides in~\eqref{Equ:DoubeSumId} agree at $l=0$.
\end{proof}}

\begin{lemma}\label{Lemma:ShiftRelationInXi}
Let $t$, $x$ and $\xi$ be as in Lemma \ref{Lemma:UnchangedConstantXi} with $\eta=\sigma(t)-t\in k$ and define $\eta_j=\sum_{\nu=0}^{j-1}\sigma^{\nu}(\eta)\in k$.
Then for $i,j\in\set N$ we have
	$$\sigma^j(t^{-i}\xi)=\xi\sum_{l=0}^{\infty}\binom{x-i}{l}\eta_j^lt^{-i-l}.$$
\end{lemma}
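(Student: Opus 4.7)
The plan is to proceed by induction on $j$, working throughout in the formal Laurent series ring $F((t^{-1}))$ where $F = k(x)((t^{-1}))(\xi)$. The base case $j=0$ is immediate: by convention $\eta_0 = \sum_{\nu=0}^{-1}\sigma^\nu(\eta) = 0$, so on the right-hand side only the $l=0$ summand survives, yielding $\xi\,\binom{x-i}{0}\,t^{-i} = \xi\,t^{-i} = \sigma^0(t^{-i}\xi)$.

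For the inductive step, I would apply $\sigma$ to both sides of the induction hypothesis. Using $\sigma(\xi) = \alpha\,\xi$ with $\alpha = \sum_{k=0}^{\infty}\binom{x}{k}\eta^k t^{-k}$ and $\sigma(t^{-1}) = (t+\eta)^{-1}$, I obtain
\[
\sigma^{j+1}(t^{-i}\xi) \;=\; \alpha\,\xi\,\sum_{l=0}^{\infty}\binom{x-i}{l}\sigma(\eta_j)^l (t+\eta)^{-i-l}.
\]
Then I would expand $(t+\eta)^{-i-l} = t^{-i-l}(1+\eta t^{-1})^{-i-l} = \sum_r\binom{-i-l}{r}\eta^r t^{-i-l-r}$ and multiply by $\alpha$. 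The Cauchy product of $\alpha = \sum_k \binom{x}{k}\eta^k t^{-k}$ with this series, using Vandermonde's convolution $\sum_{k}\binom{x}{k}\binom{-i-l}{r-k} = \binom{x-i-l}{r}$, produces $\alpha(t+\eta)^{-i-l} = \sum_r\binom{x-i-l}{r}\eta^r t^{-i-l-r}$.

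After substituting back, collecting the coefficient of $\xi\,t^{-i-m}$ yields
\[
\sum_{l=0}^{m}\binom{x-i}{l}\binom{x-i-l}{m-l}\sigma(\eta_j)^l\,\eta^{m-l}.
\]
Applying the standard binomial identity $\binom{x-i}{l}\binom{x-i-l}{m-l} = \binom{x-i}{m}\binom{m}{l}$ and the binomial theorem, this simplifies to $\binom{x-i}{m}(\eta + \sigma(\eta_j))^m$. The induction closes by the observation $\eta_{j+1} = \eta + \sigma(\eta_j)$, which follows directly from the telescoping $\sigma(\eta_j) = \sum_{\nu=1}^{j}\sigma^\nu(\eta) = \eta_{j+1} - \eta$.

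The main obstacle is conceptual rather than computational: one must verify that all the formal series manipulations (products of series in $t^{-1}$, rearrangements of double sums) are legitimate in $F((t^{-1}))$. This is guaranteed because the series $\eta^k t^{-k}$ and $(t+\eta)^{-i-l}$ have valuations tending to $+\infty$, so their product is a well-defined element of $F((t^{-1}))$ and Cauchy products converge in the $t^{-1}$-adic topology. Notably, this direct induction sidesteps Lemma~\ref{Lemma:DoubleSum}; the collapse of the double sum is handled by a single binomial identity plus the binomial theorem.
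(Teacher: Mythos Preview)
Your argument is correct and is genuinely different from the paper's proof. The paper first establishes the special case $i=0$, i.e.\ $\sigma^j(\xi)=\xi\sum_l\binom{x}{l}\eta_j^l t^{-l}$, by induction on $j$; in the induction step it applies the closed-form action~\eqref{auto} of $\sigma$ on Laurent series, which introduces the factors $\binom{l-1}{l-r}(-\eta)^{l-r}$ and forces the authors to invoke the double-sum identity of Lemma~\ref{Lemma:DoubleSum} (itself proved with creative telescoping and the \texttt{Sigma} package). Only afterwards do they multiply by $\sigma^j(t^{-i})=(t+\eta_j)^{-i}$ and collapse the product via Chu--Vandermonde. You instead induct directly on the full statement for arbitrary $i$, expand $\sigma(t^{-i-l})=(t+\eta)^{-i-l}$ as a binomial series, and use Vandermonde once to compute $\alpha\,(t+\eta)^{-i-l}$; the remaining double sum then reduces by the trinomial-revision identity $\binom{x-i}{l}\binom{x-i-l}{m-l}=\binom{x-i}{m}\binom{m}{l}$ and the ordinary binomial theorem. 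This route is strictly more elementary: it bypasses Lemma~\ref{Lemma:DoubleSum} entirely and needs no machine-generated certificates. One minor slip: your opening sentence places the computation in ``$F((t^{-1}))$ with $F=k(x)((t^{-1}))(\xi)$'', which is not the right ambient ring; everything already lives in $k(x)((t^{-1}))(\xi)$, and all series manipulations occur in the coefficient field $k(x)((t^{-1}))$ multiplied by the single factor $\xi$.
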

\begin{proof}
The case $j=0$ holds trivially with $\eta_0=0$.
For $j\geq1$ we prove
\begin{equation}
\label{sigma_j_xi}
\sigma^j(\xi)=\xi\sum_{l=0}^{\infty}\binom{x}{l}\eta_j^lt^{-l}
\end{equation}
or equivalently
$$\prod_{k=0}^{j-1}\sigma^k(\alpha)=\sum_{l=0}^{\infty}\binom{x}{l}\eta_j^lt^{-l}$$
by induction on $j$. For $j=1$ the statement clearly holds. Now suppose that it holds for $j\geq1$. Then
\begin{align*}
\prod_{k=0}^{j}\sigma^k(\alpha)=&\alpha\,\sigma\Big(\prod_{k=1}^{j-1}\sigma^k(\alpha)\Big)=
\Big(\sum_{l=0}^{\infty}\binom{x}{l}\eta^lt^{-l}\Big)
\sigma\Big(\sum_{l=0}^{\infty}\binom{x}{l}\eta_j^lt^{-l}\Big)\\
=&\Big(\sum_{l=0}^{\infty}\binom{x}{l}\eta^lt^{-l}\Big)
\Big(\sum_{l=0}^{\infty}\Big(\sum_{r=0}^l\binom{l-1}{l-r}(-\eta)^{l-r}\binom{x}{r}\sigma(\eta_j)^r\Big) t^{-l}\Big)\\
=&\sum_{l=0}^{\infty}t^{-l}\sum_{k=0}^l\Big(\sum_{r=0}^k\binom{k-1}{k-r}(-\eta)^{k-r}\binom{x}{r}\sigma(\eta_j)^r\Big)\binom{x}{l-k}\eta^{l-k}\\
=&\sum_{l=0}^{\infty}\binom{x}{l}(\eta+\sigma(\eta_j))^lt^{-l}=\sum_{l=0}^{\infty}\binom{x}{l}\eta_{j+1}^lt^{-l}
\end{align*}
which proves~\eqref{sigma_j_xi}; in the second last line we carried out the Cauchy product and in the last line we used Lemma~\ref{Lemma:DoubleSum} with $a=\eta$ and $b=\sigma(\eta_j)$. 
Finally, we have by (\ref{sigma_j_xi})
\begin{eqnarray*}
\frac{\sigma^j(t^{-i} \xi)}{\xi} &=& \sigma^j(t^{-i}) \frac{\sigma^j(\xi)}{\xi} \ =\ (t+\eta_j)^{-i} \sum_{\ell=0}^\infty \binom{x}{\ell} \eta_j^\ell t^{-\ell}\\
&=& \left(1+\frac{\eta_j}{t}\right)^{-i} \sum_{\ell=0}^\infty \binom{x}{\ell} \eta_j^\ell t^{-\ell-i}\ =\ \sum_{u=0}^\infty \sum_{\ell=0}^\infty \binom{-i}{u}\binom{x}{\ell} \left(\frac{\eta_j}{t}\right)^{\!u+\ell} \frac{1}{t^{i}}\\
&=& \sum_{m=0}^\infty \left(\sum_{\ell=0}^\infty \binom{-i}{m-\ell}\binom{x}{\ell}\right) \left(\frac{\eta_j}{t}\right)^{\!m}\frac{1}{t^{i}}
\ =\ \sum_{m=0}^\infty \binom{x-i}{m} \eta_j^m t^{-i-m}
\end{eqnarray*}
where, in the last line, we introduced $m=u+\ell$, replacing $u$ by $m-\ell$, and used Chu-Vandermonde's convolution.
\end{proof}


Before we can start to present our algorithm to bound the degree of polynomial solutions of \sigmaSE-monomials, we need the following result from linear algebra.

\begin{lemma}\label{Lemma:NullSpace}
Let $A(x)\in F[x]^{r\times s}$ be a matrix of polynomials over a field $F$. Then we can compute
a $F[x]$-basis $v_1(x),\dots,v_{\nu}(x)\in F[x]^s$ of $\Ker A(x)$ together with $\delta\in\set N$
such that for any $n\in\set N$ with $n\geq\delta$, the vectors $v_1(n),\dots,v_{\nu}(n) \in F^s$ form a basis of $\Ker A(n)$.
\end{lemma}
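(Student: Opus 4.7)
The plan is to reduce the problem to the Smith normal form of $A(x)$ over the principal ideal domain $F[x]$. First I would invoke a standard algorithm to compute unimodular matrices $U(x)\in F[x]^{r\times r}$ and $V(x)\in F[x]^{s\times s}$ together with a rectangular diagonal matrix $D(x)\in F[x]^{r\times s}$ such that $A(x)=U(x)\,D(x)\,V(x)$, where the diagonal entries of $D(x)$ are $d_1(x)\mid d_2(x)\mid\cdots\mid d_\rho(x)$ (with the remaining entries zero) and $\rho$ is the rank of $A(x)$ over $F(x)$. Since $F[x]$ is Euclidean, this step is effective.

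Next I would read off the $F[x]$-basis of $\Ker A(x)$. A vector $v\in F[x]^s$ satisfies $A(x)v=0$ iff $D(x)V(x)v=0$, which by the shape of $D(x)$ is equivalent to the first $\rho$ components of $V(x)v$ vanishing. Hence $\Ker A(x)$ coincides with the $F[x]$-span of the last $s-\rho$ columns of $V(x)^{-1}$, and these columns already lie in $F[x]^s$ because $V(x)$ is unimodular. Taking $v_1(x),\dots,v_\nu(x)$ to be precisely these columns, with $\nu=s-\rho$, gives the required basis.

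The remaining task is to pin down $\delta$. For every $n\in\set N$ the matrix $V(n)$ is invertible since $\det V(x)\in F^{*}$ is a nonzero constant; hence $v_1(n),\dots,v_\nu(n)$ are linearly independent in $F^s$ and lie in $\Ker A(n)$ by construction. Thus they form a basis of $\Ker A(n)$ as soon as $\dim\Ker A(n)=\nu$, i.e.\ $\rank A(n)=\rho$. Since $\rank A(n)=\rank D(n)$ and the invariant factors satisfy $d_1\mid\cdots\mid d_\rho$, this is equivalent to $d_\rho(n)\neq 0$. The nonzero polynomial $d_\rho\in F[x]$ has only finitely many integer roots, so I set $\delta$ to be any natural number strictly larger than all of them; for every $n\geq\delta$ the evaluated vectors then form a basis of $\Ker A(n)$.

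The main obstacle one might worry about is that evaluation $x\mapsto n$ could destroy linear independence of the basis vectors at sporadic values of $n$. The use of Smith normal form sidesteps this precisely because unimodularity of $V(x)$ ensures that $V(n)$ remains invertible for every $n$; the only residual obstruction is a genuine rank drop of $A(n)$ itself, which is captured exactly by the zero set of the largest invariant factor $d_\rho$.
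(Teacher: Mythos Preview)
Your proof is correct and follows the same Smith-normal-form approach as the paper. Your analysis is in fact a bit tighter: by explicitly invoking the unimodularity of $V(x)$ you see that $V(n)$ is invertible for every $n$, so the only obstruction is the rank drop of $D(n)$, governed by the roots of the top invariant factor $d_\rho$; the paper instead tracks the integer roots of all the $d_i$ together with the multipliers used during elimination, which---since $d_i\mid d_\rho$ and SNF transformations over $F[x]$ multiply only by units---amounts to the same bound.
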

\begin{proof}
By row and column operations in the Euclidean domain $F[x]$, compute the Smith normal form of $A(x)$, \ie $\mu \in
\set N$, a diagonal matrix $B(x) = \text{diag }(d_1(x),\dots,d_{\mu}(x),0,\dots,0) \in F[x]^{r\times s}$ with
$d_i(x)\in F[x]\setminus\{0\}$ for $1 \le i \le \mu$ and $d_i\mid d_{i+1}$ for $1\leq i<\mu$, and invertible matrices 
$P(x)\in F[x]^{r\times r}$ and $Q(x)\in F[x]^{s\times s}$ such that $A(x) = P(x)B(x)Q(x)$.
Then the last $\nu:=s-\mu$ columns of $Q^{-1}(x)$, say $v_1(x),\dots,v_{\nu}(x)$, form a $F[x]$-basis of\/ $\Ker A(x)$.
During this elimination collect all those polynomials $p$ used for row multiplications $r_i\to p\,r_i$, 
resp.\ column multiplications $c_j\to p\,c_j$, and compute the set of their non-negative integer roots. Let $\delta$
be a non-negative integer, larger than any of these roots and any of the non-negative integer roots of $d_i(x)$
for $1\leq i\leq\mu$. Then for each $n\in\set N$ with $n\geq\delta$, the operations employed in the computation of the 
Smith normal form remain well defined even after substituting $n$ for $x$, and the rank of $B(n)$ equals that of
$B(x)$. Hence $v_1(n),\dots,v_{\nu}(n)$ is a basis of\/ $\Ker A(n)$ for all $n\geq\delta$.
\end{proof}


The following result will be needed in the base case of our degree bounding algorithm.

\begin{lemma}\label{Lemma:LCConstraint}
Let $(k,\sigma)$ be a difference field and let $t$ be a \sigmaSE-monomial over $k$.
For any $L\in k[t][E;\sigma]$, $w\in k$ and $d\in\set N$ we have
\begin{equation}\label{Equ:LCConstraint}
L(w\,t^d)=L_{\alpha}(w)t^{\alpha+d}+g
\end{equation}
where $L_{\alpha}$ with $\alpha=\deg L $ is given by~\eqref{eq:project}, and $g\in k[t]$ with $\deg g<\alpha+d$.
\end{lemma}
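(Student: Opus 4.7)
The plan is to prove the lemma by direct computation, exploiting the $\Sigma^*$-shape $\sigma t = t+\eta$ (with $\eta\in k$) to control how $\sigma^i$ acts on powers of $t$. First I would note that $\sigma^i t = t + \eta_i$ for some $\eta_i\in k$ (with $\eta_0=0$), so that for any $w\in k$ and any $i\ge 0$,
\[
\sigma^i(w\,t^d) \;=\; \sigma^i(w)\,(t+\eta_i)^d \;=\; \sigma^i(w)\,t^d \;+\; r_i
\]
with $r_i\in k[t]$ satisfying $\deg r_i<d$. This is the only place where the $\Sigma^*$-assumption is used, and it reduces the problem to tracking degrees.

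Next I would take the decomposition $L=\sum_{j=\nu(L)}^{\alpha} t^j L_j$ from \eqref{eq:project} and, writing $L_j=\sum_i a_{ji}E^i$ with $a_{ji}\in k$, compute
\[
L_j(w\,t^d) \;=\; \sum_i a_{ji}\,\sigma^i(w\,t^d) \;=\; \Bigl(\sum_i a_{ji}\,\sigma^i(w)\Bigr) t^d + \tilde r_j \;=\; L_j(w)\,t^d + \tilde r_j,
\]
where $\tilde r_j = \sum_i a_{ji}\,r_i \in k[t]$ has $\deg \tilde r_j < d$. Multiplying by $t^j$ gives $t^jL_j(w\,t^d)=L_j(w)\,t^{j+d}+t^j\tilde r_j$, where $\deg(t^j\tilde r_j)<j+d\le \alpha+d$.

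Finally, summing over $j=\nu(L),\dots,\alpha$ yields
\[
L(w\,t^d) \;=\; L_\alpha(w)\,t^{\alpha+d} \;+\; \sum_{j<\alpha} L_j(w)\,t^{j+d} \;+\; \sum_{j} t^j\tilde r_j,
\]
and gathering everything except the leading term into $g\in k[t]$ gives $\deg g<\alpha+d$, which is exactly \eqref{Equ:LCConstraint}. There is no genuine obstacle here; the only point to be slightly careful about is that while $\eta_i\in k$ may be nonzero, the binomial expansion of $(t+\eta_i)^d$ still has leading coefficient $1$ in $t$, so the top-degree contribution is exactly $L_j(w)\,t^d$, unaffected by the shift on $w$.
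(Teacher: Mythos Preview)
Your proof is correct and follows essentially the same approach as the paper: both use that $\sigma^i(w\,t^d)=\sigma^i(w)\,t^d+(\text{terms of degree}<d)$ via the $\Sigma^*$-relation $\sigma^i t=t+\eta_i$, then push this through the decomposition $L=\sum_j t^jL_j$ and track degrees. Your version is in fact slightly more streamlined, handling all $L_j$ uniformly rather than first isolating $L_\alpha$ as the paper does.
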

\begin{proof}
Let $\eta=\sigma(t)-t\in k$. It is easy to show by induction on $j$ that for all $j\in\set N$,
\begin{equation}\label{Equ:SigmaPowerRelation}
\sigma^j(t^d)=\Big(t+\sum_{i=0}^{j-1}\sigma^i(\eta)\Big)^d=t^d+g_j
\end{equation}
with $g_j\in k[t]$ and $\deg g_j<d$. By~\eqref{eq:project},
$$L(w\,t^d)=\sum_{j=\nu(L)}^{\alpha}t^jL_j(w\,t^d)$$
where $L_j\in k[E;\sigma]$ for all $j$. Since
$\deg t^jL_j(w\,t^d)\leq j+d$ by~\eqref{Equ:SigmaPowerRelation}, it follows that
\begin{equation}\label{Equ:SemiExpandL}
L(w\,t^d)=t^{\alpha}L_{\alpha}(w\,t^d)+h
\end{equation}
for some $h\in k[t]$ with $\deg h<\alpha+d$. By~\eqref{Equ:SigmaPowerRelation} again,
$$L_{\alpha}(w\,t^d)=\sum_{j=0}^{\mu}a_j\sigma^j(w\,t^d)=\sum_{j=0}^{\mu}a_j\sigma^j(w)(t^d+g_j)=L_{\alpha}(w)t^d+\bar{h}$$
for some $\mu\in\set N$, $a_j\in k$ and $g_j,\bar{h}\in k[t]$ with $\deg g_j,\deg\bar{h} <d$, so~\eqref{Equ:SemiExpandL} implies~\eqref{Equ:LCConstraint}.
\end{proof}

We are now ready to present our degree bounding algorithm.
Let $t$ be a \sigmaSE-monomial over $k$ and let $L\in k[t][E;\sigma]$ with $\alpha:=\deg L \geq0$ and $\Phi_1,\dots,\Phi_m \in k[t]$. Then we are interested in finding a degree bound for the parameterized difference equation
\begin{equation}\label{Equ:OrgEquation}
L(y)=c_1 \Phi_1+\dots+c_m\Phi_m,
\end{equation}
i.e., we seek $b\in\set N$ such that for any solution $y\in k[t]$ with $c_i\in K:=\Const_\sigma(k)$ we have that
$\deg(y)\leq b$. Subsequently, we define $\beta:=\max_{1\leq i\leq m}\deg(\Phi_i)$.
The underlying algorithm proceeds stepwise for $n=0,1,\dots$. At a particular step $n$, it either finds a degree bound
and we are done, or it produces a $b\, (=b_n)\in\set N$ with $b\geq\max(\beta-\alpha,0)+n$ and $r\,(=r_n)\geq1$ vectors 
$f_u\,(=f^{(n)}_u)=(f_{u,0},\dots,f_{u,n})\in k\times k[x]^{n}$ (for $1\leq u\leq r$)  with entries from a polynomial ring $k[x]$ with the following properties.

\medskip

\noindent(i) Degree property: for any $d\geq b$ and any $1\leq u\leq r$ we have
\begin{equation}\label{Equ:LCancelation}
\deg(L\left(\sum_{j=0}^nt^{d-j}f_{u,j}(d)\right))<d+\alpha-n.
\end{equation}
(ii) Completeness: for any $h\in k[t]$ with $d:=\deg(h)\geq b$ and $\deg(L(h))< d+\alpha-n$ there are $\kappa_{u}\in K$
with $(1\leq u \leq r)$ such that
$$\deg\big(h-\sum_{u=1}^r\kappa_u\sum_{j=0}^nt^{d-j}f_{u,j}(d)\big)<d-n.$$
(iii) Maximal degree: we have $0\neq(f_{1,0},\dots,f_{r,0})\in k^r$.

\medskip

After the step $n$ we can take a vector $h=(h_0,\dots,h_{n})\in k\times k[x]^{n}$ (actually an appropriate $K$-linear combination of $f_1,\dots,f_u$) with $h_{0}\in k^*$ such that for  
$$y=h_0\,t^{b}+h_1(b)t^{b-1}+\dots+h_{n}(b)t^{b-n}\in k[t]$$
we have
$$\deg(L(y))<b+\alpha-n.$$
This does not mean that $y$ with $\deg(y)=b$ can be prolonged (by adding lower terms of degree $<b-n$) to a solution~\eqref{Equ:OrgEquation} for some $c_1,\dots,c_m\in K$.
But it is a necessary step towards such a solution. In particular,
if one fails at a certain moment to derive such a vector $h=(h_0,h_1(x),\dots,h_n(x))\in k\times k[x]^{n}$ with $h_0\neq0$ (i.e., there do not exist vectors $f_1,\dots,f_u$ as claimed above whose $K$-linear combination produces $h$) then this shows that the degree of any solution of $y\in k[t]$ for~\eqref{Equ:OrgEquation} is bounded by $b$.
As we will see later, the step $n\to n+1$ can be carried out by solving a particular parameterized linear difference equation in $k$ (and using the already found vectors $f_u^{(n-1)}\in k[x]^n$ with $1\leq u\leq r_{n-1}$ of the previous step $n-1$).
More precisely, our degree bounding strategy can be sketched as follows:

\medskip

\noindent\sigmaSE-\verb|DegreeBound|$(L; \Phi_1,\dots,\Phi_m; k[t])$\\
\verb|input:  |\begin{minipage}[t]{10.8cm} a \sigmaSE-monomial $t$ over $k$, $L\in k[t][E;\sigma]$ with $\alpha:=\deg L \geq0$ and\\ 
	$\Phi_1,\dots,\Phi_m \in k[t]$ with $\beta:=\max_{1\leq i\leq m}\deg(\Phi_i)$;\end{minipage}\\
\verb|output: |a degree bound $b$ of the
solutions in $k[t]$ of~\eqref{Equ:OrgEquation};\\[0.2cm]
\verb| 1. |$b^{(-1)}:=\max(\beta-\alpha,0)$;\\
\verb| 2. for| $n=0,1,2, \dots$ \verb|do|\\
\verb| 3.   |\begin{minipage}[t]{10.8cm}solve a particular instance of a parameterized linear difference equation in $k$ and decide constructively if there are a $b\in\set N$ larger than $b^{(n-1)}$ with $b\geq \max(\beta-\alpha,0)+n$ and vectors $f_1^{(n)},\dots,f_{u}^{(n)}\in k[x]^{n+1}$ such that properties (i), (ii) and (iii) hold;\end{minipage}\\[2pt]
\verb| 4.   if| this is not possible \verb|then|\\ 
\verb| 5.       |\begin{minipage}[t]{9cm} extract a degree bound $b$ of the
	solutions in $k[t]$ of~\eqref{Equ:OrgEquation}\\ and return $b$;\end{minipage}\\ 
\verb| 6.   | $b^{(n)}:=b$.\\

\medskip

We will now supplement the missing details to obtain a complete algorithm to derive the desired degree bound $b$. More precisely, we will proceed as follows:
\begin{enumerate}
	\item The base case $n=0$ is considered in Lemma~\ref{Lemma:DegBoundn=0}.
	\item The induction step (iteration) $n\to n+1$ with $n\geq0$ is elaborated in Lemma~\ref{Lemma:DegBound:n->n+1}.
	\item Finally, the exit of the for-loop is proven in Theorem~\ref{th:SigmaBounds} which completes the proof that a degree bound for \sigmaSE-monomials can be determined.
\end{enumerate}

\begin{lemma}\label{Lemma:DegBoundn=0}
Let $t$ be a \sigmaSE-monomial over $k$ with $K=\Const_\sigma(k)$.
Let $L\in k[t][E;\sigma]$ with $\alpha:=\deg L \geq0$ and $\Phi_1,\dots,\Phi_m \in k[t]$ with $\beta:=\max_{1\leq i\leq m}\deg(\Phi_i)$ and set $b=\max(0,\beta-\alpha)$. If we can solve parameterized linear difference equations with coefficients in $k$, then we can decide if $b$ is a degree bound of the 
solutions in $k[t]$ of~\eqref{Equ:OrgEquation} or one can compute $r_0\geq1$ vectors $f^{(0)}_1,\dots,f^{(0)}_{r_0}\in k^1$ such that properties (i), (ii) and (iii) hold. 
\end{lemma}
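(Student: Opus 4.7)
The plan is to reduce everything to a single homogeneous linear difference equation in $k$, namely one obtained by extracting the top component of $L$ in the decomposition~\eqref{eq:project}. Write $L = \sum_{j=\nu(L)}^{\alpha} t^j L_j$ with $L_j \in k[E;\sigma]$. The key observation, which follows directly from Lemma~\ref{Lemma:LCConstraint}, is that for any $h \in k[t]$ of degree $d$ with leading coefficient $w \in k^*$ one has
\[
\deg L(h) < d + \alpha \ \Longleftrightarrow\ L_\alpha(w) = 0.
\]

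With this in hand I would consider the $K$-vector space
\[
U := \{w \in k\ :\ L_\alpha(w) = 0\},
\]
whose dimension over $K$ is bounded by the order of $L_\alpha$ as an operator. Since by hypothesis we can solve parameterized linear difference equations in $k$, in particular the homogeneous equation $L_\alpha(y) = 0$, we can compute a $K$-basis $f_1,\dots,f_{r_0}$ of $U$ (possibly empty).

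The decision is then made according to the size of $U$. If $U = \{0\}$, I would conclude that $b$ is a degree bound: for any hypothetical solution $y$ of~\eqref{Equ:OrgEquation} with $d := \deg y > b$ and leading coefficient $w\in k^*$, the inequality $d \geq b+1 \geq \max(0,\beta-\alpha)+1$ yields $d+\alpha > \beta$, so the coefficient of $t^{d+\alpha}$ on the right-hand side of~\eqref{Equ:OrgEquation} vanishes, which by the observation forces $L_\alpha(w)=0$ and hence $w\in U = \{0\}$, a contradiction.

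If instead $U \neq \{0\}$, I would return the singletons $f^{(0)}_u := (f_u) \in k^1$ and verify the three properties. Property (iii) is immediate from the $K$-linear independence of $f_1,\dots,f_{r_0}$. Property (i) follows from $L_\alpha(f_u) = 0$ via the above equivalence, giving $\deg L(t^d f_u) < d+\alpha$ for every $d \geq b$. For property (ii), given any $h \in k[t]$ with $d := \deg h \geq b$ and $\deg L(h) < d+\alpha$, the leading coefficient of $h$ lies in $U$, hence equals $\sum_u \kappa_u f_u$ for some $\kappa_u \in K$, so that $h - \sum_u \kappa_u t^d f_u$ has degree strictly less than $d$. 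The only nontrivial ingredient throughout is the assumed solver for parameterized linear difference equations in $k$, used once to produce a basis of $U$; I do not expect any serious obstacle at this base case, the real difficulty being deferred to the induction step $n \to n+1$ handled in Lemma~\ref{Lemma:DegBound:n->n+1}.
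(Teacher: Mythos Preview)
Your proposal is correct and follows essentially the same approach as the paper's proof: both reduce the base case to solving the homogeneous equation $L_\alpha(w)=0$ in $k$ via Lemma~\ref{Lemma:LCConstraint}, distinguish the cases $U=\{0\}$ versus $U\neq\{0\}$, and verify properties (i)--(iii) from the resulting basis. The only cosmetic difference is that you package the consequence of Lemma~\ref{Lemma:LCConstraint} as an explicit equivalence up front, whereas the paper rederives it in place when checking each property.
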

\begin{proof}
Making the ansatz $y=w\,t^{\delta}+g$ for unknown $\delta\in\set N$ with $\delta> b$, $w\in k^*$ and $g\in k[t]$ with $\deg(g)<\delta$ such that $\deg(L(y))<\alpha+\delta$ holds,
gives by Lemma~\ref{Lemma:LCConstraint} the constraint
\begin{equation}\label{Equ:HomogeneousConstraint}
L_{\alpha}(w)=0
\end{equation}
where $L_{\alpha}$ with $\alpha=\deg L $ is given by~\eqref{eq:project}.
By assumption we can solve this difference equation in $k$. If there is only the trivial solution $w=0$, it follows that $b$ is a degree bound of~\eqref{Equ:OrgEquation}, and we return $b$.\\
Otherwise, let $f_{1,0},\dots,f_{r_0,0}$ be a basis of the $K$-vector space of solutions of~\eqref{Equ:HomogeneousConstraint} in $k$. Then the $r_0$ vectors $f^{(0)}_1=(f_{1,0}),\dots,f^{(0)}_{r_0}=(f_{r_0,0})\in k^1$ satisfy the above properties~(i), (ii) and~(iii). To link to this setting, we simply have to set $r:=r_0$ and $f_u:=f^{(0)}_u$ for $u=1,\dots,r_0$; in addition observe that the vectors $f_u$ with entries from $k$ are treated as entries from $k[x]$ and thus $f_{u,0}(d)$ (i.e., $x$ is replaced by $d$) equals $f_{u,0}$. 
To prove (i), let $u\in\{1,2,\dots,r\}$. By Lemma~\ref{Lemma:LCConstraint},
$$L(t^df_{u,0})=L_{\alpha}(f_{u,0})t^{\alpha+d}+g$$
where $g\in k$ with $\deg g<\alpha+d$. But $L_{\alpha}(f_{u,0})=0$, so $\deg L(t^df_{u,0})=\deg g<\alpha+d$, proving (i). To prove (ii), let $h\in k[t]$ with $d=\deg(h)\geq b$ and $\deg L(h)<d+\alpha$. Write $h=w\,t^d+g$ where $w\in k$ and $g\in k[t]$ with $\deg(g)<d$. Then
$$L(h)=L(w\,t^d)+L(g)$$
where $\deg L(g)<\alpha+d$, so $\deg L(h)<d+\alpha$ implies
$$\deg L(w\,t^d)=\deg(L(h)-L(g))<d+\alpha.$$
By Lemma~\ref{Lemma:LCConstraint}, $L(w\,t^d)=L_{\alpha}(w)\,t^{\alpha+d}+g_1$ for some $g_1\in k[t]$ with $\deg g_1<\alpha+d$, hence $L_{\alpha}(w)=0$. By the definition of $f_{1,0},\dots,f_{r,0}$, there are $\kappa_1,\dots,\kappa_r\in K$ such that $w=\sum_{l=1}^{r}\kappa_l\,f_{l,0}$, therefore
$$\deg(h-\sum_{l=1}^r\kappa_l\,t^df_{l,0})=\deg(h-w\,t^d)=\deg g<d$$
proving (ii). Finally, (iii) holds by the definition of $f_{1,0},\dots,f_{r,0}$. This proves the lemma.
\end{proof}

\begin{lemma}\label{Lemma:DegBound:n->n+1}
Let $t$ be a \sigmaSE-monomial over $k$ with $K=\Const_\sigma(k)$.
Let $L\in k[t][E;\sigma]$ with $\alpha:=\deg L \geq0$ and $\Phi_1,\dots,\Phi_m \in k[t]$ with $\beta:=\max_{1\leq i\leq m}\deg(\Phi_i)$.
Suppose that we are given $b=b^{(n)}\in\set N$ with $b\geq\max(\beta-\alpha,0)+n$ and vectors $f_u^{(n)}=(f_{u,0},\dots,f_{u,n})\in k\times k[x]^{n}$ for $1\leq u\leq r_n$ with the properties~(i), (ii) and (iii) as stated above.\\
If we can solve parameterized linear difference equations with coefficients in $k$, then we can either compute a degree bound of the 
solutions in $k[t]$ of~\eqref{Equ:OrgEquation} or one can compute $b^{(n+1)}\in\set N$ and $r_{n+1}\geq1$ vectors $f_u^{(n+1)}\in k\times k[x]^{n+1}$ with $1\leq u\leq r_{n+1}$ such that the properties (i), (ii) and (iii) hold with the replacements $n\to n+1$, $b\to b^{(n+1)}$, $r\to r_{n+1}$, and $f_1,\dots,f_{r} \to f^{(n+1)}_1,\dots,f^{(n+1)}_{r_{n+1}}$.
\end{lemma}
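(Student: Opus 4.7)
\noindent The plan is to extend each vector $f^{(n)}_u$ by one additional coefficient $f_{u,n+1}(x)\in k[x]$ that cancels the next surviving power of $t$ in $L$, and to find a $K$-basis of admissible extensions by solving a suitable parameterized linear difference equation (PLDE) over $k$. Working formally in $k(x)((t^{-1}))(\xi)$ as in Lemmas~\ref{Lemma:UnchangedConstantXi}--\ref{Lemma:ShiftRelationInXi}, where $\xi$ plays the role of $t^x$, set for each $u$
$$\tilde g_u := \xi\sum_{j=0}^{n} t^{-j} f_{u,j}(x),$$
so that $\tilde g_u|_{\xi=t^d,\,x=d}=\sum_{j=0}^{n} t^{d-j} f_{u,j}(d)$ for every integer $d\ge 0$. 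Using Lemma~\ref{Lemma:ShiftRelationInXi} and $\sigma x=x$, one computes
$$L(\tilde g_u) = \xi \sum_{l\ge-\alpha} H_{u,l}(x)\, t^{-l}, \qquad H_{u,l}(x)\in k[x].$$
Property~(i) asserts $H_{u,l}(d)=0$ for $l\le -\alpha+n$ and every integer $d\ge b$; since each $H_{u,l}$ is a polynomial in $x$ vanishing at infinitely many points, $H_{u,l}\equiv 0$ for $l\le -\alpha+n$. Set $P_u(x):=H_{u,\,-\alpha+n+1}(x)\in k[x]$, so that $P_u(d)$ is the coefficient of $t^{d+\alpha-n-1}$ in $L\bigl(\sum_{j=0}^{n} t^{d-j} f_{u,j}(d)\bigr)$.

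By Lemma~\ref{Lemma:LCConstraint}, an extra term $t^{d-n-1} f(d)$ contributes $L_\alpha(f)(d)$ to the coefficient of $t^{d+\alpha-n-1}$, where $L_\alpha$ is the leading part of $L$ from~\eqref{eq:project}, extended to $k[x]$ via $\sigma x = x$. Hence the admissible extensions are the pairs $(c,f)\in K^{r_n}\times k[x]$ satisfying
\begin{equation}\label{Equ:ExtPLDEProposal}
L_\alpha(f(x)) + \sum_{u=1}^{r_n} c_u\, P_u(x) = 0 \quad\text{in } k[x].
\end{equation}
Let $D:=\max_u \deg_x P_u$. Any solution of~\eqref{Equ:ExtPLDEProposal} with $\deg_x f>D$ has top coefficient in $\ker L_\alpha|_k$; subtracting that correction, we may assume $\deg_x f\le D$. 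Writing $f(x)=\sum_{i=0}^{D} f_i x^i$ and equating coefficients of $x^i$ on both sides of~\eqref{Equ:ExtPLDEProposal} yields $D+1$ coupled PLDEs over $k$ sharing the parameters $c_1,\ldots,c_{r_n}$; by hypothesis these can be solved, and intersecting their solution subspaces produces a finite $K$-basis $\{(c^{(\ell)},f^{(\ell)})\}_{\ell=1}^{s}$ of admissible pairs.

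For each $\ell$ define
$$f^{(n+1)}_\ell := \Bigl(\sum_u c_u^{(\ell)} f_{u,0},\;\sum_u c_u^{(\ell)} f_{u,1}(x),\;\ldots,\;\sum_u c_u^{(\ell)} f_{u,n}(x),\;f^{(\ell)}(x)\Bigr)\in k\times k[x]^{n+1}.$$
Property~(i) at level $n+1$ then holds by construction: the coefficients of $t^{d+\alpha-k}$ for $0\le k\le n$ vanish by linearity and (i) for the old vectors, while the coefficient of $t^{d+\alpha-n-1}$ vanishes by~\eqref{Equ:ExtPLDEProposal}. Property~(ii) at level $n+1$ is deduced from (ii) at level $n$: given $h\in k[t]$ with $\deg h=d\ge b^{(n+1)}$ and $\deg L(h)<d+\alpha-(n+1)$, first approximate $h$ by $\sum_u\kappa_u \sum_{j=0}^n t^{d-j} f_{u,j}(d)$ up to degree $<d-n$ via (ii) at level $n$, then inspect the next coefficient of the remainder and show that $(\kappa_u)$ together with that coefficient forms an admissible pair for~\eqref{Equ:ExtPLDEProposal}, hence lies in the $K$-span of the $(c^{(\ell)},f^{(\ell)})$.

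If some $f^{(n+1)}_{\ell,0}\neq 0$, property (iii) holds; set $b^{(n+1)}:=\max(b+1,\,\max(\beta-\alpha,0)+n+1)$ and $r_{n+1}:=s$, and return these data. Otherwise every admissible $c^{(\ell)}$ satisfies $\sum_u c_u^{(\ell)} f_{u,0}=0$, so via (ii) at level $n+1$ the leading coefficient of any candidate $h$ of degree $d\ge b$ with $\deg L(h)<d+\alpha-(n+1)$ must vanish; since $b$ is chosen large enough that $\beta<b+\alpha-n-1$ (which is automatic after a suitable increment of $b$ from $b^{(n-1)}$), this forces every polynomial solution of~\eqref{Equ:OrgEquation} to have degree strictly less than~$b$, giving the degree bound~$b-1$. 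The main technical hurdle is the reduction in the second paragraph: although $k[x]$ is infinite-dimensional over $K$, one must argue that the degree bound $D$ on $f$ and the structure of $\ker L_\alpha|_{k[x]}$ make the admissible pairs form a finite-dimensional $K$-space computable via finitely many invocations of the $k$-PLDE solver; the bookkeeping that translates the coupled nature of these $D+1$ equations (sharing the parameters $c_u$) into a single intersection of solution spaces is the most delicate step.
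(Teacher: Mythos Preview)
Your proposal has a genuine gap in the step that establishes property~(ii) at level $n+1$ and, relatedly, in the structure of the new vectors $f^{(n+1)}_\ell$.

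You take the combination coefficients $c_u^{(\ell)}\in K$ to be \emph{constants}, so that the first $n+1$ entries of $f^{(n+1)}_\ell$ are $K$-linear combinations of the old $f^{(n)}_u$. This is too restrictive. For a solution $h$ of degree $d$, property~(ii) at level $n$ produces $\kappa_1,\dots,\kappa_{r_n}\in K$ and then a $w\in k$ satisfying $L_\alpha(w)+\sum_u\kappa_u P_u(d)=0$; but this is only the equation~\eqref{Equ:ExtPLDEProposal} \emph{evaluated at} $x=d$, not the polynomial identity you require for an ``admissible pair''. The set of $(\kappa_u,w)$ satisfying the evaluated equation can be strictly larger than the evaluations of your polynomial solutions. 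For instance, with $r_n=2$, $P_1(x)=q$, $P_2(x)=x\,q$ where $q\notin\mathrm{im}\,L_\alpha$, your polynomial equation forces $c_1=c_2=0$, yet at every integer $d$ the pair $(\kappa_1,\kappa_2)=(-d,1)$ solves $L_\alpha(w)+(\kappa_1+\kappa_2 d)q=0$ with $w=0$. Your procedure would declare a degree bound, while in fact arbitrarily large degrees remain possible.

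The paper's proof avoids this by a different mechanism. It treats each monomial $\kappa_j\delta^i$ on the right-hand side as an \emph{independent} parameter $e_{j,i}\in K$, solves a single PLDE $L_\alpha(w)=\sum_{j,i}e_{j,i}h_{j,i}$ over $k$, and only afterwards imposes the coupling $e_{j,i}=\delta^i e_{j,0}$ as a linear system $A(\delta)\lambda=0$ with $A(x)\in K[x]^{\bullet\times s}$. Lemma~\ref{Lemma:NullSpace} (Smith normal form) then yields both a $K[x]$-basis $v_1(x),\dots,v_\rho(x)$ of $\Ker A(x)$ \emph{and} a threshold $\tilde b$ such that $\Ker A(d)$ equals the span of the $v_u(d)$ for every $d\ge\tilde b$. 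The new vectors are built as $K[x]$-linear combinations $(v_u(x)\bar D)(f_1,\dots,f_r)^T$ of the old ones, and $b^{(n+1)}:=\max(\tilde b,b+1)$ incorporates $\tilde b$. Both ingredients are essential: allowing the coefficients to depend polynomially on $x$ captures the $d$-dependence of the $\kappa_u$, and the threshold $\tilde b$ rules out sporadic degrees where the specialized kernel is larger than the generic one. Your choice $b^{(n+1)}:=\max(b+1,\max(\beta-\alpha,0)+n+1)$ omits this, so even after repairing the $K[x]$-combination issue your bound would still be incorrect.
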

\begin{proof}
Suppose that there exists
a solution $y\in k[t]$ with $\delta:=\deg(y)>b$ for~\eqref{Equ:OrgEquation} for some $c_1,\dots,c_m\in K$. 
In the following we will either show that this is not possible and thus the given $b$ is a degree bound, or we will determine a new $b'(=b^{(n+1)})$ which is larger than $b=(b^{(n)})$ as a new candidate of a possible degree bound.
Since $\delta\geq\max(\beta-\alpha,0)+n+1$ it follows that 
\begin{equation}\label{Equ:FullSolBound}
\deg L(y)<\delta+\alpha-n-1.
\end{equation}
In particular, $\deg L(y)<\delta+\alpha-n$ holds and we can apply 
property (ii) of our assumption with $h=y$. As a consequence there exist $\kappa_1,\dots,\kappa_r\in K$ such that for
$$\tilde{y}:=\sum_{u=1}^r \kappa_u\sum_{i=0}^nf_{u,i}(\delta)t^{\delta-i}$$
we get $\deg(y-\tilde{y})<\delta-n$. 
Hence the first $n$ leading coefficients of $y$ are determined by $\tilde{y}$ and we get
$$y=\tilde{y}+w\,t^{\delta-n-1}+g$$ 
for some $w\in\ k$ and $g\in k[t]$ with $\deg(g)<\delta-n-1$. From $\deg L(g)<\alpha+\delta-n-1$ and~\eqref{Equ:FullSolBound} it follows for
$\bar{y}=\tilde{y}+w\,t^{\delta-n-1}$
that $$\deg L(\bar{y})=\deg(L(y)-L(g))\leq\max(\deg L(y),\deg L(g))<\delta+\alpha-n-1.$$ 
Summarizing,
if there exists
a solution $y\in k[t]$ with $\delta:=\deg(y)>b$ for~\eqref{Equ:OrgEquation} for some $c_1,\dots,c_m\in K$, then 
there are a $w\in k$ and $\kappa_1,\dots,\kappa_r\in K$ such that for
\begin{equation}\label{Equ:IncrementalAnsatz}
\bar{y}=\tilde{y}+w\,t^{\delta-n-1}
\end{equation}
with
\begin{equation}\label{Equ:yTilde}
\tilde{y}=\sum_{u=1}^r \kappa_u\sum_{i=0}^nf_{u,i}(\delta)t^{\delta-i}
\end{equation}
we have
\begin{equation}\label{Equ:CheckCriterion}
\deg(\bar{y})=\delta\quad\text{and}\quad\deg(L(\bar{y}))< \delta+\alpha-n-1.
\end{equation}
Note: if we prove that there is no such $\bar{y}$ with $\deg\bar{y}>b'$ for some $b'\in\set N$ with $b'\geq b$ then it follows that $b'$ is a degree bound for~\eqref{Equ:OrgEquation}. 
In a nutshell, we try to construct all $\bar{y}$ with this property and hope that this construction eventually fails. 
By properties (i) and (ii) we have
\begin{equation}\label{Equ:UpperCancelations}
\deg(L(\tilde{y}))<\delta+\alpha-n
\end{equation}
for any $\kappa_1,\dots,\kappa_r\in K$ and we have to check if there is a $w\in k$ such that also the coefficient of $t^{\delta+\alpha-n-1}$ in $L(\bar{y})$ vanishes. This gives the constraint
\begin{equation}\label{Equ:CoeffConstraint0}
\big[\text{coefficient of }t^{\delta+\alpha-n-1}\text{ in }L(\tilde{y})\big]+L_{\alpha}(w)=0.
\end{equation}
Define $\eta=\sigma(t)-t\in\ k$. By $\sigma^j(t)=t+\eta_j$ with $\eta_j=\sum_{\nu=0}^{j-1}\sigma^{\nu}(\eta)\in k$ and
using the binomial theorem in its form
\begin{equation}\label{Equ:BinomThm}
\sigma^j(t^{\delta-i})=(t+\eta_j)^{\delta-i}=\sum_{l=0}^{\delta-i}\frac{\fallingFac{(\delta-i)}{l}}{l!}\eta_j^{l}t^{\delta-i-l}
\end{equation} 
with $i\in\set N$ where $0\leq i\leq\delta$ and $j\geq1$,
the constraint~\eqref{Equ:CoeffConstraint0} can be restated as
\begin{equation}\label{Equ:CoeffConstraint}
L_{\alpha}(w)=\sum_{j=1}^{r}\sum_{i=0}^{m}\kappa_{j}h_{j,i}\delta^i
\end{equation}
for some $m\in\set N$ with $m\geq n+1$ and explicitly given $h_{j,i}\in k$. Based on this information we consider the parameterized linear difference equation
\begin{equation*}
L_{\alpha}(w)=\sum_{j=1}^{r}\sum_{i=0}^{m}e_{j,i}h_{j,i}
\end{equation*}
with the unknown $w\in k$ and the $u=(m+1)r$ unknown parameters $e_{j,i}\in K$. By assumption we can compute a basis
$\{(w_i,d_{i,1},\dots,d_{i,u})\}_{1\leq i\leq s}$
for the corresponding solution space. 
W.l.o.g.\ suppose that the first $r$ columns of $D:=(d_{i,j})_{1\leq i\leq s,1\leq j\leq u}$ correspond to the right-hand sides
$$h_{1,0},h_{2,0},\dots,h_{r,0}$$
ordered in this way. Note that in~\eqref{Equ:CoeffConstraint}, the corresponding parameters $e_{j,0} = \delta^0\, \kappa_j$ for $1 \le j \le r$ are free of $\delta$. For later use we define the submatrix
$$\bar{D}:=(d_{i,j})_{1\leq i\leq s,1\leq j\leq r}$$
of $D$.
In order to obtain all the possible $\kappa_{j}$ for our original ansatz~\eqref{Equ:CoeffConstraint} we have to impose additional relations on the $e_{j,i}$: 
$$\kappa_j=e_{j,0}=\frac{e_{j,1}}{\delta}=\frac{e_{j,2}}{\delta^2}=\dots=\frac{e_{j,m}}{\delta^m}$$
for $1\leq j\leq r$.  
Namely, for each $j,i$ let  $C_{j,i}$ be the corresponding column in $D$. Then for these coefficients we have to consider all $(\lambda_1,\dots,\lambda_s)\in K^s$ such that for $1\leq j\leq r$ and $0\leq i\leq n$ we have
\begin{equation}\label{Equ:CoeffConstraint2}
\begin{split}
(\lambda_1,\dots,\lambda_s)C_{j,0}&=(\lambda_1,\dots,\lambda_s)\frac{C_{j,1}}{\delta}=\dots=
(\lambda_1,\dots,\lambda_s)\frac{C_{j,m}}{\delta^m}.
\end{split}
\end{equation}
Putting all these equations together, replacing $\delta$ by $x$ and clearing denominators leads to a matrix
$A(x)$ with $s$ columns and entries from $K[x]$ with the property that for any $(\lambda_1,\dots,\lambda_s)\in K^s$ we have that
\begin{equation}\label{Equ:MatrixEncoded}
A(\delta)(\lambda_1,\dots,\lambda_s)^T=0
\end{equation} 
iff~\eqref{Equ:CoeffConstraint2} holds for all $1\leq j\leq r$ and $0\leq i\leq n$. By construction this is equivalent to the claim that
$(\kappa_1,\dots,\kappa_r)\in K^r$ and $w\in k$ with 
\begin{equation}\label{Equ:NewRelationForF}
\begin{split}
(\kappa_1,\dots,\kappa_r)&=(\lambda_1,\dots,\lambda_s)\bar{D},\\
w&=(\lambda_1,\dots,\lambda_s)(w_1,\dots,w_s)^T
\end{split}
\end{equation}
are a solution of~\eqref{Equ:CoeffConstraint}.
In short, for any $(\lambda_1,\dots,\lambda_s)\in K^s$ we have that ~\eqref{Equ:MatrixEncoded} holds iff $(\kappa_1,\dots,\kappa_r)$ and $w$ with~\eqref{Equ:NewRelationForF} are a solution of~\eqref{Equ:CoeffConstraint}. In particular, by~\eqref{Equ:UpperCancelations} this is equivalent to the desired property that~\eqref{Equ:CheckCriterion} holds with~\eqref{Equ:IncrementalAnsatz}. \\
By Lemma~\ref{Lemma:NullSpace} we can compute a $K[x]$-basis $v_1(x),\dots,v_{\rho}(x)\in K[x]^{s}$ of $\Ker A(x)$ and
$\tilde{b}\in\set N$ with the following property: For any $\delta\in\set N$ with $\delta\geq \tilde{b}$, the vectors $v_1(\delta),\dots,v_{\rho}(\delta)$ form a basis of $\Ker A(\delta)$, i.e., of the solution space of~\eqref{Equ:MatrixEncoded}. If the basis is empty, i.e., $\rho=0$, or 
if the first $r$ entries in each of the basis elements $v_1(x),\dots, v_{\rho}(x)$ are zero (recall that the linear combination of the first $r$ entries gives all combinations for the highest coefficient in our ansatz~\eqref{Equ:IncrementalAnsatz}), it follows that the ansatz~\eqref{Equ:IncrementalAnsatz} with $\delta>\tilde{b}$ and~\eqref{Equ:CheckCriterion} is not possible. Hence $b'=\max(\tilde{b},b)$ is a degree bound for~\eqref{Equ:OrgEquation} and we are done. Otherwise, define 
$$b'(=b^{(n+1)}):=\max(\tilde{b},b+1)$$
with $b'\geq\max(\beta-\alpha,0)+n+1$. Furthermore, for $1\leq u\leq \rho$ define
\begin{align}
(f'_{u,0},\dots,f'_{u,n})&=(v_u\,\bar{D})(f_1,\dots,f_r)^T\in k\times k[x]^{n},\label{Equ:NewFByOldf}\\
w'_u&=v_u\,(w_1,\dots,w_s)^T\in k[x]\nonumber
\end{align}
and take $f'_u=(f'_{u,0},\dots,f'_{u,n},w'_u)\in k\times k[x]^{n+1}$ for $1\leq u\leq\rho$. By construction (compare~\eqref{Equ:NewRelationForF}) property (i) holds from above where $n$ is replaced by $n+1$, $r$ is replaced by $\rho$, the  $f_1,\dots,f_{r}$ are replaced by $f'_1,\dots,f'_{\rho}$ and $b$ is replaced by $b'$.\\ 
In addition, property (ii) holds:
Let $h\in k[t]$ with $d'=\deg(h)\geq b'$ and $\deg(L(h))< d'+\alpha-n-1$. By construction, $h=\sum_{u=1}^r \kappa_u\sum_{i=0}^nf_{u,i}(d')t^{d'-i}+w\,t^{d'-n-1}+w'$ where $w'\in k[t]$ with $\deg(w')<d'-n-1$ and where
$(\kappa_1,\dots,\kappa_r)\in K^r$ and $w\in k$ are given by~\eqref{Equ:NewRelationForF} for some $\lambda=(\lambda_1,\dots,\lambda_s)\in K^s$. In particular, we have that $\lambda\in\Ker A(d')$. Thus with~\eqref{Equ:NewRelationForF} we get 
\begin{align*}
(\kappa_1,\dots,\kappa_r)&=(\kappa'_1,\dots,\kappa'_{\rho})(v_1(d'),\dots,v_{\rho}(d'))^T\,D,\\  
w&=(\kappa'_1,\dots,\kappa'_{\rho})(v_1(d'),\dots,v_{\rho}(d'))^T(w_1,\dots,w_s)^T
\end{align*}
for some $(\kappa'_1,\dots,\kappa'_{\rho})\in K^{\rho}$. With~\eqref{Equ:NewFByOldf} we get  
$h=\sum_{u=1}^{\rho}\kappa'_u\sum_{j=0}^nt^{d'-j}f'_{u,j}(d')\big)+w'$. Since $\deg(w')<d'-n-1$, property (ii) holds for $n\to n+1$.\\
If $(f'_{1,0},f'_{2,0},\dots,f'_{\rho,0})=0$, it follows by property~(ii) that there does not exist a $\bar{y}$ of the form~\eqref{Equ:IncrementalAnsatz} with~\eqref{Equ:CheckCriterion} 
where $\delta\geq b'$. Consequently, $b'-1$ is a degree bound for~\eqref{Equ:OrgEquation} and we are done. Otherwise, also property~(iii) holds for the vectors $f^{(n+1)}_1:=f'_1,\dots,f^{(n+1)}_{r_{n+1}}:=f'_{r_{n+1}}$ with $r_{n+1}=\rho$ and $b^{(n+1)}=b'$. This completes the proof of the lemma.
\end{proof}

Finally, we can assemble Lemmas~\ref{Lemma:DegBoundn=0} and Lemma~\ref{Lemma:DegBound:n->n+1} to obtain in Theorem~\ref{th:SigmaBounds} an algorithm that enables one to calculate a degree bound for
solutions in $k[t]$ of parameterized linear difference equations with coefficients in $k[t]$. 
Here we will use in addition the following notation. For a set $V\subseteq k\times k[x]^{r}$ and $n\in\set N$ with $n\leq r$ we define
\begin{equation}\label{Equ:DefPn}
P_n(V)=\{(v_0,v_1,v_2,\dots,v_n): (v_0,v_1,\dots,v_n,\dots,v_r)\in V\}.
\end{equation}
Note: if $V$ is a $K$-subspace of $k\times k[x]^{r}$, then $P_n(V)$ is a $K$-subspace of $k\times k[x]^{n}$.

\begin{theorem}\label{th:SigmaBounds}
Let $(k,\sigma)$ be a difference field and let $t$ be a \sigmaSE-monomial over $k$.
If we can solve parameterized linear difference equations with coefficients in $k$, then we can bound the degree of
solutions in $k[t]$ of parameterized linear difference equations with coefficients in $k[t]$.
\end{theorem}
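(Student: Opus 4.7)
The plan is to combine Lemma~\ref{Lemma:DegBoundn=0} (base case $n=0$) with iterated applications of Lemma~\ref{Lemma:DegBound:n->n+1} (inductive step $n \to n+1$) to realize the procedure $\Sigma^*$-\verb|DegreeBound| sketched above. Correctness of each individual step is already established by those two lemmas; the entire content of the theorem, therefore, reduces to showing that the \verb|for|-loop must exit after finitely many iterations. Throughout, we rely on the hypothesis that parameterized linear difference equations can be solved in $k$ in order to perform step 3 of the algorithm, and we rely on Lemma~\ref{Lemma:NullSpace} to constructively produce the polynomial basis of $\Ker A(x)$ used inside Lemma~\ref{Lemma:DegBound:n->n+1}.

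First, I would run the base case. Lemma~\ref{Lemma:DegBoundn=0} either returns the degree bound $b=\max(0,\beta-\alpha)$, in which case we are done, or produces vectors $f_1^{(0)},\dots,f_{r_0}^{(0)} \in k$ with property (iii). Then I would iterate: assuming the data $(b^{(n)},r_n,f_1^{(n)},\dots,f_{r_n}^{(n)})$ have been constructed and satisfy (i)--(iii), Lemma~\ref{Lemma:DegBound:n->n+1} either returns a degree bound and terminates, or else produces $(b^{(n+1)},r_{n+1},f_1^{(n+1)},\dots,f_{r_{n+1}}^{(n+1)})$ satisfying (i)--(iii) at level $n+1$. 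It is critical that at each successful continuation we have $r_{n+1}\ge 1$ and property (iii), so that at least one of the new vectors contributes a nonzero leading entry.

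The main obstacle is termination. I would argue by contradiction: suppose the loop never exits. Then for every $n\ge 0$ we obtain at least one chain $u_0,u_1,u_2,\ldots$ with $u_n\in\{1,\dots,r_n\}$ such that each vector $f_{u_n}^{(n)}=(f^{(n)}_{u_n,0},f^{(n)}_{u_n,1}(x),\dots,f^{(n)}_{u_n,n}(x))\in k\times k[x]^n$ has $f^{(n)}_{u_n,0}\ne 0$, and (by the compatibility encoded in~\eqref{Equ:NewFByOldf}) the initial segments fit together coherently, yielding a formal Laurent series
\[
\Phi(x,t)\ =\ \sum_{j\ge 0} \phi_j(x)\,t^{-j}\ \in\ k(x)((t^{-1})),\qquad \phi_0\in k^\ast.
\]
The plan is to pass to the $\Pi$-extension of $k(x)((t^{-1}))$ by $\xi$ from Lemma~\ref{Lemma:UnchangedConstantXi}; in that extension, the object $\xi\cdot t^d\cdot \Phi(x,t)$ specializes, for any $d\in\set N$, back to the polynomial $\sum_{j=0}^n t^{d-j}f^{(n)}_{u_n,j}(d)$ appearing in property~(i) by means of Lemma~\ref{Lemma:ShiftRelationInXi} (this is exactly where $\sigma^j(t^{-i}\xi)$ expansions are needed, and hence where Lemma~\ref{Lemma:DoubleSum} enters). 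The degree drop~\eqref{Equ:LCancelation} valid for all $d\ge b^{(n)}$ and all $n$ forces $L(\xi\,\Phi)=0$ in $k(x)((t^{-1}))(\xi)$, producing a nonzero solution of $L$ in this difference field.

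Now Corollary~\ref{Cor:UnchangedConstant} identifies $\Const_\sigma k(x)((t^{-1}))(\xi) = (\Const_\sigma k)(x)$, so the solutions of $L$ in this field form a $(\Const_\sigma k)(x)$-vector space of dimension at most $\deg L$. A nonterminating process would produce, as $n$ varies, infinitely many \emph{linearly independent} such solutions (by the maximality provided by property (ii) coupled with property (iii)), contradicting this dimension bound. Thus the loop must exit, yielding the desired computable degree bound and completing the proof. \qed
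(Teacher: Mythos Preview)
Your overall architecture matches the paper's: run Lemmas~\ref{Lemma:DegBoundn=0} and~\ref{Lemma:DegBound:n->n+1}, assume nontermination, assemble a formal Laurent series solution in $k(x)((t^{-1}))(\xi)$, and derive a contradiction from the dimension bound on solutions over $(\Const_\sigma k)(x)$. However, two steps in your termination argument are not justified.

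First, the construction of the coherent series $\Phi$. The vectors $f_u^{(n+1)}$ are $K[x]$-linear combinations of the $f_u^{(n)}$ (see~\eqref{Equ:NewFByOldf}), so there is no reason a single index chain $u_0,u_1,\dots$ should produce compatible truncations. The paper handles this by a stabilization argument: setting $V_n=\langle f_1^{(n)},\dots,f_{r_n}^{(n)}\rangle_K$ and $P_n$ as in~\eqref{Equ:DefPn}, one has descending chains $P_n(V_n)\supseteq P_n(V_{n+1})\supseteq\cdots$, each containing a vector with nonzero first entry by property~(iii). Since dimensions are finite these chains stabilize, and one extends a fixed $(g_0,\dots,g_{n-1})$ to $(g_0,\dots,g_n)$ step by step.

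Second, and more seriously, your contradiction is incomplete. You produce \emph{one} nonzero solution $\xi\Phi$ of $L$ and then assert that ``a nonterminating process would produce, as $n$ varies, infinitely many linearly independent such solutions.'' But varying $n$ only yields longer truncations of the \emph{same} series $\Phi$; it does not produce new solutions. A single solution does not contradict the bound $\dim\le\deg L$. The paper's key device is to shift: from the one sequence $g=(g_0,g_1,\dots)$ it defines
\[
Y_u\ =\ \xi\,t^{-u}\sum_{i\ge0} g_i\,t^{-i}\qquad (u\in\set N),
\]
and shows $L(Y_u)=0$ for every $u$. This works because property~(i) holds uniformly for all $d\ge b^{(n)}$: after substituting $x\mapsto\delta$, $\xi\mapsto t^{\delta}$ via Lemma~\ref{Lemma:ShiftRelationInXi}, the coefficient of $t^{\delta-u+\alpha-n}$ in $L\big(\sum_{i=0}^n g_i(\delta)t^{\delta-i-u}\big)$ is a polynomial $p(\delta)$ that vanishes for all large $\delta$, hence identically. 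Since $\xi$ is transcendental and the $t^{-u}$ have distinct orders, $Y_0,\dots,Y_{\omega+1}$ (with $\omega$ the order of $L$) are linearly independent over $K(x)$, contradicting Corollary~\ref{Cor:UnchangedConstant} together with the classical bound on the solution space. Your argument is missing this $t^{-u}$-shift idea entirely.
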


\begin{proof}
Let $L\in k[t][E;\sigma]$ with $\alpha:=\deg L \geq0$ and $\Phi_1,\dots,\Phi_m \in k[t]$ with $\beta:=\max_{1\leq i\leq m}\deg(\Phi_i)$. 
Then by Lemmas~\ref{Lemma:DegBoundn=0} and~\ref{Lemma:DegBound:n->n+1}, i.e., the execution of \sigmaSE-\verb|DegreeBound|$(L; \Phi_1,\dots,\Phi_m; k[t])$, one obtains a degree bound when the for-loop is quit. 
Suppose that this iteration process (the loop over $n=0,1,2,\dots$) does not terminate. For $n\geq 0$ let $f_1^{(n)},\dots,f_{r_n}^{(n)}\in k\times k[x]^{n}$ be the corresponding vectors in the $n$th iteration and consider the $K$-subspace $V_n=\langle f_1^{(n)},\dots,f_{r_n}^{(n)}\rangle_{K}$ of $k\times k[x]^{n}$. Since in each step the new vectors are linear combinations of the previous vectors (see~\eqref{Equ:NewFByOldf}), it follows that $P_{n}(V_{n+l+1})$ is a subspace of $P_n(V_{n+l})$ for any $n,l\in\set N$; recall the definition of $P_n$ in~\eqref{Equ:DefPn}. In particular, for any $n\in\set N$ we get the chain of subspaces
\begin{equation}\label{Equ:VSChain}
P_n(V_n)\supseteq P_n(V_{n+1})\supseteq P_n(V_{n+2})\supseteq\dots\supset\{0\};
\end{equation}
note that each of these subspaces contains a nonzero vector due to property~(iii); more precisely, the first entry is nonzero. 
We show that there is a sequence $g=(g_0,g_1,g_2,\dots)\in k\times k[x]^{\set N}$ with $g_0\neq0$ such that for all $n,l\in\set N$ we have
\begin{equation}\label{Equ:StableChain}
(g_0,g_1,\dots,g_n)\in P_{n}(V_{n+l}).
\end{equation}
Suppose that we obtained already $g=(g_0,g_1,\dots,g_{n-1})\in k\times k[x]^{n-1}$ for $n\geq 1$ with $g_0\neq0$ such that
\begin{equation}\label{Equ:StableChainBelow}
(g_0,g_1,\dots,g_{n-1})\in P_{n-1}(V_{n-1+l})
\end{equation}
 holds for all $l\geq0$. Let $\rho=\min_{i\geq0}\dim P_n(V_{n+i})$. Then $\rho\geq1$ and there is an $s\in\set N$ with $s\geq n$ and $\dim P_n(V_{s})=\rho$. By~\eqref{Equ:VSChain} we conclude that $P_n(V_s)= P_n(V_{s+i})$ holds for all $i\geq0$. Thus we can take $g_n\in k[x]$ with $(g_0,\dots,g_{n-1},g_n)\in P_n(V_{s+l})$ for all $l\geq0$ which proves~\eqref{Equ:StableChain} for all $l\geq0$. Setting $l=0$, we get $g=(g_0,g_1,g_2,\dots)\in k[x]^{\set N}$ with $(g_0,g_1,\dots,g_n)\in V_n$ for all $n\geq0$.\\ 
For $u\in\set N$ define
$$Y_{u}=\xi t^{-u}\sum_{i=0}^{\infty}g_it^{-i}$$
in the difference field $(k(x)((t^{-1}))(\xi),\sigma)$ introduced in Lemma~\ref{Lemma:UnchangedConstantXi}. 
We show that
\begin{equation}\label{Equ:LYSol}
L(Y_{u})=0
\end{equation}
holds for any $u\in\set N$. Suppose otherwise that $L(Y_{\ell})\neq0$ 
for some $\ell\geq0$. 
Then there is an $n$ such that
the coefficient of $L(Y_{\ell})$ at $\xi t^{\alpha-\ell-n}$ is nonzero, but all the coefficients of $L(Y_{\ell})$ at $\xi t^{\alpha-\ell-\nu}$ with $0\leq\nu<n$ are zero. Looking at the coefficient at $\xi t^{\alpha-\ell-n}$ gives a polynomial $p(x)\in k[x]\setminus\{0\}$. With Lemma~\ref{Lemma:ShiftRelationInXi} it follows that
$$\sigma^j(\xi\,t^{-i})|_{x\mapsto\delta,\xi\mapsto t^{\delta}}=\xi\sum_{l=0}^{\infty}\binom{x-i}{l}\eta_j^lt^{-i-l}|_{x\mapsto\delta,\xi\mapsto t^{\delta}}=\sum_{l=0}^{\delta-i}\frac{\fallingFac{(\delta-i)}{l}}{l!}\eta_j^lt^{\delta-i-l}$$
holds for all $i,j,\delta\in\set N$; in particular, with~\eqref{Equ:BinomThm} we get
$$\sigma^j(\xi\,t^{-i})|_{x\mapsto\delta,\xi\mapsto t^{\delta}}=\sigma^j(t^{\delta-i})$$
for all $i,j\in\set N$ with $0\leq i\leq\delta$
where the automorphism $\sigma$ on the left-hand and right-hand side is taken from $(k(x)((t^{-1}))(\xi),\sigma)$ and
$(k(t),\sigma)$, respectively. This implies that
\begin{equation}\label{Equ:p(delta)}
\begin{split}
p(\delta)&=\Big(\text{coefficient of }\xi t^{\alpha-\ell-n}\text{ in }L(\xi t^{-\ell}\sum_{i=0}^{\infty}g_it^{-i})\Big)|_{x\mapsto\delta,\xi\mapsto t^{\delta}}\\
&=\Big(\text{coefficient of }\xi t^{\alpha-\ell-n}\text{ in }L(\xi t^{-\ell}\sum_{i=0}^{n}g_it^{-i})\Big)|_{x\mapsto\delta,\xi\mapsto t^{\delta}}\\
&=\text{coefficient of }t^{\delta-\ell+\alpha-n}\text{ in }L(\sum_{i=0}^{n}g_i(\delta)t^{\delta-i-\ell});
\end{split}
\end{equation}
note that the second equality follows by Lemma~\ref{Lemma:ShiftRelationInXi}: $L$ applied to $g_i\xi t^{-\ell-i}$ with $i>n$ yields only term contributions where the exponent of $t$ is smaller than $\alpha-\ell-n$.
By property~(i) from above and $(g_0,g_1,\dots,g_n)\in V_n$ it follows that there exists a $\rho\in\set N$ with $\rho\geq n+\ell$ such that for any $\delta\geq\rho$ we have that
\begin{equation}\label{Equ:SpecialSol}
\deg(L(\sum_{i=0}^{n}g_i(\delta)t^{\delta-i-\ell}))< \delta-\ell+\alpha-n.
\end{equation}
Since~\eqref{Equ:p(delta)} holds, it follows that $p(\delta)=0$ for all $\delta\geq\rho$ which implies that $p(x)$ must be the zero-polynomial; a contradiction. Consequently~\eqref{Equ:LYSol} holds for all $u\in\set N$.\\ 
To finish the proof of termination, let $\ordSymbol$ be the order of the operator $L$. Since $\Const_{\sigma}{k(x)((t^{-1}))(\xi)}=K(x)$ by Corollary~\ref{Cor:UnchangedConstant}, we can apply~\cite[Thm. XII (page 272)]{Cohn:65} and it follows that there exist in $(k(x)((t^{-1}))(\xi),\sigma)$
at most $\ordSymbol$ linearly independent solutions over $K(x)$ for $L(y)=0$. However, $Y_0,Y_1,\dots,Y_{\ordSymbol+1}$ are linearly independent over $K(x)$. Namely, suppose that there are $c_0,\dots,c_{\ordSymbol+1}\in K(x)$ with 
$$0=c_0\,Y_0+c_1\,Y_1+\dots+c_{\ordSymbol+1}\,Y_{\ordSymbol+1}=\xi(c_0+c_1\,t^{-1}+\dots+c_{\ordSymbol+1}t^{-\ordSymbol-1}).$$
Since $\xi$ is transcendental over $k(x)((t^{-1}))$, $c_0+c_1\,t^{-1}+\dots+c_{\ordSymbol+1}t^{-\ordSymbol-1}=0$. Furthermore, $c_i=0$ for $0\leq i\leq \ordSymbol+1$ since $t$ is transcendental over $k(x)$. This proves that there are $\ordSymbol+1$ linearly independent solutions of $L(y)=0$; a contradiction to our construction. Hence the assumption must be wrong that the iteration does not terminate.
\end{proof}

\begin{remark}
\rm
\begin{enumerate}
\item The proof of Theorem~\ref{th:SigmaBounds} is inspired by \cite[Lemma 3.8]{Singer:91} that treats the continuous version of \sigmaSE-extensions in order to solve linear differential equations in Liouvillian extensions. More precisely, a degree bound for polynomial solutions for a primitive extension $k(t)$ is derived where the derivative of $t$ is an element in $k$ (i.e., $t$ represents an indefinite integral).
Similar to our strategy, Singer's algorithm produces step-wise constraints for the top most coefficients of a potential solution. Eventually, these constraints enable one to determine a degree bound for a primitive extension. As in our case, the termination argument does not yield an explicit upper bound of the required iteration steps.

\item In~\cite{Abramov:89b} and~\cite{Petkovsek1992} degree bound algorithms are elaborated that deal with the special case $(k(t),\sigma)$ where $\Const_{\sigma}(k)=k$ and $\sigma(t)=t+1$. As it turns out, both algorithms are equivalent~\cite{PetkovWeix} and can be related to the algorithm in Theorem~\ref{th:SigmaBounds}. 
In contrast to the general situation in Theorem~\ref{th:SigmaBounds}, it can be shown that in the algorithm given in~\cite{Petkovsek1992} at most $\deg(L)$ steps are required until the degree bound procedure terminates.
 
\item In~\cite[Chapter~3.4]{Schneider:01} a variant of the above method has been elaborated for a \sigmaSE-monomial $t$, but the termination proof could not be provided. Interestingly enough, it could be shown in~\cite[Cor.~3.14.3]{Schneider:01} that after at most $\deg(L)$ steps the iteration process terminates if the operator $L$ is free of $t$, i.e., $L\in k[E;\sigma]$. Even more is valid in this special case: for the parameterized difference equation~\eqref{Equ:OrgEquation} with $\Phi_1,\dots,\Phi_m \in k[t]$ a polynomial degree bound is $\deg(L)+\max(\deg(\Phi_1),\deg(\Phi_2),\dots,\deg(\Phi_m),0)$; see also~\cite[Section~8]{Schneider:05b} for further discussions.
\end{enumerate}
\end{remark}

\begin{example}\label{Exp:HyperDFHn9}
\rm
Given the operator $L=a_0\,E^0+a_1\,E^1+a_2\,E^2$ with~\eqref{Equ:HnCoeff} we start the degree bounding algorithm from Theorem~\ref{th:SigmaBounds} with $n=0$; here $k=\set Q(x)$ and $t=h$. We have $\alpha=\deg(L)=2$ and thus get (after removing common factors) the operator
$$L_2=(1 + x)\,E^0 -(3 + 2 x)\,E^1+ (2 + x)\,E^2.$$
Solving the underlying recurrence in $\set Q(x)$ gives $\{w\in\set Q(x): L_2(w)=0\}=\set Q$.
Thus we get $f_1^{(0)}=1$. In particular, with $y=\kappa_1\,h^{\delta}$ for $\kappa_1\in\set Q$ we get $\deg(L(y))<\delta+\alpha=\delta+2$; by luck we even have $\deg(L(y))<\delta+1$. 
We repeat this construction for $n=1$ and try to find all $\kappa_1\in\set Q$ and $w\in\set Q(x)$ such that for $y=\kappa_1\,(1\cdot h^{\delta})+w\,h^{\delta-1}$ we get $\deg(L(y))<\delta+\alpha-1=\delta+1$; compare~\eqref{Equ:IncrementalAnsatz} with~\eqref{Equ:yTilde}.
This gives the constraint  $L_2(w)=0$, i.e., $w\in\set Q$. Consequently, we can set $f_1^{(1)}=(1,0)$ and $f_2^{(1)}=(0,1)$. We repeat the process for $n=2$ and make the ansatz $y=\kappa_1(1\cdot h^{\delta}+0\cdot h^{\delta-1})+\kappa_2(0\cdot h^{\delta}+1\cdot h^{\delta-1})+w\,h^{\delta-2}$; compare again~\eqref{Equ:IncrementalAnsatz} with~\eqref{Equ:yTilde}. This time we get the constraint
\begin{equation}\label{Equ:n=2Constraint}
L_2(w)=-\kappa_1\frac{(3 + 2 x)}{(x+1)(x+2)}\binom{\delta-1}{2}.
\end{equation}
To complete the step $n=2$ we must find a solution with $\kappa_1\neq0$, which is not possible. This stops our degree bounding algorithm. In particular, we can conclude that for a generic $\delta$ with $\delta>1$ only the two highest terms $h^{\delta+2}$ and $h^{\delta+1}$ in $L(y)$ can vanish. This indicates that $n=2$ is  a degree bound. Furthermore, only for the special cases $\delta=1,2$ and $w\in\set Q$ in~\eqref{Equ:n=2Constraint} one can find $\kappa_1\neq0$. Thus no special cases for $\delta>2$ can arise, and $b=2$ is indeed a degree bound. 
\end{example}

\section{A general framework and a complete algorithm for \pisiSE-fields}
\label{Sec:Framework}

In this section we will present a complete algorithm to compute all hypergeometric solutions of homogeneous linear difference equations and rational solutions of parameterized linear difference equations in a \pisiSE-field $(F,\sigma)$ with the property that the constant field $\Const_{\sigma}{F}$ is a rational function field over an algebraic number field. More generally, we will provide a general framework that solves these problems in a difference field  $(F,\sigma)$ that is built by a tower of \pisiSE-monomials over a difference field $(k,\sigma)$ that satisfies certain (algorithmic) properties.

\subsection{A general framework for nested \pisiSE-monomials}

In order to accomplish this task, we start to glue together the main results of the previous sections.
Theorem~\ref{th:hyper} combined with part~(1) of Theorem~\ref{Thm:RatSolver} provides the following reduction mechanism: if one can compute all hypergeometric candidates in $k$, can solve parameterized linear difference equations (PLDE) in $k$ and can solve the pseudo-orbit problem in $k$, then one can compute all hypergeometric solutions of homogeneous difference equations in a difference field $(k(t),\sigma)$ built by a \pisiSE-monomial $t$. This reduction process can be visualized in Figure~\ref{Fig:HypSolutions}.
\begin{figure}[h]
\footnotesize
$$\xymatrix@R=0.7cm@C=0.7cm{
	&&*+[F-:<4pt>]\txt{\MainRed hg.\ solutions}\\
	\txt{$k(t)$}&\txt{\MainRed hg.\ candidates}\ar[ur]&\ar[u]\txt{\MainRed  poly.\ solutions\\\MainRed of homogeneous\\\MainRed equations} & \\
	\txt{$k$}&\txt{\MainRed hg.\ candidates}\ar[u]\ar[ur]&\txt{\MainRed rat.\ solutions\\\MainRed of PLDE}\ar[u]&\ar[lu]\txt{pseudo orbit}\\
}$$
\normalsize
\caption{Reduction step for hypergeometric solutions}\label{Fig:HypSolutions}
\end{figure}
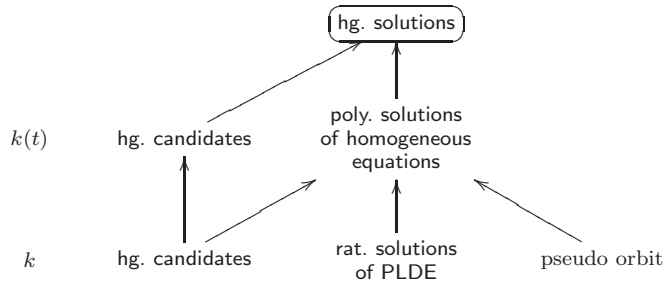

In addition, Theorem~\ref{th:hyper} combined with part~(2) of Theorem~\ref{Thm:RatSolver} delivers the reduction step illustrated in Figure~\ref{Fig:RatSolutions}.

\begin{figure}[h]
	\footnotesize
$$\xymatrix@R=0.7cm@C=0.7cm{
	\txt{$k(t)$}&\txt{\MainRed hg.\ candidates}&\txt{\MainRed rat.\ solutions\\\MainRed of PLDE} && \ar[ll]\txt{dispersion}\\
	\txt{$k$}&\txt{\MainRed hg.\ candidates}\ar[u]\ar[ur]&\txt{\MainRed solutions\\\MainRed of PLDE}\ar[u]&\ar[lu]\txt{pseudo orbit}\\
}$$
\normalsize
\caption{Reduction step for rational solutions (and hypergeometric candidates)}\label{Fig:RatSolutions}
\end{figure}
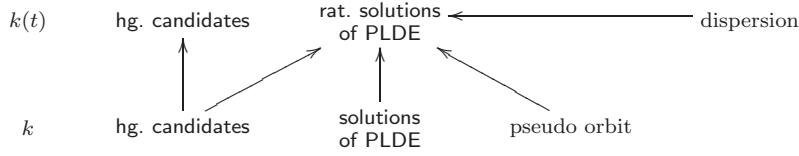

Combining Theorems~\ref{th:hyper} and~\ref{Thm:RatSolver}, i.e., combining Figure~\ref{Fig:HypSolutions} with iterative applications of Figure~\ref{Fig:RatSolutions} yields the following basic framework.

\begin{theorem}\label{Thm:PiSiRec}
	Let $(k,\sigma)$ be a difference field and let $(k(t_1),\dots(t_e),\sigma)$ be a tower of \pisiSE-monomials over $k$.
	\begin{itemize}
		\item[(i)]
		If we can compute all the hypergeometric
		candidates for equations with coefficients in $k$, then we can compute
		all the hypergeometric candidates for equations with coefficients in $k(t_1)\dots(t_e)$.
		\item[(ii)]
		If, in addition, we can compute the spread in $k(t_1,\ldots,t_{i-1})[t_i]$ for $1\le i\le e$, can solve the pseudo-orbit problem in  $k(t_1,\ldots,t_{i-1})$ for $1\le i\le e$ and can solve parameterized linear difference equations in $k$, then we can compute all solutions of parameterized linear difference equations and all the hypergeometric solutions of equations with coefficients in $k(t_1)\dots(t_e)$.
	\end{itemize}
\end{theorem}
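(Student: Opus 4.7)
The plan is to prove both parts simultaneously by induction on the length $e$ of the tower, working one level at a time and at each layer invoking the single-monomial reduction theorems of the previous sections. Write $k_i := k(t_1)\ldots(t_i)$, so that $t_i$ is a \pisiSE-monomial over $k_{i-1}$ and $k_0=k$. The base case $e=0$ of~(i) is the hypothesis. For the base case of~(ii), PLDE-solvability over $k$ specialises (by taking $m=0$, i.e.\ no parameters, or equivalently by reading off the $g$-component of a basis of ${\mathcal V}(a,b,k)$) to computing all solutions in $k$ of homogeneous equations with coefficients in $k$; together with the hypergeometric-candidate hypothesis, Theorem~\ref{th:riccati} then delivers all hypergeometric solutions over $k$.

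For the induction step of part~(i), I would apply Theorem~\ref{th:hyper}(i) with ground field $k_{i-1}$ and \pisiSE-monomial $t_i$: its only hypothesis is the inductive assumption that hypergeometric candidates can be computed for equations with coefficients in $k_{i-1}$, and its conclusion delivers the same for $k_i$. Iterating from $i=1$ up to $e$ gives~(i).

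For the induction step of part~(ii), the goal is to promote an algorithm for parameterized linear difference equations from $k_{i-1}$ to $k_i$. The relevant tool is Theorem~\ref{Thm:RatSolver}(2), whose hypotheses are: solving the pseudo-orbit problem in $k_{i-1}$, computing all hypergeometric candidates for equations with coefficients in $k_i$, computing dispersions in $k_{i-1}[t_i]$, and solving parameterized linear difference equations with coefficients in $k_{i-1}$. The first and third are among the global assumptions of~(ii), the second is supplied by the already-proved part~(i) applied to the subtower $k\subseteq\dots\subseteq k_i$, and the fourth is the inductive hypothesis. Its conclusion then yields PLDE-solvability over $k_i$; iterating produces an algorithm for PLDE over $k_e$.

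Finally, to obtain all hypergeometric solutions over $k_e$, one specialises the PLDE algorithm over $k_e$ to the homogeneous case to extract bases of solution spaces in $k_e$, and feeds this together with the hypergeometric candidates from~(i) into Theorem~\ref{th:riccati}. The mathematical content is essentially exhausted by the two reduction diagrams in Figures~\ref{Fig:HypSolutions} and~\ref{Fig:RatSolutions}; the only point requiring care is the bookkeeping that the cascade of prerequisites at each intermediate ground field $k_{i-1}$ is covered by what is assumed globally. This is the main non-trivial aspect of the statement and is the reason why~(ii) lists the spread and pseudo-orbit hypotheses \emph{for every} $i$ rather than only at the bottom of the tower; with that formulation the induction goes through without additional assumptions.
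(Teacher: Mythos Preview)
Your proposal is correct and follows essentially the same route as the paper, which merely states that the result follows by combining Theorem~\ref{th:hyper} with Theorem~\ref{Thm:RatSolver} (equivalently, by iterating the reductions of Figures~\ref{Fig:HypSolutions} and~\ref{Fig:RatSolutions}); your write-up is a faithful fleshing-out of that induction. The only cosmetic difference is the final step for hypergeometric solutions over $k_e$: the paper invokes Theorem~\ref{th:hyper}(ii) together with Theorem~\ref{Thm:RatSolver}(1) (polynomial solutions suffice at the top level), whereas you first establish full PLDE-solvability over $k_e$ and then appeal to Theorem~\ref{th:riccati}; both are valid and use the same ingredients.
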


It follows from~\cite{Karr81} that the pseudo-orbit problem can be solved algorithmically in a \pisiSE-field $(k,\sigma)$ over $C = \Const_\sigma(k)$ (with some additional properties required on the constant field $C$ which are given below). Namely, the following parameterized pseudo-orbit (PPO) problem can be handled~\cite[Theorem~9]{Karr81}. Given $f_1,\dots,f_n \in k^*$, one can compute a basis of
$$
M(f_1,\dots,f_n; k) = \{(e_1,\dots,e_n) \in \mathbb{Z}^n
\st 1 \sim_{k,\sigma} \prod_{i=1}^n f_i^{e_i}\}.
$$
Note that $M(f_1,\dots,f_n; k)$ is a finite-dimensional free $\mathbb{Z}$-submodule
of $\mathbb{Z}^n$; we refer to~\cite[Lemma 6]{Karr81}. Since
$$\Gamma(u,v;k) = \{\gamma \in \mathbb{Z} \st (\gamma, -1) \in M(u,v;k)\},$$
this yields in turn an algorithm for computing $\Gamma(u,v;k)$ in such fields.

\begin{example}\label{Exp:KarrM}
	\rm
	Using Karr's algorithm from~\cite{Karr81} we get
	\begin{align*}
	M(x+1,1;\set Q(x))&=(0,1)\set Z=(0,-1)\set Z,\\
	M(x+1,\tfrac1{2 (1 + x)};\set Q(x))&=\{(0,0)\},\\
	M(x+1,\tfrac{1 + x}{x^2};\set Q(x))&=(1,1)\set Z=(-1,-1)\set Z
	\end{align*}
	in the \pisiSE-field $(\set Q(x),\sigma)$ with $\sigma(x)=x+1$. This gives 
	$\Gamma(x+1,1;\set Q(x))=\{0\}$,
	$\Gamma(x+1,\frac1{2 (1 + x)};\set Q(x))=\emptyset$
	and $\Gamma(x+1,\frac{1 + x}{x^2};\set Q(x))=\{-1\}$
	as used in Example~\ref{Exp:HyperDFn!6}(1).
\end{example}

More generally, following~\cite{KS:06} this problem and also the dispersion can be computed over a tower of \pisiSE-monomials over $(k,\sigma)$ whenever the ground field $(k,\sigma)$ is $\sigma^*$-computable.

\begin{definition}\label{Def:computable}
	A difference field $(k,\sigma)$ is called
	$\sigma^*$-com\-pu\-ta\-ble if the following holds.
	
	\begin{enumerate}
		\item There is an algorithm that can factor multivariate polynomials
		over $k$.
		
		\item $(k,\sigma^s)$ is torsion free for any $s\in{\set Z}^*$, i.e.,
		\begin{equation}\label{Equ:TorsionFree}
		\forall s,r\in{\set Z}^*\;\forall f\in k^*:
		f\sim_{k,\sigma^s}1\text{ and }f^r=1\Rightarrow f=1.
		\end{equation}
		
		\item There is an algorithm that solves the $\Pi$-Regularity problem:
		Given $(k,\sigma)$ and $f,g\in k^*$; find, if possible, an $n\geq0$ with
		$\dfact{f}{n}{\sigma}=g$.
		
		\item There is an algorithm that solves the $\Sigma$-Regularity problem: Given $(k,\sigma)$, $r\in{\set Z}^*$ and $f,g\in k^*$;
		find, if possible, an $n\geq0$ with
		$\dfact{f}{0}{\sigma^r}+\cdots+\dfact{f}{n}{\sigma^r}=g$.
		
		\item There is an algorithm that solves the parameterized pseudo-orbit problem:
		Given $f_1,\dots,f_n\in k^*$;
		compute a basis of
		$M(f_1,\dots,f_n; k)$ over $\set Z$.
	\end{enumerate}
	$(k,\sigma)$ is called $\sigma$-computable if it is $\sigma^*$-computable and, in addition, one can compute all solutions in $k$ of parameterized linear difference equations with coefficients in $k$ and one can compute all hypergeometric candidates for linear difference equations with coefficients in $k$.
\end{definition}

Obviously, if a difference field extension $(F,\sigma)$ of a difference field $(k,\sigma)$ is $\sigma^*$-computable (resp.\ $\sigma$-computable), then also the ground field $(k,\sigma)$ is $\sigma^*$-computable (resp.\ $\sigma$-computable). But also the other direction holds if $(F,\sigma)$ is built by \pisiSE-monomials.

\begin{theorem}\label{Thm:SigmaComputable}
	Let $(k,\sigma)$ be a difference field and $t$ be a \pisiSE-monomial over $k$.
	\begin{itemize}
		\item[(i)]
		If $(k,\sigma)$ is
		$\sigma^*$-com\-pu\-ta\-ble, $(k(t),\sigma)$ is
		$\sigma^*$-com\-pu\-ta\-ble; in particular, one can compute the spread in $k[t]$.
		\item[(ii)]
		If $(k,\sigma)$ is $\sigma$-computable, $(k(t),\sigma)$ is $\sigma$-computable; in particular, one can compute all hypergeometric solutions of equations with coefficient in $k(t)$.
	\end{itemize}
\end{theorem}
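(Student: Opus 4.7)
The plan is to verify each clause of Definition~\ref{Def:computable} for $(k(t),\sigma)$ by reducing the corresponding task to an analogous one in $(k,\sigma)$, and then to combine the resulting spread (hence dispersion) computability with Theorem~\ref{Thm:PiSiRec} to conclude~(ii).

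For part~(i) I would address the five conditions of $\sigma^*$-computability in turn. Multivariate polynomial factorization over $k(t)$ reduces to factorization over $k$ by clearing denominators in $t$ and applying Gauss's lemma in $k[t]$. Torsion-freeness of $(k(t),\sigma^s)$ reduces to torsion-freeness of $(k,\sigma^s)$ using the transcendence of $t$ and Theorem~\ref{thm:pisiProp}: a nonzero $f \in k(t)^*$ with $f \sim_{k(t),\sigma^s} 1$ and $f^r = 1$ is forced by degree considerations, together with the \pisiSE-structure, to lie in $k^*$, reducing the claim to the hypothesis on $k$. The $\Pi$- and $\Sigma$-regularity problems in $k(t)$ reduce to the same problems in $k$ together with the parameterized pseudo-orbit problem in $k$, along the lines of the algorithms developed in~\cite{Karr81,Bronstein2000,KS:06}. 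Finally, the parameterized pseudo-orbit problem in $k(t)$ is solved by the algorithm of~\cite{KS:06}, extending Karr's construction in~\cite{Karr81}: given $f_1,\dots,f_n \in k(t)^*$, one writes each $f_i$ in the normal form provided by Corollary~\ref{Cor:NormalFormPiSi} and combines a pseudo-orbit computation in $k$ with linear algebra over $\mathbb{Z}$ to extract a basis of $M(f_1,\dots,f_n;k(t))$.

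From these ingredients, spread computability in $k[t]$ follows. Given $a,b \in k[t]$, I would factor them into irreducibles (using clause~1) and, for every pair $(\alpha,\beta)$ of irreducible factors, decide whether $\sigma^m\beta$ and $\alpha$ are associate in $k[t]$ for some $m \in \mathbb{N}$; this is a one-variable pseudo-orbit-style question that can be answered by the pseudo-orbit solver in $k$. The case distinction between finite and infinite spread is controlled by the characterizations $\semiper{k[t]}{\sigma} = k[t]^\sigma$ from~\cite{Bronstein2000} (already exploited in the proof of Corollary~\ref{Cor:NormalFormPiSi}), which pin down exactly which irreducible factors can be semi-periodic and thus produce infinite spread.

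For part~(ii) I would feed (i) into Theorem~\ref{Thm:PiSiRec}. Since $(k,\sigma)$ is $\sigma$-computable, we already have hypergeometric candidates and parameterized linear difference equations in $k$; by (i) we additionally have dispersions in $k[t]$ and the pseudo-orbit solver in $k$. Theorem~\ref{Thm:PiSiRec}(i) then yields hypergeometric candidates for equations over $k(t)$, and Theorem~\ref{Thm:PiSiRec}(ii) yields parameterized linear difference equation solving over $k(t)$; these are exactly the extra ingredients needed to upgrade $\sigma^*$-computability of $(k(t),\sigma)$ to $\sigma$-computability. The statement on hypergeometric solutions then follows by applying Theorem~\ref{th:riccati} in $(k(t),\sigma)$. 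The hard part throughout is the transfer of the parameterized pseudo-orbit and regularity problems through a single \pisiSE-monomial; once that is in place, everything else amounts to combining the reductions already assembled in Theorems~\ref{th:hyper}, \ref{Thm:RatSolver} and~\ref{Thm:PiSiRec}.
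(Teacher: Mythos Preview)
Your proposal is correct and follows essentially the same route as the paper. For part~(i) the paper simply cites \cite[Theorem~1]{KS:06} as a black box for the transfer of $\sigma^*$-computability and then invokes \cite[Lemma~15 and the text below Example~10]{Bronstein2000} for spread computation, whereas you unpack the five clauses individually; for part~(ii) the paper applies Theorem~\ref{th:hyper}(i), then Theorem~\ref{Thm:RatSolver}, then Theorem~\ref{th:hyper}(ii), while you invoke the packaged Theorem~\ref{Thm:PiSiRec} (which is just the combination of those two results) together with Theorem~\ref{th:riccati}---the ingredients and logical flow are the same.
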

\begin{proof}
	(i)~Suppose that $(k,\sigma)$ is $\sigma^*$-computable.
	By \cite[Theorem~1]{KS:06} (analyzing Karr's algorithm~\cite{Karr81}) the difference field $(k(t),\sigma)$ is $\sigma^*$-computable and one can decide algorithmically if for any $f,g\in k[t]^*$ there exists an $r\in\set Z$ such that $\sigma^r(f)/g\in k$. With~\cite[Lemma~15]{Bronstein2000} and the text below of~\cite[Example~10]{Bronstein2000} (see~\cite{Schneider:04c} for further details) it follows that one can compute spreads in $k[t]$.\\
	(ii)~Suppose in addition that $(k,\sigma)$ is $\sigma$-computable. Then it follows by Theorem~\ref{th:hyper}.(i) that one can compute all hypergeometric candidates for linear difference equations with coefficients in $k(t)$.  Hence by Theorem~\ref{Thm:RatSolver} we can compute all solutions of parameterized linear difference equations with coefficients in $k(t)$. Thus $(k(t),\sigma)$ is $\sigma$-computable. Moreover, by Theorem~\ref{th:hyper}.(ii) we can compute all hypergeometric solutions of linear difference equations with coefficients in $k(t)$.
\end{proof}

The reduction step of Theorem~\ref{Thm:SigmaComputable} is also visualized in Figure~\ref{Fig:SigmaComputable}. 
\begin{figure}[h]
	\footnotesize
	$$\xymatrix@R=0.7cm@C=0.5cm{
		\txt{$k(t)$}&\txt{$\sigma$-computable}\ar@{.}[]+<3.5em,0em>;[d]+<3.5em,0em>&\txt{\MainRed hg.\ candidates}&\txt{\MainRed rat.\ solutions\\ \MainRed of PLDE} &\txt{$\sigma^*$-computable}\\
		\txt{$k$}&\txt{$\sigma$-computable}\ar[u]&\txt{\MainRed hg.\ candidates}\ar[u]\ar[ur]&\txt{\MainRed solutions\\\MainRed of PLDE}\ar[u]&\ar[u]\ar[ul]\txt{$\sigma^*$-computable}\\
	}$$
	\normalsize
	\caption{Reduction step using $\sigma^*$-comparability}\label{Fig:SigmaComputable}
\end{figure}
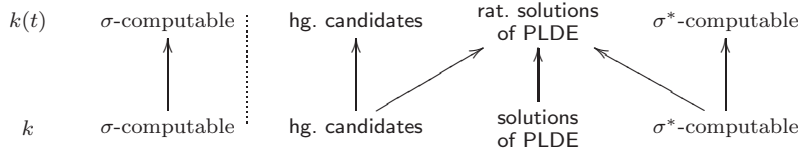

Namely, if $(k,\sigma)$ is $\sigma^*$-computable and one can compute all hypergeometric candidates and finds all solutions of PLDE in $(k,\sigma)$ (in short $(k,\sigma)$ is $\sigma$-computable), then also $(k(t),\sigma)$ is  $\sigma^*$-computable and one can compute all hypergeometric candidates and finds all solutions of PLDE in $(k,\sigma)$ (in short $(k(t),\sigma)$ is $\sigma$-computable).

Summarizing, a difference field $(k(t_1)\dots(t_e),\sigma)$ built by a tower of \pisiSE-monomials is $\sigma^*$-computable (resp.\ $\sigma$-computable)  whenever $(k,\sigma)$ is $\sigma^*$-com\-putable (resp.\ $\sigma$-computable). In particular, setting $k_j=k(t_1)\dots(t_j)$ for $1\leq j\leq e$ (note that $k_0=k$) we obtain the tower of reductions in Figure~\ref{Fig:SigmaComputableRecursive}.
\begin{figure}[h]
	\footnotesize
	$$\xymatrix@R=0.7cm@C=0.5cm{
	&&&*+[F-:<4pt>]\txt{\MainRed hg.\ solutions}\\
		\txt{$k_e$}&&\txt{\MainRed hg.\ candidates}\ar[ur]&\txt{\MainRed  poly.\ solutions\\\MainRed of homogeneous\\\MainRed equations}\ar[u] &\\\
		\txt{$k_{e-1}$}&\txt{$\sigma$-computable}\ar@{.}[]+<3.5em,0em>;[dddd]+<3.5em,0em>&\txt{\MainRed hg.\ candidates}\ar[u]\ar[ur]&\txt{\MainRed rat.\ solutions\\\MainRed of PLDE}\ar[u]&\ar[ul]\txt{$\sigma^*$-computable}\\
		\txt{$k_{e-2}$}&\txt{$\sigma$-computable}\ar[u]&\txt{\MainRed hg.\ candidates}\ar[u]\ar[ur]&\txt{\MainRed rat.\ solutions\\\MainRed of PLDE}\ar[u]&\ar[ul]\ar[u]\txt{$\sigma^*$-computable}\\
		&&&&\\
		\txt{$k_{1}$}&\txt{$\sigma$-computable}\ar@{.>}[uu]&\txt{\MainRed hg.\ candidates}\ar@{.>}[uu]\ar@{.>}[uur]&\txt{\MainRed rat.\ solutions\\\MainRed of PLDE}\ar@{.>}[uu]&\ar@{.>}[uul]\ar@{.>}[uu]\txt{$\sigma^*$-computable}\\
		\txt{$k_{0}$}&\txt{$\sigma$-computable}\ar[u]&\txt{\MainRed hg.\ candidates}\ar[u]\ar[ur]&\txt{\MainRed solutions\\\MainRed of PLDE}\ar[u]&\ar[ul]\ar[u]\txt{$\sigma^*$-computable}\\
	}$$
	\normalsize
\caption{Reduction step using $\sigma^*$-computability}\label{Fig:SigmaComputableRecursive}
\end{figure}
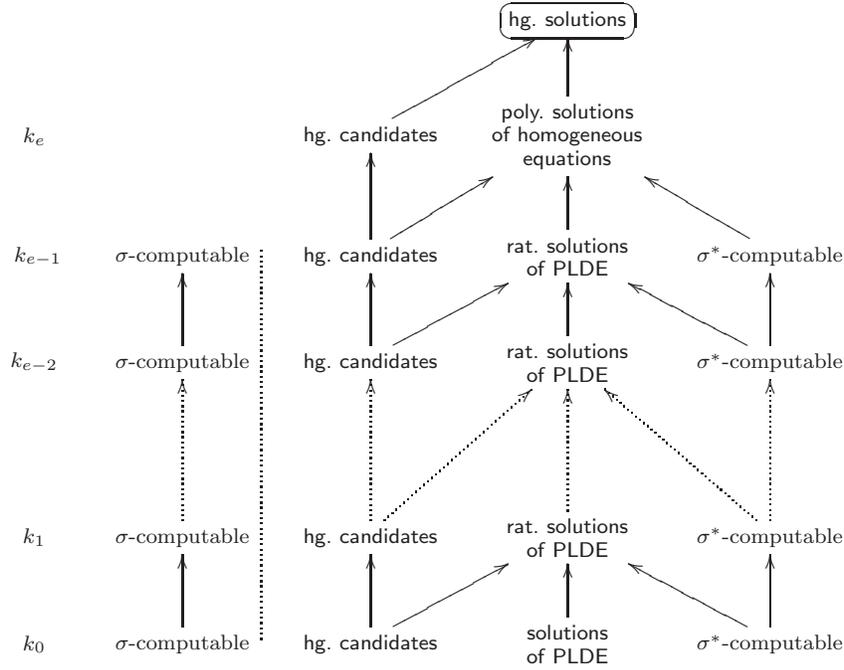

\subsection{A complete algorithm for \pisiSE-fields over certain constant fields}

If $(k_e,\sigma)$ is a \pisiSE-field over $k=k_0$, i.e., $\Const_\sigma(k)=k$ or equivalently $\sigma|_k=\text{id}$, the reduction process terminates in the base case $k_0$. In this case we also write $C:=k$. The following result enumerates properties of $k_0=k(=C)$ that enable one to tackle the base case and thus to turn the reduction process given in Figure~\ref{Fig:SigmaComputableRecursive} to a complete algorithm; compare also~\cite{KS:06}.

\begin{theorem}\label{Thm:GroundField}
	Let $(k,\sigma)$ be a difference field with $\Const_{\sigma}(k)=k$ in which we can perform the usual operations. If, in addition, there are algorithms available for

\vspace*{-0.3cm}

	\begin{enumerate}
		\item factoring in $k[t_1,\dots,t_e]$,
		\item computing for any $c_1,\dots,c_n \in k$ a $\mathbb{Z}$-basis of 
	$$M(c_1,\dots,c_n; k)= \{(e_1,\dots,e_n) \in \mathbb{Z}^n
	\st 1=\prod_{i=1}^n c_i^{e_i}\},$$
	\item recognizing whether any $c \in k$ is an integer,
	\end{enumerate}
	then $(k,\sigma)$ is $\sigma$-computable.
\end{theorem}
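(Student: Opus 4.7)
The strategy is to verify each of the seven algorithmic requirements in Definition~\ref{Def:computable} separately — five for $\sigma^*$-computability plus PLDE solvability and the computation of hypergeometric candidates. The crucial observation throughout is that $\Const_\sigma(k)=k$ forces $\sigma$ to act as the identity on $k$, so all iterated-shift expressions collapse into purely multiplicative or additive expressions in $k$.

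Three of the five $\sigma^*$-computability items are essentially free. Factoring in $k[t_1,\dots,t_e]$ is hypothesis~1 verbatim. Torsion-freeness is vacuous: with $\sigma^s=\mathrm{id}$, the relation $f\sim_{k,\sigma^s}1$ forces $f=\sigma^s(w)/w=1$. The parameterized pseudo-orbit problem also trivializes, since $1\sim_{k,\sigma}\prod f_i^{e_i}$ reduces to the plain multiplicative relation $\prod f_i^{e_i}=1$, which is exactly what hypothesis~2 handles.

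For $\Pi$-regularity the iterated product $\dfact{f}{n}{\sigma}$ becomes $f^n$, so I must decide whether some $n\geq 0$ satisfies $f^n=g$; computing a $\mathbb{Z}$-basis of $M(f,g;k)$ via hypothesis~2 and searching for a lattice vector of the form $(n,-1)$ with $n\geq 0$ settles this. For $\Sigma$-regularity the sum reduces to $1+f+\cdots+f^n=g$. If $f=1$, the condition becomes whether $g-1$ is a non-negative integer, handled by hypothesis~3. Otherwise, multiplying by $f-1$ recasts the problem as $f^{n+1}=1+g(f-1)$, a $\Pi$-regularity instance already handled.

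Finally, PLDE and the hypergeometric candidates are straightforward. The PLDE $\sum_i a_i\sigma^i g=\sum_j c_j b_j$ collapses to the single linear relation $A\,g=\sum_j c_j b_j$ with $A=\sum_i a_i\in k$; a basis of the solution space over $C=k$ is read off by elementary linear algebra, distinguishing the cases $A=0$ and $A\neq 0$. For the hypergeometric candidates, Lemma~\ref{lm:riccati} says $E-r$ is a right factor of $L=\sum a_i E^i$ iff $\sum a_i r^i=0$; since every pseudo-orbit under the identity action is a singleton, a valid set $S$ is simply the finite set of roots in $k$ of $\sum a_i x^i$, obtainable by univariate factorization — a special case of hypothesis~1. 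The only mildly delicate step is the interplay of hypotheses~2 and~3 in the $\Pi$- and $\Sigma$-regularity reductions; everything else is bookkeeping.
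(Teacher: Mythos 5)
Your proposal is correct and follows essentially the same route as the paper's proof: you verify each item of Definition~\ref{Def:computable} directly, using that $\sigma=\mathrm{id}$ trivializes torsion-freeness and the parameterized pseudo-orbit problem, reducing $\Pi$- and $\Sigma$-regularity to the orbit problem (hypothesis~2) plus integer recognition (hypothesis~3), solving PLDE by elementary linear algebra, and obtaining the hypergeometric candidates as the roots in $k$ of $\sum_i a_i x^i$ via factorization (hypothesis~1). The only difference is presentational: where the paper invokes \cite[Lemmas~2 and~3]{Karr81} for $\Pi$- and $\Sigma$-regularity, you spell out the same elementary reductions (deciding $f^n=g$ through the lattice $M(f,g;k)$, and the geometric-sum case split $f=1$ versus $f\neq1$) explicitly.
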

\begin{proof}
	Any constant field is torsion-free,
	because $\{\sigma^i(g)/g: g\in k^*\}=\{1\}$ for all $i\in\set N$.
	$\Pi$-regularity can be decided with \cite[Lemma~2]{Karr81} and using the property that one can solve the orbit problem (which is covered as special case by property~(2)).
	$\Sigma$-regularity can be decided with \cite[Lemma~3]{Karr81}, using again the property that one can solve the orbit problem and
	property~(3). The parameterized pseudo-orbit problem in the constant field is property~(3). Thus $k$ is $\sigma^*$-computable.\\
	Moreover,
	solving parameterized linear difference equations with coefficients over constants reduces immediately to solving a linear system.
	Finally, finding all hypergeometric candidates for an equation with constant coefficients is equivalent to finding all roots in $k$ of a certain univariate polynomial, and factoring this polynomial (property~(1)) yields all such roots.
\end{proof}

\begin{remark}
	\rm
	The pseudo-orbit problem is needed to check if a unimonomial $t$ over $k$ with $\eta:=\sigma(t)/t\in k$ is a \piE-monomial (i.e., if there is an $n\in\set Z$ with $\eta^n\sim_{k,\sigma}1$), or it is needed to derive the denominator and degree bounds to compute rational solutions in $k(t)$ when $t$ is a \piE-monomial over $k$ (see Theorem~\ref{th:PiBounds}). As mentioned above this problem can be solved in a tower of \pisiSE-extensions by dealing with the more general problem to solve the parameterized pseudo orbit problem in the fields below. In a \pisiSE-field this finally leads to the problem to deal with requirement~(2) in Theorem~\ref{Thm:GroundField}. If one deals only with \sigmaSE-extensions, the pseudo-problem does not arise directly. However, in general one enters in the problem to compute the spread in $k[t]$ and thus has to solve in certain instances the \piE- and $\Sigma$-Regularity problem. However, as stated in the proof of Theorem~\ref{Thm:GroundField}, it suffices to solve the orbit problem
	(and not the more general parameterized orbit problem, property (2) in Theorem~\ref{Thm:GroundField}): given $f,g\in\Const_{\sigma}(k)$, decide constructively if there is an $n\in\set Z$ with $f^k=g$. Precisely this problem has been considered in more details in~\cite{AbraBron2000}. 
\end{remark}

In short, the algorithmic machinery visualized in Figure~\ref{Fig:SigmaComputableRecursive} terminates with the base case given in Figure~\ref{Fig:ConstantField}.
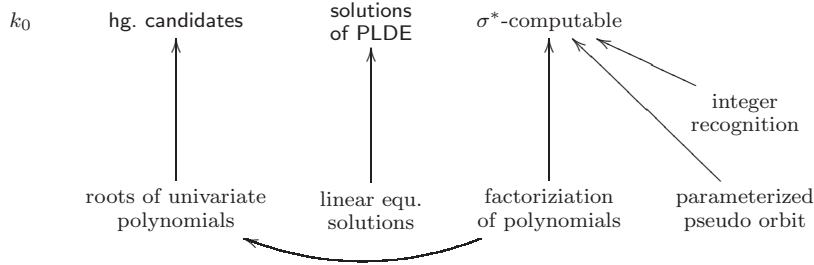
\begin{figure}[h]
\footnotesize
$$\xymatrix@R=0.5cm@C=0.5cm{
\txt{$k_{0}$}&\txt{\MainRed hg.\ candidates}&\txt{\MainRed solutions\\\MainRed of PLDE}&\txt{$\sigma^*$-computable}\\
&&&&\ar[ul]\txt{integer\\recognition}\\
&\txt{roots of univariate\\ polynomials}\ar@{->}[uu]&
\txt{linear equ.\\solutions}\ar@{->}[uu]&
\ar@/^1.8pc/@[black][ll]\txt{factoriziation\\ of polynomials}\ar@{->}[uu]&
\txt{parameterized\\ pseudo orbit}\ar@{->}[uul]\\
}$$	
\normalsize
	\normalsize
\caption{Base case: the constant field $k_0=k$}\label{Fig:ConstantField}
\end{figure}

As remarked above, special cases for the parameterized pseudo-orbit problem in the constant field (property (2) in Theorem~\ref{Thm:GroundField}) have been treated in \cite{AbraBron2000}. The general problem has been solved in~\cite{Ge:93b} for an algebraic number field $A$ and based on that has been solved for a rational function field over $A$ in~\cite[Thm~3.5]{Schneider05c}. As a consequence any rational function field over an algebraic number field is $\sigma$-computable. Summarizing, iterative application of Theorem~\ref{Thm:SigmaComputable} (or applying the reduction process sketched in Figure~\ref{Fig:SigmaComputableRecursive} together with the base case given in Figure~\ref{Fig:ConstantField}) results in the following algorithmic result.

\begin{corollary}
A rational function field $C=A(y_1,\dots,y_{\rho})$ over an algebraic number field\footnote{An algebraic number field $A$ is an algebraic field extension of $\set Q$ with finite degree; in particular, $A$ can be considered as a finite-dimensional vector space over $\set Q$.} $A$ is $\sigma$-computable. In particular, one can compute all solutions of parameterized linear difference equations and all hypergeometric solutions of homogeneous linear difference equations in a \pisiSE-field defined over $C$. 
\end{corollary}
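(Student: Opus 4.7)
The plan is to reduce the statement to verifying the three hypotheses of Theorem~\ref{Thm:GroundField} for $C = A(y_1,\dots,y_\rho)$ equipped with the identity automorphism. Once $C$ is shown to be $\sigma$-computable, iterating Theorem~\ref{Thm:SigmaComputable} up the tower of \pisiSE-monomials that builds the given \pisiSE-field $F$ over $C$ propagates $\sigma$-computability to $F$. By Definition~\ref{Def:computable} this immediately yields a solver for parameterized linear difference equations in $F$, and by Theorem~\ref{Thm:SigmaComputable}(ii) it also yields all hypergeometric solutions of homogeneous equations in $F$.

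The only real work is thus in checking the three conditions of Theorem~\ref{Thm:GroundField}. Condition~(1), multivariate factorization in $A(y_1,\dots,y_\rho)[t_1,\dots,t_e]$, reduces by clearing denominators to factoring in $A[y_1,\dots,y_\rho,t_1,\dots,t_e]$, which is handled by classical algorithms combining multivariate factorization over $\mathbb{Q}$ with Trager's primitive-element technique to pass to the algebraic number field $A$. Condition~(3), integer recognition for $c \in C$, is routine: reduce $c$ to canonical form to test whether it is independent of $y_1,\dots,y_\rho$ (that is, lies in $A$), and if so compute its minimal polynomial over $\mathbb{Q}$ to decide whether $c \in \mathbb{Z}$.

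The main obstacle is condition~(2), the parameterized pseudo-orbit problem: computing a $\mathbb{Z}$-basis of
\[
M(c_1,\dots,c_n;C) = \Bigl\{(e_1,\dots,e_n) \in \mathbb{Z}^n : \prod_{i=1}^n c_i^{e_i} = 1\Bigr\}
\]
for given $c_1,\dots,c_n \in C^*$. This is equivalent to finding all multiplicative relations among the $c_i$. In the number-field case $C = A$, this is Ge's algorithm~\cite{Ge:93b}, which rests on deep unit- and class-group computations in $A$. The extension to the rational function field $A(y_1,\dots,y_\rho)$ proceeds by splitting each $c_i$ into a product of irreducible polynomial factors in $A[y_1,\dots,y_\rho]$ and a leading-coefficient content in $A^*$: exponent constraints coming from the polynomial factors give a lattice computable by integer linear algebra on the factor-exponent matrix, while the residual relations among the contents are obtained by Ge's algorithm over $A$. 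Intersecting and pulling back these two lattices produces the desired $\mathbb{Z}$-basis of $M(c_1,\dots,c_n;C)$; this is exactly what is carried out in~\cite[Thm.~3.5]{Schneider05c}. With (1)--(3) verified, Theorem~\ref{Thm:GroundField} supplies $\sigma$-computability of $C$, and the iterative reduction described in the first paragraph completes the proof.
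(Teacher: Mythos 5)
Your proposal is correct and follows essentially the same route as the paper: verify the three hypotheses of Theorem~\ref{Thm:GroundField} for $C=A(y_1,\dots,y_\rho)$ (with the parameterized pseudo-orbit problem handled by Ge's algorithm over $A$ and its extension to rational function fields in \cite[Thm~3.5]{Schneider05c}), and then propagate $\sigma$-computability up the tower by iterating Theorem~\ref{Thm:SigmaComputable}. The extra detail you supply on factorization and on splitting multiplicative relations into polynomial-factor and content parts is consistent with the cited references and does not change the argument.
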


In a nutshell, we obtain a complete algorithm to find all hypergeometric solutions of linear difference equations in a \pisiSE-field $(F,\sigma)$ with $F=C(t_1)\dots(t_e)$ over a constant field $C$ with certain algorithmic properties. These requirements hold in particular if $C=A(y_1\dots,y_{\rho})$ is a rational function field over an algebraic number field $A$. 
If one restricts to the rational case $F=C(t_1)$ with $\sigma(t_1)=t_1+1$ (resp. to the $q$-rational case $\sigma(t_1)=q\,t_1$ where $q\in\{y_1,\dots,y_{\rho}\}$), the derived algorithm is equivalent to the algorithm Hyper~\cite{Petkovsek1992} (resp.\ $q$-Hyper~\cite{APP:98}). Also the mixed multibasic version described in~\cite{BP:99} is contained in our generalized machinery. In \texttt{Sigma} most of these ideas are implemented (and are continuously challenged and improved by new examples).

\begin{example}\label{Exp:MMAHyper}
\rm
After loading

\begin{mma}
	\In << Sigma.m \\
	\vspace*{-0.1cm}
	\Print \LoadP{Sigma - A summation package by Carsten Schneider
		\copyright\ RISC-JKU}\\
\end{mma}

\vspace*{0.2cm}

\noindent into the computer algebra system Mathematica, we can perform the calculations illustrated in Example~\ref{Exp:HyperDFHn1} as follows. We enter the rational functions $(a_0,a_1,a_2)\in\set Q(x)(h)^3$ of the operator~\eqref{Equ:Operator2} in

\begin{mma}
	\In PL=	\{(1
	+h
	+h x
	)^2 (
	3
	+2 h
	+2 x
	+3 h x
	+h x^2
	)^2,\newline
\hspace*{1cm}	-h (1+x) (3+2 x) (
	3
	+2 h
	+2 x
	+3 h x
	+h x^2
	)^2,
	h (1+x)^2 (2+x)^3 (1
	+h
	+h x
	)\};
	\\
\end{mma}

\vspace*{0.1cm}

\noindent and define the automorphism $\sigma:\set Q(x)(h)\to\set Q(x)(h)$ with $\sigma(x)=x+1$ and $\sigma(h)=h+\frac1{x+1}$ by

\begin{mma}
\In tower=\{\{x,1,1\},\{h,1,\frac1{x+1}\}\};\\
\end{mma}

\vspace*{0.1cm}

\noindent Then we can compute the rational functions $r_j\in\set Q(x)(h)$ for $j=1,2$ of the first order factors  $E-r_j$ of ~\eqref{Equ:Operator2} with the function call

\begin{mma}
	\In SigmaHyper[PL,tower]\\
	\Out \{\frac{(1 + h (1 + x))^3}{(h^2 (1 + x)^3)}, \frac{(1 + h (1 + x))^2}{(h (1 + x)^2)}\}\\
\end{mma}

\vspace*{0.1cm}

\noindent As illustrated in Example~\ref{Exp:HyperDFHn2} it is beneficial if one extracts the underlying components $(v_j,u_j)\in k^2$ with $r_j=u_j\frac{\sigma(v_j)}{v_j}$. In our particular example this can be carried out with

\begin{mma}
\In SigmaHyperComponent[PL,tower,HyperStrategy \to "PolynomialVersion"]\\
\Out \{\{h^2, \frac{1 + h + h x}{1 + x}\}, \{h, \frac{1 + h + h x}{1 + x}\}\}\\
\end{mma} 

\vspace*{0.1cm}

\noindent following the steps given in Example~\ref{Exp:HyperDFHn4}. In particular, the improvements (1)--(3) of Remark~\ref{Remark:Improvement} are utilized; see also Example~\ref{Exp:HyperDFHn5}. With the option \MText{HyperStrategy $\to$ "RationalVersion"} (which is the standard choice in \texttt{Sigma}) also the improvement~(4) of Remark~\ref{Remark:Improvement} is activated (compare Example~\ref{Exp:HyperDFHn5b}) which leads to the following components $(v_j,u_j)$:

\begin{mma}\MLabel{MMA:RatCase}
	\In SigmaHyperComponent[PL,tower,HyperStrategy \to "RationalVersion"]\\
	\Out \{\{h^3, h\}, \{h^2, h\}\}\\
\end{mma} 

\vspace*{0.1cm}

\noindent The above function calls can be executed if the difference field $(k,\sigma)$ is explicitly given by a tower of \pisiSE-monomials. However, in most applications the problem is stated in terms of indefinite nested sums and products. 
Using the framework elaborated in~\cite{Karr81,Schneider:01,DR1,DR3,OS:18,Schneider:20} it suffices to enter the given recurrence, and the
construction of the underlying difference field (resp.\ ring) is completely automatized. 
For instance, after entering 

\begin{mma}
\In rec=	\big(
1
+H_\nu
+\nu H_\nu
\big)^2 (
3
+2 \nu
+2 H_\nu
+3 \nu H_\nu
+\nu^2 H_\nu)^2 F[\nu]\newline
\hspace*{2cm}-(1+\nu) (3+2 \nu) H_\nu\big(
3
+2 \nu
+2 H_\nu
+3 \nu H_\nu
+\nu^2 H_\nu
\big)^2 F[1+\nu]\newline 
\hspace*{1cm}+(1+\nu)^2 (2+\nu)^3 H_\nu\big(
1
+H_\nu
+\nu H_\nu
\big)  F[2+\nu]
==0;\\
\end{mma}

\vspace*{0.1cm}

\noindent from Example~\ref{Exp:RecExp}(I) one obtains the hypergeometric solutions in terms of the harmonic numbers with the function call

\begin{mma}
\In SolveRecurrence[rec,F[{\nu}]]\\

\vspace*{-0.1cm}

\Out \{\{0,H_{\nu}\,\prod_{i=1}^{\nu} H_i\},\{0,H_{\nu}^2\,\prod_{i=1}^{\nu} H_i\}\}\\
\end{mma}

\vspace*{0.1cm}

\noindent Similarly, one can solve the recurrence 

\begin{mma}
\In rec=-2 (1+\nu)^2 (2+\nu) \big(
7
+6 \nu
+\nu^2
+3 \nu!
+5 \nu \nu!
+2 \nu^2 \nu!
\big) (\nu!)^2F[\nu]\newline
\hspace*{1.5cm}+(1+\nu) (2+\nu) \big(
16
+16 \nu
+3 \nu^2
+7 \nu!
+12 \nu \nu!
+4 \nu^2 \nu!
\big) \nu!F[1+\nu]\newline
\hspace*{1cm}-\big(
2
+4 \nu
+\nu^2
+\nu!
+2 \nu \nu!
\big) F[2+\nu]==0;\\
\end{mma}

\vspace*{0.1cm}

\noindent from Example~\ref{Exp:RecExp}(II) straightforwardly:

\begin{mma}
\In SolveRecurrence[rec,F[{\nu}]]\\

\vspace*{-0.1cm}

\Out \{\{0,\prod_{i=1}^{\nu} i!\},\{0,({\nu}!+{\nu}^2)2^{\nu}\prod_{i=1}^{\nu} i!\}\}\\
\end{mma}
\end{example}

\subsection{Controlling the hypergeometric candidates and the extension of the constant field}\label{Sec:ControlConstants}

In various applications it is sufficient to carry out the proposed method in the smallest \pisiSE-field in which the coefficients of the recurrence can be represented. E.g., in calculations coming from particle physics~\cite{QCD} one computes all hypergeometric solutions in the field $C(t_1)$ with $\sigma(t_1)=t_1+1$ and constant field $C=\set Q(y_1,\dots,y_{\rho})$. 
In other applications it turns out that the constant field $C$ has to be extended to a larger field $\tilde{C}$ in order to find the desired hypergeometric solutions in $(\tilde{F},\sigma)$ with $\tilde{F}=\tilde{C}(t_1)\dots(t_e)$. Before we continue this discussion, we first show that $(\tilde{F},\sigma)$ forms again a \pisiSE-field over the larger constant field $\tilde{C}$. More generally, we obtain the following result.

\begin{proposition}
Let $(k,\sigma)$ be a difference field and let $(k(t_1),\dots(t_e),\sigma)$ be a tower of \pisiSE-monomials over $k$ with $\sigma(t_i)=\alpha_i\,t_i+\beta_i$ for $1\leq i\leq e$. Let $(k(S),\sigma)$ be a difference field extension of $(k,\sigma)$ where $S$ is a generator set\footnote{We do not want to specify $S$ further. However, if we take $y\in k(S)$ then we can take a finite set $S_0\subseteq S$ such that $y\in k(S_0)$. Here $S_0$ can be built either by algebraic or transcendental generators.} with $\sigma(c)=c$ for all $c\in S$. Take the difference field extension $(k(S)(t_1),\dots(t_e),\sigma)$ of $(k(S),\sigma)$ where $k(S)(t_1)\dots(t_e)$ is a rational function field extension of $k(S)$ and  $\sigma(t_i)=\alpha_i\,t_i+\beta_i$ for $1\leq i\leq e$. Then:
\begin{enumerate}
\item $(k(S)(t_1)\dots(t_e),\sigma)$ is built by a tower of \pisiSE-monomials over $k(S)$; in particular,
$\Const_{\sigma}{k(S)(t_1)\dots(t_e)}=\Const_{\sigma}{k(S)}$.
\item If $(k(t_1)\dots(t_e),\sigma)$ is a \pisiSE-field over $k$, then $(k(S)(t_1)\dots(t_e),\sigma)$ is a \pisiSE-field over $k(S)$. 
\end{enumerate} 
\end{proposition}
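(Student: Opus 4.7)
My plan is to establish part (1) by induction along the tower and then deduce part (2) as a short corollary, with the core technical work coming from iterated applications of Lemma~\ref{Lemma:XExt}.

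First I would check that $\sigma$ really does extend to a field automorphism of $K_e := k(S)(t_1)\dots(t_e)$. Since $k(S)(t_1)\dots(t_e)$ is a rational function field extension of $k(S)$, each $t_i$ is transcendental over $K_{i-1} := k(S)(t_1)\dots(t_{i-1})$. The scalars $\alpha_i \in k^*$ and $\beta_i \in k$ lie in $K_{i-1}$, and $\alpha_i \ne 0$, so the prescription $\sigma(t_i) = \alpha_i t_i + \beta_i$ uniquely determines an automorphism of $K_{i-1}[t_i]$ extending $\sigma$ on $K_{i-1}$ and then of its field of fractions $K_i$. Thus each $t_i$ is a unimonomial over $K_{i-1}$ with $\sigma(t_i)/t_i \in k \subseteq K_{i-1}$ or $\sigma(t_i)-t_i \in k \subseteq K_{i-1}$, exactly as in the original tower over $k$.

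For part~(1), what remains is to upgrade ``unimonomial'' to ``\pisiSE-monomial'' by verifying the constant-field condition $\Const_\sigma K_i = \Const_\sigma K_{i-1}$ for every $i$, from which the general claim $\Const_\sigma K_e = \Const_\sigma k(S)$ follows by chaining. The idea is to rewrite $K_i$ as $k(t_1)\dots(t_i)(S)$, so that the $t_j$ are adjoined first and $S$ afterwards. An element $y \in \Const_\sigma K_i$ involves only finitely many generators from $S$, so we may assume $S = \{s_1,\dots,s_r\}$ is finite. Adjoining $s_1,\dots,s_r$ one at a time and applying Lemma~\ref{Lemma:XExt} at each step (the lemma covers both the transcendental and the algebraic case for each $s_j$ over the preceding subfield), together with the hypothesis that $t_i$ is a \pisiSE-monomial over $k(t_1)\dots(t_{i-1})$, yields
\begin{align*}
\Const_\sigma K_i &\ =\ \Const_\sigma k(t_1)\dots(t_i)(s_1,\dots,s_r)\\
 &\ =\ (\Const_\sigma k(t_1)\dots(t_i))(s_1,\dots,s_r)\\
 &\ =\ (\Const_\sigma k(t_1)\dots(t_{i-1}))(s_1,\dots,s_r)\\
 &\ =\ \Const_\sigma k(t_1)\dots(t_{i-1})(s_1,\dots,s_r)\ =\ \Const_\sigma K_{i-1}.
\end{align*}
Chaining these equalities for $i=1,\dots,e$ gives $\Const_\sigma K_e = \Const_\sigma k(S)$ and completes~(1).

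For part~(2), the hypothesis that $(k(t_1)\dots(t_e),\sigma)$ is a \pisiSE-field over $k$ means $\Const_\sigma k(t_1)\dots(t_e) = k$; in particular $k$ consists of constants, so $\sigma$ restricts to the identity on $k$. Combined with $\sigma|_S = \operatorname{id}$, this shows $\sigma$ fixes every generator of $k(S)$, whence $\Const_\sigma k(S) = k(S)$. Together with part~(1) this gives $\Const_\sigma K_e = k(S)$, and since $K_e$ is built by a tower of \pisiSE-monomials over $k(S)$ (by~(1) again), $(K_e,\sigma)$ is a \pisiSE-field over $k(S)$. The main obstacle is really bookkeeping: Lemma~\ref{Lemma:XExt} is stated for a single indeterminate, so one must justify the reduction to finite $S_0 \subseteq S$ and track carefully that the iterated adjunction commutes with the $\sigma$-constants functor in both the transcendental and algebraic cases. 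Once that reduction is set up, the rest is a direct combination of Lemma~\ref{Lemma:XExt} with the \pisiSE-monomial hypothesis for the original tower.
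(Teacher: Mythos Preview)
Your proof is correct and follows essentially the same approach as the paper: reduce to a finite subset $S_0\subseteq S$, reorder generators so that the $t_j$ come first, and apply Lemma~\ref{Lemma:XExt} iteratively together with the \pisiSE-property of the original tower. The only cosmetic difference is that you argue directly by establishing $\Const_\sigma K_i=\Const_\sigma K_{i-1}$ at each level, whereas the paper phrases the same computation as a proof by contradiction (assuming some $t_u$ fails to be a \pisiSE-monomial and deriving that the offending constant is in fact free of $t_u$).
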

\begin{proof}
Suppose that $(k(t_1)\dots(t_e),\sigma)$ is built by a tower of \pisiSE-monomials over $k$ but that $(k(S)(t_1)\dots(t_e),\sigma)$ is not built by a tower of \pisiSE-monomials over $k(S)$. 
Then we can choose $u<e$ such that $t_1,\dots,t_{u-1}$ are \pisiSE-monomials and $t_u$ is not a \pisiSE-monomial. Hence we can take $y\in k(S)(t_1)\dots(t_{u})$ with $\sigma(y)=y$ which depends on $t_u$. By the definition of the generator set it follows that $y\in k(S_0)(t_1)\dots(t_{u})$ for a finite subset $S_0\subseteq S$, say $S_0=\{x_1,\dots,x_r\}$.
By rearranging the generators of $S_0$ we get $y\in k(t_1)\dots(t_{u})(x_1,\dots,x_r)$. Furthermore, by iterative application of Lemma~\ref{Lemma:XExt} we conclude that
\begin{align*}
\Const_{\sigma}&k(t_1)\dots(t_{u-1})(t_u)(x_1,\dots,x_r)\\
&=(\Const_{\sigma}k(t_1)\dots(t_u)(x_1,\dots,x_{r-1}))(x_r)\\
&=\dots=(\Const_{\sigma}k(t_1)\dots(t_u))(x_1,\dots,x_r)\\
&=\Const_{\sigma}k(x_1,\dots,x_r).
\end{align*}
Since $\sigma(y)=y$, it follows that $y\in k(x_1,\dots,x_r)$ and thus $y$ is free of $t_u$, a contradiction. Consequently $(k(S)(t_1)\dots(t_e),\sigma)$ is built by a tower of \pisiSE-monomials over $k(S)$ and hence $\Const_{\sigma}k(S)(t_1)\dots(t_e)=\Const_{\sigma}k(S)$; this proves statement (1).\\
Suppose in addition that $\Const_{\sigma}k=k$, i.e.,
$(k(t_1)\dots(t_e),\sigma)$ is a \pisiSE-field over $k$. Then we can conclude that $\Const_{\sigma}k(S)(t_1)\dots(t_e)=\Const_{\sigma}k(S)=k(S)$. Consequently, $(k(S)(t_1)\dots(t_e),\sigma)$ is a \pisiSE-field over $k(S)$ which proves statement (2).
\end{proof}

Looking at Figure~\ref{Fig:SigmaComputableRecursive} together with the base case given in Figure~\ref{Fig:ConstantField} (or at  Theorem~\ref{Thm:SigmaComputable} and Theorem~\ref{Thm:GroundField}) one can detect easily a source to find more solutions: one can try to extend the constant field $C$ to $\tilde{C}$ in order to find more hypergeometric candidates.
Namely, with part~(1) of Theorem~\ref{th:hyper} it follows that the hypergeometric candidates in $k_e$ are determined by the coefficients of the input difference equation and the hypergeometric candidates for a certain difference equation in $k_{e-1}$. Applying this argument iteratively leads to the base case $k_0=C$ and we apply Theorem~\ref{Thm:GroundField}: the hypergeometric candidates there are given by the roots of certain polynomials with coefficients in $C$. Thus taking the splitting field $\tilde{C}$ for all these polynomials produces the maximal set of possible hypergeometric candidates in the \pisiSE-field $(\tilde{F},\sigma)$ with
$\tilde{F}=\tilde{C}(t_1)\dots(t_e)$.
Finally, looking at the proof of Theorem~\ref{th:hyper},
all hypergeometric solutions in $\tilde{F}=\tilde{k}_e$ can be determined by looping through all hypergeometric candidates in $\tilde{k}_e$ and computing polynomial solutions of a particularly constructed homogeneous linear difference equation in $\tilde{C}(t_1,\dots,t_{e-1})[t_e]$. The following remarks are appropriate: 
\begin{enumerate}
\item So far we can compute these polynomial solutions in $\tilde{C}[(t_1,\dots,t_{e-1})[t_e]$ with our available machinery only if $\tilde{C}$ is given by a rational function field over an algebraic number field. Therefore we cannot apply our toolbox to the algebraic closure of $C=A(y_1,\dots,y_{\rho})$ if $\rho>0$. 
But we have to restrict to $\tilde{C}=\bar{\set Q}(y_1,\dots,y_{\rho})$ where $\bar{\set Q}$ is the algebraic closure of $\set Q$ or to $\tilde{C}=\tilde{A}(y_1,\dots,y_{\rho})$ where $\tilde{A}$ is an appropriately chosen algebraic number field that contains $A$.
\item When one solves the underlying linear difference equations whose coefficients are given in $\tilde{C}(t_1)\dots(t_e)$,
one might ask if one finds more solutions by extending the constant field $\tilde{C}$ further. The following proposition excludes this possibility for algebraic extensions\footnote{It can be shown that Proposition 2 holds for general constant field extensions (not necessarily algebraic) if $k$ is, e.g., a \pisiSE-field. Since this aspect is not crucial for our current considerations, we do not provide a proof here.}.
\end{enumerate}

\begin{proposition}\label{Prop:AlgebraicPLDE}
Let $(k,\sigma)$ be a difference field with $C=\Const_{\sigma}k$ and let	
$(k(S),\sigma)$ be an algebraic difference field extension of $(k,\sigma)$ where $S$ is a generator set with $\sigma(c)=c$ for all $c\in S$. Let $0\neq a=(a_0,a_1,\dots,a_\ell)\in k^{\ell+1}$ and
$b=(b_1,b_2,\dots,b_m)\in k^m$ and suppose that $f_1,\dots,f_n\in k\times C^{m}$ is a basis of ${\mathcal V}(a,b,k)$. Then
\begin{equation}\label{Equ:SolSpaceId}
{\mathcal V}(a,b,k(S))=\{\kappa_1\,f_1+\dots+\kappa_n\,f_n: \kappa_1,\dots,\kappa_n\in C(S)\};
\end{equation}
in particular, $f_1,\dots,f_n$ is also a basis of ${\mathcal V}(a,b,k(S))$. 
\end{proposition}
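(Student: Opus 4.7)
The plan is to prove both inclusions in~\eqref{Equ:SolSpaceId}; since the right-hand side is always contained in the left (because $\Const_\sigma k(S) = C(S)$, which follows by applying Lemma~\ref{Lemma:XExt} inductively to any finite subset of $S$, so $C(S)$-multiples of solutions remain solutions), the substance is the reverse inclusion, which will also yield $C(S)$-linear independence of the $f_i$ as a byproduct.

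The key preliminary step is to produce a convenient basis: I claim there is a $k$-basis of $k(S)$ consisting entirely of elements of $C(S)$. Indeed, for each $s\in S$ the relation $\sigma(\mu_s)=\mu_s$ (forced by $\sigma(s)=s$ together with minimality of $\mu_s\in k[X]$) places the minimal polynomial of $s$ over $k$ into $C[X]$, so $C(S)$ is algebraic over $C$. Hence $k[C(S)]$ is integral over the field $k$, is itself a field, and therefore coincides with $k(S)$. Because $C(S)$ is multiplicatively closed, $k[C(S)]$ equals the $k$-span of $C(S)$; a $k$-basis $\gamma_1,\dots,\gamma_r$ of $k(S)$ can therefore be extracted from $C(S)$.

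To establish $\subseteq$, I would take $(g,c_1,\dots,c_m)\in{\mathcal V}(a,b,k(S))$, choose a finite $S_0\subseteq S$ supporting all of $g$ and $c_1,\dots,c_m$, and work with a basis $\gamma_1,\dots,\gamma_r\subseteq C(S_0)$ as above. Expand $g=\sum_i g_i\gamma_i$ and $c_j=\sum_i c_{j,i}\gamma_i$ with $g_i,c_{j,i}\in k$; since the $\gamma_i$ are $\sigma$-fixed and the $c_j$ lie in $\Const_\sigma k(S)=C(S)$, applying $\sigma$ and using uniqueness of the $\gamma$-expansion forces $c_{j,i}\in C$. Substituting into~\eqref{PLDE} and collecting by $\gamma_i$ (again possible because $\sigma(\gamma_i)=\gamma_i$) reduces the equation, via $k$-linear independence of $\gamma_1,\dots,\gamma_r$, to $r$ separate conditions $(g_i,c_{1,i},\dots,c_{m,i})\in{\mathcal V}(a,b,k)$; writing each tuple as $\sum_l\lambda_{i,l}f_l$ with $\lambda_{i,l}\in C$ and setting $\kappa_l:=\sum_i\lambda_{i,l}\gamma_i\in C(S)$ yields $(g,c_1,\dots,c_m)=\sum_l\kappa_l f_l$. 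Running the same decomposition backwards from $\sum_l\kappa_l f_l=0$ and invoking $C$-linear independence of the $f_l$ in each $\gamma_i$-component gives the $C(S)$-linear independence of the $f_l$.

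The main obstacle is exactly the basis construction: $k$ and $C(S)$ need not be linearly disjoint over $C$, so an arbitrary $C$-basis of $C(S)$ will in general fail to be a $k$-basis of $k(S)$. The asymmetric formulation above---choosing a $k$-basis of $k(S)$ \emph{inside} $C(S)$, rather than a $C$-basis of $C(S)$ that is simultaneously a $k$-basis of $k(S)$---sidesteps this issue and is precisely what enables the $\sigma$-invariance argument to split the equation into $k$-component equations cleanly.
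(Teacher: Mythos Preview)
Your proof is correct and follows essentially the same approach as the paper's: both pick a $k$-basis of $k(S_0)$ consisting of $\sigma$-fixed elements (the paper uses the concrete monomial basis $\{x_1^{e_1}\cdots x_r^{e_r}:0\le e_i<d_i\}$, which is a particular instance of your abstract ``$k$-basis drawn from $C(S)$''), then split the PLDE component-wise along this basis, invoke the $k$-basis $f_1,\dots,f_n$, and reassemble. Your justification that the minimal polynomial of each generator lies in $C[X]$ and your closing remark on the failure of linear disjointness are nice clarifying touches not spelled out in the paper, but the underlying argument is the same.
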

\begin{proof}
W.l.o.g.\ we may assume that $S$ is a minimal generating set of $k(S)$.
Clearly the inclusion $\supseteq$ in~\eqref{Equ:SolSpaceId} holds. In order to show the other inclusion, let $(g,c_1,\dots,c_m)\in{\mathcal V}(a,b,k(S))$. Then there is $\{x_1,\dots,x_r\}\subseteq S$ such that for the field extension $F=k(x_1,\dots,x_r)$ of $k$ with $\tilde{C}=\Const_{\sigma}(F)$ we have
$(g,c_1,\dots,c_m)\in F\times\tilde{C}^m$. Here we view $k(x_1,\dots,x_i)$ as an algebraic field extension of $k(x_1,\dots,x_{i-1})$ for $1\leq i\leq r$ where the minimal polynomial $\mu_i\in k(x_1,\dots,x_{i-1})[z]$ has degree $d_i>1$. Since $x_1,\dots,x_r$ are algebraic, all elements in $F=k(x_1,\dots,x_r)$ can be represented as polynomial expressions in $x_1,\dots,x_r$, i.e., we may write $F=k[x_1,\dots,x_r]$.
For each $i\geq1$ we define $[i]=\{0,\dots,i-1\}$. Note that by definition the monomials 
\begin{equation}\label{Equ:IndependentSet}
\{x^{\mu}: \mu\in[d_1]\times[d_2]\times\dots\times[d_r]\}
\end{equation}
are linearly independent over $k$  and their linear span over $k$ equals $F$. 
By iterative application of Lemma~\ref{Lemma:XExt} it follows that 
\begin{align*}
\tilde{C}=\,&\Const_{\sigma}k[x_1,\dots,x_r]=(\Const_{\sigma}k[x_1,\dots,x_{r-1}])[x_r]\\
&=\dots=(\Const_{\sigma}k)[x_1,\dots,x_r]=C[x_1,\dots,x_r].
\end{align*}
Thus we can write $g=\sum_{\mu\in U}g_{\mu}x^{\mu}$ and $c_i=\sum_{\mu\in U}c_{i,\mu}x^{\mu}$ with $1\leq i\leq m$ in multi-index notation for a finite set $U\subset[d_1]\times[d_2]\times\dots\times[d_r]$ with $g_{\mu}\in k$ and $c_{i,\mu}\in C$ for $\mu\in U$. In particular, we get
\begin{align*}
\sum_{\mu\in U}x^{\mu}(c_{1,\mu}b_1+\dots+c_{m,\mu}b_m)&=
c_1\,b_1+\dots+c_m\,b_m\\
&=\sum_{i=0}^{\ell}a_i\sigma^i(g)=\sum_{\mu\in U}x^{\mu}\sum_{i=0}^{\ell}a_i\sigma^i(g_{\mu}).
\end{align*}
Since the elements in~\eqref{Equ:IndependentSet} are linearly independent over $k$, it follows for all $\mu\in U$ that
$$c_{1,\mu}b_1+\dots+c_{m,\mu}b_m=\sum_{i=0}^{\ell}a_i\sigma^i(g_{\mu})$$
and thus $h_{\mu}=(g_{\mu},c_{1,\mu},\dots,c_{m,\mu})\in{\mathcal V}(a,b,k)$. Since $f_1,\dots,f_n$ is a basis of ${\mathcal V}(a,b,k)$, we get $h_{\mu}=d_{1,\mu}f_1+\dots+d_{n,\mu}f_n$ for some $d_{i,j}\in C$. Consequently,
\begin{align*}
(g,c_1,\dots,c_m)=&\sum_{\mu\in U}x^{\mu}(g_{\mu},c_{1,\mu},\dots,c_{m,\mu})=\sum_{\mu\in U}x^{\mu}h_{\mu}\\
&=\sum_{\mu\in U}x^{\mu}(d_{1,\mu}f_1+\dots+d_{n,\mu}f_n)=\kappa_1f_1+\dots\kappa_n f_n
\end{align*}
with $\kappa_i=\sum_{\mu\in U}x^{\mu}d_{i,\mu}\in C[x_1,\dots,x_r]\subseteq C(S)$ for $1\leq i\leq n$. Consequently $(g,c_1,\dots,c_m)$ is an element of the right-hand side in~\eqref{Equ:SolSpaceId} which proves~\eqref{Equ:SolSpaceId}. Now suppose that $f_1,\dots,f_n$ are linearly dependent over $C(S)$. Then one can take $\kappa_1,\dots,\kappa_n\in C[S]$, not all zero, with $\kappa_1\,f_1+\dots+\kappa_n\,f_n=0$. 
In addition, we can take a generator set $S_0=\{x_1,\dots,x_r\}\subseteq S$ with minimal polynomials as given above such that $\kappa_1,\dots,\kappa_n\in k[S_0]$.
Take any $\mu\in[d_1]\times[d_2]\times\dots\times[d_r]$ where $x^{\mu}$ arises in one of the $\kappa_1,\dots,\kappa_n$. By coefficient comparison w.r.t.\ $x^{\mu}$ and using the fact that the elements in~\eqref{Equ:IndependentSet} are linearly independent over $k$,
we get $c_1\,f_1+\dots+c_n\,f_n=0$ for some $c_1,\dots,c_n\in C$, not all zero. This contradicts the assumption that $f_1,\dots,f_n$ are linearly independent over $C$, and we conclude that $f_1,\dots,f_n$ form a basis of ${\mathcal V}(a,b,k(S))$. 
\end{proof}

Based on the above considerations, our algorithmic toolbox can be applied with two complementary scenarios. 

\vspace*{-0.3cm}

\begin{enumerate}
\item One can choose a \pisiSE-field $F=C(t_1)\dots(t_e)$ over the smallest possible constant field $C=A(y_1,\dots,y_{\rho})$ in which the linear difference equation can be formulated and computes all hypergeometric solutions in this difference field $(F,\sigma)$. This feature is activated in \texttt{Sigma} with the option \MText{WithAlgebraicNumbers$\to$False}.
\item One finds all hypergeometric solutions in the \pisiSE-field $F'=C'(t_1)\dots(t_e)$ with $C'=\bar{\set Q}(y_1,\dots,y_{\rho})$. More precisely, one takes the smallest (finite) field extension $\tilde{A}$ of $A$ such that the sets of hypergeometric candidates in $F'$ and in $\tilde{F}=\tilde{C}(t_1)\dots(t_{e})$ with $\tilde{C}=\tilde{A}(y_1)\dots(y_{\rho})$ are the same. Given this set of hypergeometric candidates one constructs all hypergeometric solutions in $\tilde{F}$ by solving for each candidate a linear difference equation in $\tilde{F}$. For the purpose of efficiency, one does not have to choose a common constant field $\tilde{A}$ in which all the arising difference equations are expressible, but for each linear difference equation one can choose the smallest constant field in which it is expressible. Namely, by Proposition~\ref{Prop:AlgebraicPLDE} the basis obtained in this optimal constant field will provide also a basis in the \pisiSE-field $(\tilde{F},\sigma)$ over $\tilde{C}$ and even more in the \pisiSE-field $(F',\sigma)$ over $C'$. As a consequence, one obtains all hypergeometric solutions in $(F',\sigma)$. With the option \MText{WithAlgebraicNumbers$\to$True} this feature is activated. Unfortunately, the available number field operations of Mathematica are not optimally implemented.
As a consequence, the routines of \texttt{Sigma}, that rely on these Mathematica operations, might not be executed adequately.
\end{enumerate}

In various applications one is interested in hypergeometric solutions in a \pisiSE-field
$B(y_1)\dots(y_{\rho})(t_1)\dots(t_e)$ where the algebraic number field $B$ is in between the two extreme cases (1) or (2) from above. Within the summation package \texttt{Sigma} the construction of $B$ can be controlled when one determines the hypergeometric candidates in $k_0=A(y_1,\dots,y_{\lambda})$. For instance, with the option \MText{WithAlgebraicNumbers$\to$size} one can specify that the root solutions do not require more memory than the byte value \texttt{size}. Similarly, one can control with the option  
\MText{DegreeInProducts$\to$d} the maximal degree \texttt{d} of the irreducible factors within the hypergeometric candidates. For instance with the option \MText{DegreeInProducts$\to$2} one excludes objects like $\prod_{i=1}^m(1 + i + i^2 + 2 i^3)$ (provided that the underlying algebraic number field is, e.g., $\set Q$).
Furthermore, the option \MText{IntegerSizeInProducts$\to$s} restricts to candidates where the irreducible factors have integer coefficients whose absolute values are not larger than \texttt{s}. In this way one can avoid for instance factorials like $(100 m)!$ if one sets the option \MText{IntegerSizeInProducts$\to$99}.

\section{Conclusion}\label{Sec.Conclusion}

Inspired and guided by the differential case~\cite{Singer:91} we elaborated a general framework to compute 
\begin{itemize}
	\item hypergeometric solutions of homogeneous linear difference equations, and
	\item rational solutions of parameterized linear difference equations 
\end{itemize}
in a difference field $k=K(t_1)\dots(t_e)$ built by a tower of \pisiSE-monomials over a ground difference field $K$ that is $\sigma$-computable (see Definition~\ref{Def:computable}). As a consequence, we obtain a complete algorithm of the above problems if $k$ is a \pisiSE-field where the constant field $K$ fulfills (algorithmic) properties given in Theorem~\ref{Thm:GroundField}. This is in particular the case if $K$ is a rational function field defined over an algebraic number field. These algorithms (or some mild variations) for such \pisiSE-fields are implemented within the summation package \texttt{Sigma}.

We emphasize that this framework gives room for further extensions. Whenever a difference field $(K,\sigma)$ can be equipped with the algorithmic properties in Definition~\ref{Def:computable}, one can activate our machinery to solve the above problems also in \pisiSE-monomials defined over $K$. This might be possible, e.g., for the free difference field that represents generic sequences~\cite{KS:06} or radical difference field extensions~\cite{KS:07} that model objects like $\sqrt{\nu}$.

We remark further that these new algorithms are not only interesting for recurrence solving but also for recurrence finding. For instance, finding solutions of parameterized linear recurrences is the backbone of the holonomic summation toolbox~\cite{Zeilberger:90a,Chyzak:00,Koutschan:13} to tackle sums whose summands are described by systems of linear difference equations. In the latter approaches only the special cases $k=C(x)$ or the $q$-rational are considered so far. However, in ~\cite{Schneider:05c} and recent refinements~\cite{DRHolonomic} this toolbox has been considered also in the \pisiSE-field setting. In the light of the new complete machinery for \pisiSE-fields, it will be interesting to see how these achievements can be applied to new classes of summation problems.

\section*{Acknowledgements}
S.A.A., M.P., and C.S.\ express their gratitude to Manuel Bronstein's widow Karola Bronstein for her support and encouragement to publish this article. In addition, we are grateful to the three referees for their very thorough reading and highly valuable suggestions which helped us improve the presentation.

Funding: This work was supported by the Russian Foundation for Basic Research (project No.\   19 01 00032), the Slovenian Research Agency (research core funding No.\ P1-0294), and the Austrian Science Foundation (FWF grant F5009-N15 in the framework of the Special Research Program ``Algorithmic and Enumerative Combinatorics''). 



\newcommand{\etalchar}[1]{$^{#1}$}
\providecommand{\bysame}{\leavevmode\hbox to3em{\hrulefill}\thinspace}
\providecommand{\MR}{\relax\ifhmode\unskip\space\fi MR }
\providecommand{\MRhref}[2]{%
	\href{http://www.ams.org/mathscinet-getitem?mr=#1}{#2}
}
\providecommand{\href}[2]{#2}

\end{document}